\newmdtheoremenv{theo}{Theorem}
\newmdtheoremenv{idef}{Definition}
\pgfplotsset{compat = newest}
\tikzset{
    double color fill/.code 2 args={
        \pgfdeclareverticalshading[%
        tikz@axis@top,tikz@axis@middle,tikz@axis@bottom%
        ]{diagonalfill}{100bp}{%
            color(0bp)=(tikz@axis@bottom);
            color(50bp)=(tikz@axis@bottom);
            color(50bp)=(tikz@axis@middle);
            color(50bp)=(tikz@axis@top);
            color(100bp)=(tikz@axis@top)
        }
        \tikzset{shade, left color=#1, right color=#2, shading=diagonalfill}
    }
}
\newcommand{\tcolA}{yellow!50}
\newcommand{\tcolB}{blue!40}
\newcommand{\tcolBName}{blue}
\tikzstyle{player}=[state,draw,rounded rectangle,align=center]
\tikzstyle{widget}=[draw=red,rectangle, rounded rectangle=10pt,dashed,minimum size=6mm,fill=yellow]
\tikzset{every loop/.style={looseness=7}}
\tikzstyle{player1}=[state,draw,rounded rectangle,align=center]
\tikzstyle{player2}=[state,draw,rectangle,align=center]
\tikzstyle{halfplayer}=[state,semicircle,align=center,anchor=south,rotate=90,scale=0.5,inner sep=0pt, outer sep=0pt]
\tikzstyle{subgraph}=[rectangle,draw,dashed,minimum width=180,minimum height=100,font=\sffamily\Large\bfseries]
\tikzstyle{subgraph2}=[rectangle,draw,dashed,minimum width=95,minimum height=90,font=\sffamily\Large\bfseries]
\newcommand{\MDPtuple}{(\states,\actions,\tprob)}
\newcommand{\tprob}{\ensuremath{\mathbf{P}}} 
\newcommand{\obj}{\ensuremath{\Omega}}
\newcommand{\target}{F}
\newcommand{\formula}{\varphi}
\newcommand{\threshobj}[3]{\mathtt{Pr}_{#1#2}(#3)}
\newcommand{\atomconstr}{atom}
\newcommand{\pareto}[3]{\mathit{Pareto}(#1,#2,#3)}
\newcommand{\achievable}[3]{\mathit{Ach}(#1,#2,#3)}
\newcommand{\closure}[1]{\overline{#1}}
\newcommand{\interior}[1]{\operatorname{int} #1}
\newcommand{\generator}[1]{\operatorname{gen} #1}
\newcommand{\boundary}[1]{\partial #1}
\newcommand{\projSinkState}{\bot}
\newcommand*{\vect}[1]{\mathbf{#1}}
\DeclareMathOperator{\parobj}{\rho}
\newcommand{\univ}{\ensuremath{\mathtt{S}}}
\newcommand{\event}{\ensuremath{{\sf \lozenge}}}
\newcommand*{\satisfy}[3]{#1,#2\models_{#3}}
\newcommand*{\satisfies}[2]{#1\models_{#2}}
\newcommand*{\nsatisfies}[2]{#1\nmodels_{#2}}
\newcommand*{\nmodels}{\ensuremath{\nvDash}}
\newcommand*{\limwedgeone}[1]{{\bigwedge}_{#1}}
\newcommand{\pathname}{\ensuremath{\pi}}
\newcommand{\states}{\ensuremath{S} }
\newcommand{\markovChain}{\ensuremath{\mathcal{M}} }
\newcommand{\MDP}{\ensuremath{\Gamma}}
\newcommand{\MDPtoMC}[2]{\ensuremath{{#1}{[{#2}]}}}
\newcommand{\paths}{\ensuremath{\mathsf{Paths}}}
\newcommand{\nat}{\ensuremath{\mathbb{N}} }
\newcommand{\strat}{\ensuremath{\sigma}}
\newcommand{\NPinter}{\ensuremath{\sf{NP} \cap \sf{coNP}}}
\newcommand{\UPinter}{\ensuremath{\sf{UP} \cap \sf{coUP}}}
\newcommand{\PTIME}{\ensuremath{\sf{PTIME}}}
\newcommand{\EXPTIME}{\ensuremath{\sf{EXPTIME}}}
\newcommand{\PSPACE}{\ensuremath{\sf{PSPACE}}}
\newcommand{\NPTIME}{\ensuremath{\sf{NP}}}
\newcommand{\dists}{\ensuremath{\mathcal{D}} }
\newcommand{\prob}{\ensuremath{\mathsf{Pr}} }
\newcommand{\yes}{\ensuremath{\textsc{Yes}}}
\newcommand{\poly}{\ensuremath{\mathsf{poly}} }
\newcommand{\supp}{\ensuremath{\mathsf{supp}} }
\newcommand{\actions}{\ensuremath{Act}}
\newcommand{\stam}[1]{}
\tikzset{circle split part fill/.style  args={#1,#2}{%
 alias=tmp@name, 
  postaction={%
    insert path={
     \pgfextra{%
     \pgfpointdiff{\pgfpointanchor{\pgf@node@name}{center}}%
                  {\pgfpointanchor{\pgf@node@name}{east}}%
     \pgfmathsetmacro\insiderad{\pgf@x}
      \fill[#1] (\pgf@node@name.base) ([xshift=-\pgflinewidth]\pgf@node@name.east) arc
                          (0:180:\insiderad-\pgflinewidth)--cycle;
      \fill[#2] (\pgf@node@name.base) ([xshift=\pgflinewidth]\pgf@node@name.west)  arc
                           (180:360:\insiderad-\pgflinewidth)--cycle;            
         }}}}}  
\colorlet{drouge}{red}
\colorlet{frouge}{red!20!white}
\colorlet{dbleu}{blue}
\colorlet{fbleu}{blue!40!white}
\colorlet{dviolet}{blue!50!red}
\colorlet{fviolet}{blue!50!red!40!white}
\colorlet{dvert}{green!80!black}
\colorlet{fvert}{green!80!black!20!white}
\colorlet{djaune}{yellow!80!black}
\colorlet{fjaune}{yellow!80!black!20!white}
\colorlet{dgris}{white!60!black}
\colorlet{fgris}{white!90!black}
\colorlet{dgrisf}{white!30!black}
\colorlet{fgrisf}{white!70!black}
\colorlet{dorange}{red!50!yellow}
\colorlet{forange}{red!50!yellow!30!white}
\tikzstyle{ptrond}=[draw,circle,minimum height=2mm]
\tikzstyle{ptcarre}=[draw,minimum width=3mm,minimum height=3mm]
\tikzstyle{moyrond}=[draw,circle,minimum height=5mm]
\tikzstyle{moycarre}=[draw,minimum width=4mm,minimum height=4mm]
\tikzstyle{rond}=[draw,circle,minimum height=7mm]
\tikzstyle{carre}=[draw,minimum width=6mm,minimum height=6mm]
\tikzstyle{rouge}=[draw=drouge,fill=frouge]
\tikzstyle{vert}=[draw=dvert,fill=fvert]
\tikzstyle{jaune}=[draw=djaune,fill=fjaune]
\tikzstyle{bleu}=[draw=dbleu,fill=fbleu]
\tikzstyle{violet}=[draw=dviolet,fill=fviolet]
\tikzstyle{orange}=[draw=dorange,fill=forange]
\tikzstyle{gris}=[draw=dgris,fill=fgris]
\tikzstyle{grisf}=[draw=dgrisf,fill=fgrisf]
\tikzstyle{rvert}=[style=rond,style=vert]
\tikzstyle{rrouge}=[style=rond,style=rouge]
\tikzstyle{roundrect}=[draw,rounded rectangle, minimum width=6mm,minimum height=6mm]
\tikzstyle{splitrond}=[draw,circle split,minimum height=7mm,circle split part fill={blue!50,red!50}]
\tikzstyle{splitrondbv}=[draw,circle split,minimum height=7mm,circle split part fill={fbleu,fvert}]
\tikzstyle{splitrondrg}=[draw,circle split,minimum height=7mm,circle split part fill={frouge,fgris}]
\tikzstyle{splitrondbo}=[draw,circle split,minimum height=7mm,circle split part fill={fbleu,forange}]
\def\listof#1{\expandafter\@listof#1+\@end}
\def\@listof#1+#2\@end{\def\@tempa{#1}\ifx\@tempa\@empty\else 
    \langle #1\rangle \fi
  \def\@tempa{#2}\ifx\@tempa\@empty\else,\@listof#2\@end\fi}
\def\stackof#1{\begin{array}{@{}>{\scriptstyle}c@{}}\expandafter\@stackof#1+\@end}
\def\@stackof#1+#2\@end{\def\@tempa{#1}\ifx\@tempa\@empty\else 
    \langle #1\rangle \fi
  \def\@tempa{#2}\ifx\@tempa\@empty\end{array}\else\\[-1mm]\@stackof#2\@end\fi}
\tikzset{cross/.style={cross out, draw, 
        minimum size=2*(#1-\pgflinewidth), 
        inner sep=0pt, outer sep=0pt}}
\renewcommand{\mypar}[1]{\subsubsection{#1}}
\begin{document}
\title{Markov Decision Processes with Sure Parity and Multiple Reachability Objectives}
\titlerunning{MDPs with Sure Parity and Multiple Reachability Objectives}
%
\author{Rapha\"el {Berthon}\thanks{Supported by the DFG Grant Nr. 520530521 ``POMPOM''} \and Joost-Pieter {Katoen}$^{\star,\star\star}$\and Tobias {Winkler}\thanks{Supported by the DFG RTG 2236 ``UnRAVeL''}}
\authorrunning{R. Berthon et al.}
%
\institute{RWTH Aachen University, 52062 Aachen, Germany}
%
\maketitle              
\begin{abstract}
This paper considers the problem of finding strategies that satisfy a 
mixture of sure and threshold objectives in Markov decision processes. 
We focus on a single $\omega$-regular objective expressed as parity that 
must be surely met while satisfying $n$ reachability objectives towards sink states 
with some probability thresholds too. 
We consider three variants of the problem: (a) strict and (b) non-strict thresholds on all reachability objectives, and (c) maximizing 
the thresholds with respect to a lexicographic order. 
We show that (a) and (c) can be reduced to solving parity games, and (b) can be solved in \EXPTIME.
Strategy complexities as well as algorithms are provided for all cases.
\keywords{MDPs \and Parity \and Reachability \and Multi-objective}
\end{abstract}
\section{Introduction}
\label{sec:intro}

\emph{Markov decision processes} (MDPs)~\cite{bellman1957markovian,DBLP:books/wi/Puterman94} are prominent models for strategic planning and decision making in face of stochastic uncertainty.
An important, yet intricate, problem is to determine if and how a combination of \emph{multiple} properties, or \emph{objectives}, is realizable in a given MDP.
As objectives may be \emph{conflicting}, it does not suffice to analyze each of them independently~\cite{DBLP:conf/stacs/ChatterjeeMH06,DBLP:journals/jair/RoijersVWD13,DBLP:conf/csl/BaierDK14}. 
Instead, \emph{trade-offs} between the objectives have to be taken into account.
In this paper, we combine objectives of different nature:
\emph{Sure objectives} must be fulfilled on \emph{all} possible executions of the MDP, even on those with probability 0.
Thus, sure objectives do not depend on the exact transition probabilities; in fact, they can be analyzed by replacing the probabilities with an adversary. 
\emph{Threshold objectives}, on the other hand,
have to be satisfied with some probability of at least (or greater than) a given constant. 
Various combinations of sure and threshold objectives have been investigated in prior work~\cite{clemente2015multidimensional,almagor2016minimizing,DBLP:journals/iandc/BruyereFRR17,icalp2017,DBLP:conf/concur/ChatterjeeP19,DBLP:conf/lics/BerthonGR20}.
Here, we focus on MDPs with a single sure $\omega$-regular objective expressed as a \emph{parity} condition together with $n$ \emph{reachability} threshold objectives towards sink states.

\mypar{Running example}
In a game show a contestant plays a gamble to win either a bike, a surfboard, or both prizes.
The gamble is as follows:
The contestant must choose one out of two pairs of 6-sided dice.
Each pair consists of a \textcolor{green!50!black}{green} and a \textcolor{red}{red} die.
The four dice are all different; each of their faces shows either the bike, the board, both bike and board, or the symbol $\circlearrowright$ (``repeat''):
\begin{align*}
    pair_1\qquad & \text{\textcolor{red}{\emph{red}}} \colon 3 \,\times\! \circlearrowright, 1 \times \text{both}, 2 \times \text{bike} && \text{\textcolor{green!50!black}{\emph{green}}} \colon 6 \times \text{board} \\
    pair_2\qquad & \text{\textcolor{red}{\emph{red}}} \colon 3 \,\times\! \circlearrowright, 1 \times \text{both}, 2 \times \text{board} && \text{\textcolor{green!50!black}{\emph{green}}} \colon 2 \times \text{both} ,  2 \times \text{bike} , 2 \times \text{board}
\end{align*}
After committing to a pair of dice, the contestant rolls one die from their pair.
The \textcolor{green!50!black}{green} die immediately ends the gamble with the resulting prize(s).
The \textcolor{red}{red} die either ends the gamble or, in case of $\circlearrowright$, allows the contestant to roll again (the same die or the other one).
However, since the show is broadcast on live TV, there is an additional rule: 
The gamble may not be prolonged indefinitely, i.e., the contestant may try the \textcolor{red}{red} die at most an arbitrary, but \emph{a priori} fixed number of times.
Clearly, optimal strategies depend on how much the contestant prefers one prize over the other.
The MDP in \Cref{fig:mdp-pareto} models this gamble.
Prizes are encoded as reachability of sinks ($\target_1~\hat{=}~\text{bike}, \target_2~\hat{=}~\text{board}$); the additional constraint is a sure parity condition.

\begin{figure}[t]
    \centering
    \adjustbox{max width=\textwidth}{
        \begin{tikzpicture}[on grid, node distance=17.5mm and 20mm]
        \node[player,initial, initial left,initial text=] (s) at (0,0) {$1$} ;
        \node[player,above=of s] (s1) {$1$} ;
        \node[player,below=of s] (s2) {$1$} ;
        \node[player,accepting,right=of s,fill=\tcolA] (r1) {$F_1$} ;
        \node[player,accepting,right=of r1,fill=\tcolB] (r2) {$F_2$} ;
        \node[player,accepting,left=of s,double color fill={\tcolA}{\tcolB}] (r12)  {$F_1\& F_2$} ;
        
        \path[-latex]  (s) edge node[left] { $pair_1,1$} (s1)
        (s) edge node[left] { $pair_2,1$} (s2)
        
        (s1) edge[bend left=20] node[above right, pos=0.7] { $\textcolor{red}{a}, \nicefrac 1 3$} (r1)
        (s1) edge[bend right=20] node[above left] { $\textcolor{red}{a}, \nicefrac 1 6$} (r12)
        (s1) edge[bend left=25] node[above right] { $\textcolor{green!50!black}{b},1$} (r2)
        (s2) edge[bend right=25] node[below right, align=left] { $\textcolor{red}{a}, \nicefrac 1 3$ \\ $\textcolor{green!50!black}{b}, \nicefrac 1 3$} (r2)
        (s2) edge[bend left=20] node[below left, align=left] { $\textcolor{red}{a}, \nicefrac 1 6$ \\ $\textcolor{green!50!black}{b}, \nicefrac 1 3$} (r12)
        (s2) edge[bend right=20] node[below right, pos=0.7] { $\textcolor{green!50!black}{b}, \nicefrac 1 3$} (r1)
        (s1) edge [loop above] node[left, pos = 0.2] { $\textcolor{red}{a}, \nicefrac 1 2$} (s1)
        (s2) edge [loop below] node[left, pos=0.8] { $\textcolor{red}{a}, \nicefrac 1 2$} (s2)
        (r12) edge [loop left] node[above, pos=0.8] {$*,1$} (r12)
        (r1) edge [loop right] node[above, pos=0.2] { $*,1$} (r2)
        (r2) edge [loop right] node[above, pos=0.2] { $*,1$} (r2)
        ;
        \end{tikzpicture}
        \hspace{2em}
        \begin{tikzpicture}[scale=1.0]
            \begin{axis}[
                clip=false,
                axis lines=center,
                xmin = -0.025, xmax = 1.1,
                ymin = -0.025, ymax = 1.1,
                x=40mm, y=40mm,
                xtick={0,0.2,...,1}, ytick={0,0.2,...,1},
                xlabel={$\prob(\event \tikz[baseline=-0.7ex]{ \node[state,accepting,scale=0.33,fill=\tcolA] {} })$}, xlabel style={right},
                ylabel={$\prob(\event \tikz[baseline=-0.7ex]{ \node[state,accepting,scale=0.33,fill=\tcolB] {} })$}, ylabel style={left},
            ]
            \addplot[color=black] coordinates { (1,0.33) (0.33,1)};
            \addplot[color=black] coordinates { (0,1) (0.33,1)};
            \addplot[color=black] coordinates { (1,0.33) (1,0)};
            \node[label={45:{}},circle,fill,inner sep=2pt] at (axis cs:0,1) {};
            \node[label={45:{$(1,\nicefrac 1 3)$}},circle,fill,inner sep=2pt] at (axis cs:1,0.33) {};
            \node[label={45:{$(\nicefrac 2 3,\nicefrac 2 3)$}},circle,fill,inner sep=2pt] at (axis cs:0.67,0.67) {};
            \node[label={45:{$(\nicefrac 1 3,1)$}},circle,fill,inner sep=2pt] at (axis cs:0.33,1) {};
            \node[label={0:{$(\nicefrac 1 2,\nicefrac 16)$}},cross=2pt] at (axis cs:0.5,0.167) {};
            \end{axis}
        \end{tikzpicture}
    }
    \caption{
        An MDP with $\actions = \{\textcolor{red}{a}, \textcolor{green!50!black}{b}, pair_1, pair_2, *\}$, target sets \colorbox{\tcolA}{$\target_1$}, \colorbox{\tcolB}{$\target_2$}, and parity condition $\parobj$ assigning 0 to the sinks and 1 to the non-sinks.
        \emph{Right:} The Pareto frontier.
    }
    \label{fig:mdp-pareto}
\vspace{-15pt}
\end{figure}
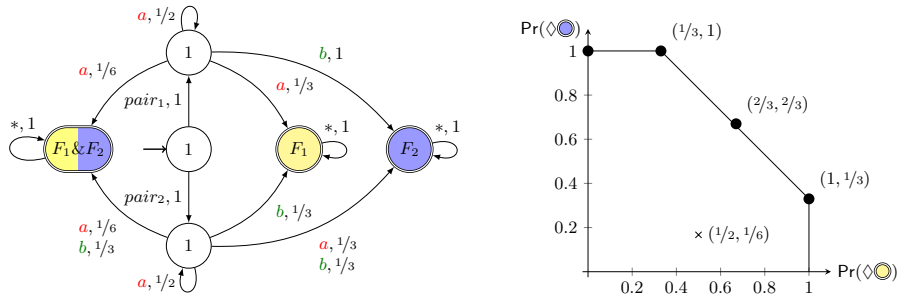

\mypar{This paper}
We study the following three problems:
Given a finite MDP, decide if it is possible to satisfy a sure parity objective and, at the same time, $n$ sink reachability objectives with (a) all \emph{strict}, or (b) all \emph{non-strict} probability thresholds.
In addition, we consider the problem (c) of checking existence of a
\emph{lexicographically} maximal achievable threshold vector w.r.t.\ a given linear order on the reachability objectives.
In all cases, we are also interested in computing witnessing strategies if they exist.
These problems are challenging, both computationally and conceptually.
Already for two reachability objectives (without any sure objective) the set of achievable thresholds ---the \emph{Pareto frontier}--- is a convex polytope with superpolynomially many vertices in general~\cite{etessami2007multi}.
The problems we study are more general than this and add further subtleties:
While it is easy to show that the thresholds in the \emph{interior} of the Pareto frontier are satisfiable with \emph{any} sure parity objective~(\Cref{sec:strict}),
identifying exactly which points on the \emph{boundary} of the frontier are achievable is quite involved (\Cref{sec:nonstrict}).

\begin{table}[t]
\caption{Some existing and our new (in bold) results on multi-objective MDPs.}
\label{tab:comp}
\centering
\resizebox{\textwidth}{!}{%
\setlength{\tabcolsep}{3pt}
\begin{tabular}{l l l l l}
    \toprule
    \textit{Sure objective} & \textit{Probabilistic objective(s)} & \textit{Complexity} & \textit{Memory} & \textit{Reference} \\
    \midrule
    1 sure parity &  -  & $\NPinter$ & finite & \cite{DBLP:journals/tcs/EmersonJS01} \\
    1 sure parity & 1 threshold parity  & $\NPinter$ & infinite & \cite{icalp2017} \\
    1 sure parity & $n$ almost-sure parity & $\NPinter$ & infinite & \cite{DBLP:conf/lics/BerthonGR20} \\
    $n$ sure parity & $n$ almost-sure parity & ${\sf P}^{\sf NP}(=\Delta_2^{\sf P})$ & infinite & \cite{DBLP:conf/lics/BerthonGR20} \\
    1 sure parity & 1 threshold reach & $\NPinter$ & finite & \cite{icalp2017} \\ 
    - & $n$ threshold sink reach & $\PTIME$ & finite & \cite{etessami2007multi} \\
    \textbf{1 sure parity} & \textbf{$n$ strict threshold sink reach} (a) & $\NPinter$ & \textbf{finite} & \textbf{[Thm.\ \ref{thm:approx-int}]} \\
    \textbf{1 sure parity} & \textbf{$n$ threshold  sink reach} (c) & $\EXPTIME$ & \textbf{finite} & \textbf{[Thm.\ \ref{thm:mult-th}]} \\
    - & lexicographic Streett & $\PTIME$ & finite & \cite{chatterjee2023stochastic} \\
    \textbf{1 sure parity} & \textbf{lexicographic  sink reach} (b)             & 
    $\NPinter$ & \textbf{finite} & \textbf{[Thm.\ \ref{thm:mth}]} \\
    \bottomrule
\end{tabular}%
}
\end{table}

\mypar{Contributions} 
Our three main results are summarized in bold in Table~\ref{tab:comp}.
    (a)
    Checking if a sure parity objective and $n$ \emph{strict} sink reachability thresholds are achievable simultaneously is in $\NPinter$ (\Cref{sec:strict}).
    This is done via a reduction to parity games, admitting a quasi-polynomial algorithm~\cite{DBLP:conf/stoc/CaludeJKL017}. 
    
    (c) We propose an algorithm that finds a strategy ensuring a sure parity objective while also maximizing the probability of reaching $n$ sinks w.r.t.\ a \emph{lexicographic} order (\Cref{sec:lex}).
    It relies on a concept we call \emph{projection}, a notion also used in prior work~\cite{etessami2007multi,DBLP:conf/csl/BaierDK14,chatterjee2023stochastic}.
    Our algorithm solves polynomially many parity games in sequence, hence the problem is (again) in $\NPinter$.
    
    (b) We present an algorithm for finding a strategy satisfying a sure parity objective and $n$ sink reachability objectives with \emph{non-strict} thresholds (\Cref{sec:multi}).
    Our algorithm alternates between computing Pareto frontiers, making projections, and pruning states not satisfying the sure parity objective; to our knowledge, this idea is new. 
    Its time complexity is exponential in the size of the MDP, as it relies on computing exact Pareto frontiers.

We also treat strategy complexity for each case.
Our results are a further step towards a solution for general combinations of sure and probabilistic objectives.

\mypar{Related work}
Previous research~\cite{DBLP:conf/lics/BerthonGR20} on mixtures of sure and probabilistic objectives focused on \emph{qualitative} thresholds, i.e., ${>}0$ and ${=}1$. 
Here, we also allow \emph{quantitative} thresholds strictly between $0$ and $1$.
We rely on some results of~\cite{icalp2017} that studied combining one sure parity and \emph{one threshold parity} objective.
This problem was shown to be in $\NPinter$, via a reduction to parity games with weights that can be solved in quasi-polynomial time~\cite{DBLP:journals/lmcs/ScheweWZ19}. 
The main difference to~\cite{icalp2017} is that we consider \emph{multiple reachability} threshold objectives.
The setting of one sure parity and one almost-sure parity has been further studied in~\cite{DBLP:conf/concur/ChatterjeeP19} where it was shown that the restriction to finite memory strategies is still in $\NPinter$ for MDPs, and co$\NPTIME$-complete for stochastic games.

The seminal paper~\cite{etessami2007multi} shows that computing the Pareto frontier for a mixture of either reachability or $\omega$-regular objectives can be reduced to solving linear programs.
An efficient technique, value iteration, is exploited by tools such as PRISM~\cite{DBLP:conf/cav/KwiatkowskaNP11}, Storm~\cite{DBLP:conf/cav/DehnertJK017}, and MultiGain~\cite{DBLP:conf/tacas/BrazdilCFK15}.
The work~\cite{chatterjee2015unifying} considers MDPs with two different kinds of stochastic mean-payoff objectives, and supports computing the Pareto frontier.
\emph{Percentile queries}, multiple threshold constraints that must each be satisfied with some probability, were studied in~\cite{percentile2017}.
In~\cite{bouyer2018multi,hartmanns2020multi}, multiple reachability conditions associated to the expected or accumulated cost to reach a target are considered.

\emph{Lexicographic optimization} is a widely employed principle in multi-objective decision making~\cite{DBLP:conf/socs/MiuraWZ22,DBLP:conf/aaai/WrayZM15}. 
The idea is that a strategy should prioritize a primary objective while still doing best possible for a secondary objective, etc.
The work of \cite{chatterjee2023stochastic} imposes a lexicographic order on multiple, possibly conflicting, reachability, safety and $\omega$-regular objectives.
Reinforcement learning with lexicographic $\omega$-regular conditions is studied in~\cite{DBLP:conf/tacas/HahnPSS0W20}.
To the best of our knowledge, lexicographic optimization in MDPs together with a sure condition has not been studied yet.

Other approaches have been considered. Combinations of parity and mean-payoff~\cite{almagor2016minimizing}, and parity and weighted games~\cite{DBLP:journals/lmcs/ScheweWZ19} have been studied in prior work.
An alternative way to combining objectives is \emph{strategy logic} (SL)~\cite{chatterjee2010strategy}, an extension of CTL that can express formulas involving the change of strategies. 
A probabilistic SL has been defined in~\cite{aminof2019probabilistic}. 

\section{Preliminaries}
\label{sec:prelims}

We write $\mathbb{N} = \{1,2,3,\ldots\}$ and $\mathbb{N}_0 = \mathbb{N} \cup \{0\}$.
For $n \in \mathbb{N}$, we let $[n] = \{1,\ldots,n\}$, and $[n]_0 = [n] \cup \{0\}$.
Vectors $\vect{v} \in \mathbb{R}^n$ are written in bold.
For $\vect{v}\in [0,1]^n$ and $i\in [n]$, we denote by $v_i$ the $i$-th component of $\vect{v}$.
Given $\vect{v}, \vect u \in \mathbb{R}^n$, their \emph{dot product} is defined as $\vect v \cdot \vect u = \sum_{i \in [n]} v_i \cdot u_i$.
The symbol $\vect{e}_i \in \{0,1\}^n$ is the unit vector where the $i$-th component is $1$ and all others are $0$.
The componentwise order on $\mathbb{R}^n$ is denoted with $\leq$.
Given a finite set $A$, a \textit{(probability) distribution} on $A$ is a function $f \colon A \to [0,1]$ such that $\sum_{a\in A} f(a) = 1$. 
$\dists(A)$ denotes the set of all distributions on $A$.
We define the \emph{support} $\supp(f) = \{a \in A \mid f(a) > 0 \}$. 

\subsection{MDPs, Strategies, and Objectives}

A \emph{Markov decision process} (MDP) is a tuple $\MDP = \MDPtuple$ where $\states \neq \emptyset$ is a countable set of \emph{states}, $\actions \neq \emptyset$ is a finite set of \emph{actions}, and $\tprob \colon \states \times \actions \times \states \to [0,1]$ is a \emph{transition probability function} satisfying $\sum_{s' \in \states} \tprob(s,a,s') \in \{0,1\}$ for all $s \in \states, a \in \actions$ .
If the sum is $1$ for a state-action pair $s,a$, then $a$ is \emph{enabled} at $s$.
We write $\actions(s)$ for the set of all actions enabled at $s$, and require that $\actions(s) \neq \emptyset$.
An MDP $\MDP$ is called \emph{finite} if $\states$ is finite.
A state $s \in \states$ is called a \emph{sink} if for all $a \in \actions(s)$ we have $\tprob(s,a,s) = 1$.
For technical convenience we assume $|\actions(s)| = 1$ for all sinks $s \in \states$.
Note that we may consider the same MDP with different initial states since the latter is not fixed in our definition.
See \Cref{fig:mdp-pareto} for a finite example MDP.

A (discrete-time) \emph{Markov chain} (MC) is an MDP with $|\actions| = 1$.
We omit $\actions$ from the definition of MC and just write $\markovChain = (\states, \tprob)$.
We also identify $\tprob$ with a function of type $\states \times \states \to [0,1]$. 

A \emph{strategy} for an MDP $\MDP$ is a state machine $\strat = (Q,q_\iota,\delta, o)$ where $Q$ is a countable set of memory modes, $q_\iota \in Q$ is the initial mode, $\delta \colon Q \times \states \to Q$ is a transition function, and $o \colon Q \times \states \to \dists(\actions)$ is an output function with $\supp(o(q,s)) \subseteq \actions(s)$ for all $q \in Q, s \in \states$.
$\strat$ is called \emph{finite-memory} if $|Q| < \infty$, \emph{memoryless} if $|Q| = 1$, and \emph{deterministic} if $|\supp(o(q,s))| = 1$ for all $q \in Q, s \in S$.

Given an MDP $\MDP = \MDPtuple$ with strategy $\strat = (Q,q_\iota,\delta, o)$, we define the \emph{induced MC} $\MDPtoMC{\MDP}{\strat} = (\states \times Q, \tprob^\strat)$ where $\tprob^\strat((s,q), (s',q')) = \tprob(s, o(q,s),s')$ if $\delta(q,s) = q'$, and otherwise $\tprob^\strat((s,q), (s',q')) = 0$.
In the following, we only consider finite MDPs,  but when considering an infinite-memory strategy, the resulting MC is countably infinite.
In the context of algorithms, we always assume that the probabilities in the given MDPs, strategies, and probability thresholds are \emph{rational} numbers encoded as numerator-denominator pairs in binary.

Given an MC $\markovChain = (\states, \tprob)$ with a distinguished initial state $s \in \states$, we consider the $\sigma$-algebra $\mathcal F$ generated by the \emph{cylinder sets} $\{\pathname \states^\omega \mid \pathname \in \states^*\}$ and the associated probability measure $\prob_s^\markovChain \colon \mathcal F \to [0,1]$ which is uniquely defined by requiring that for all $\pathname = s_0\ldots s_k \in S^+$, $k \geq 0$, we have $\prob_s^\markovChain(\pathname \states^\omega) = \prod_{i=0}^{k-1} \tprob(s_i,s_{i+1})$ if $s_0 = s$, and $\prob_s^\markovChain(\pathname \states^\omega) = 0$ if $s_0 \neq s$.
See, e.g.~\cite[Chapter 10]{baier2008principles} for more details.
The sets in $\mathcal F$ are called \emph{measurable}.
Further, we define $\paths^\markovChain_s = \{s_0 s_1 \ldots \in S^\omega \mid s_0 = s \land \forall i \geq 0 \colon \tprob(s_i, s_{i+1}) > 0\}$.

An \emph{objective} for an MDP $\MDP$ is a measurable\footnote{Measurability is actually only important for probabilistic objectives, not for sure objectives. However, all concrete objectives considered in this paper are measurable.} set of paths $\Omega\subseteq\states^\omega$.
A \emph{reachability} objective for $\MDP$ is of the form $\event\target = \{\pathname \in \states^\omega \mid \exists k \geq 0 \colon \pathname(k) \in \target\}$ where $\target \subseteq \states$.
\emph{Bounded reachability} objectives have the form $\event^{\leq B}\target = \{\pathname \in \states^\omega \mid \exists k \in [B]_0 \colon \pathname(k) \in \target\}$, for some $B \in \nat_0$.
A \emph{parity} objective for $\MDP$ is defined via a \emph{priority function} $\parobj \colon \states \to [k]_0$, where $k \in \mathbb{N}_0$. 
For $\pathname \in \states^{\omega}$, let $\inf(\pathname) = \{ s \in \states \mid \forall i\geq 0 \colon  \exists j\geq i \colon \pathname(j) = s\}$ be the set of states visited infinitely often on $\pathname$.
Then the parity objective defined by $\parobj$ is $\{\pathname \in \states^\omega \mid \max \parobj(\inf(\pathname)) \text{ is even}\}$.
In the following, we identify the function $\parobj$ with the objective it defines.

\subsection{Multi-Objectives and Pareto Frontiers}
\label{sec:moprelims}

A \emph{multi-objective (MO) formula} for MDP $\MDP$ is a syntactic object $\formula = \bigwedge_{i =1}^n \atomconstr_i$ with $\atomconstr_i \in \{\univ(\obj), \threshobj{\sim}{p}{\obj} \mid \obj \text{ an objective for } \MDP, \sim \: \in \{>, \geq\}, p \in [0,1] \}$.
An MC $\markovChain$ with a distinguished initial state $s$ satisfies a \emph{threshold} constraint $\threshobj{\sim}{p}{\obj}$ if $\prob^\markovChain_s(\obj) \sim p$, a \emph{sure} constraint $\univ(\obj)$ if $\paths_s^{\markovChain} \subseteq \obj$, and it satisfies the formula $\formula$ (in symbols: $\satisfies{s}{\markovChain} \formula$) if it satisfies $\atomconstr_i$ for all $i \in [n]$.
For an MDP $\MDP$ with strategy $\strat$ we write $\satisfy{s}{\strat}{\MDP} \formula$ if $\satisfies{s}{\MDPtoMC{\MDP}{\strat}} \formula$,
and $\satisfies{s}{\MDP} \formula$ if $\satisfy{s}{\strat}{\MDP} \formula$ for \emph{some} strategy $\strat$ for $\MDP$.
In this paper, we only consider formulas of the form $\univ(\parobj) \land \threshobj{\sim}{p_1}{\event\target_1} \land \ldots \land \threshobj{\sim}{p_n}{\event\target_n}$, i.e., conjunctions of one sure parity and $n \geq 1$ reachability objectives, either all with strict or non-strict thresholds.

We now define Pareto frontiers.
Given an MO formula $\formula$ for MDP $\MDP$ containing $n$ threshold constraints $\threshobj{\sim}{p_1}{\obj_1}, \ldots, \threshobj{\sim}{p_n}{\obj_n}$ (in this order), we write $\formula(p_1,\ldots,p_n) = \formula(\vect p)$ to emphasize the dependency of $\formula$ on the threshold vector $\vect p \in [0,1]^n$.
We also write $\formula(\vect x)$ without further qualifying $\vect x$ to indicate that the thresholds are \emph{variables} $\vect x = (x_1,\ldots,x_n)$.
We define the set of \emph{achievable threshold vectors} as $\achievable{\MDP}{s}{\formula(\vect x)} = \{\vect{p} \in [0,1]^n \mid \satisfies{s}{\MDP}\formula(\vect{p})\} \subseteq [0,1]^n$.
Note that $\achievable{\MDP}{s}{\formula(\vect x)}$ is \emph{downward-closed} as $\sim \: \in \{\geq, >\}$, and \emph{convex} since a convex combination $c \cdot \vect p + (1{-}c) \cdot \vect{p'}$, $c \in (0,1)$, of achievable threshold vectors $\vect p, \vect{p'}$ is achieved by a strategy that plays the strategy for $\vect p$ with probability $c$ and the one for $\vect{p'}$ with probability $1{-}c$, see, e.g.~\cite{etessami2007multi}.
Given a set of vectors $X$, we define $\generator(X)$, the subspace generated by $X$, as the intersection of all subspaces of $\mathbb{R}^n$ containing $X$. It is the smallest subspace containing $X$.
For the next definition recall that the \emph{boundary} $\boundary{X}$ of a set $X \subseteq [0,1]^n$ is defined as $\closure{X} \setminus \interior{X}$, where $\closure{X}$ is the closure of $X$ and $\interior{X}$ is the interior, i.e., the largest open subset of $X$.

\begin{definition}[Pareto frontier]
    \label{def:pareto}
    Let $\MDP = \MDPtuple$ be an MDP, $\formula(\vect x)$ an MO formula for $\MDP$, and $s\in S$.
    We define $\pareto{\MDP}{s}{\varphi(\vect x)} = \boundary{\achievable{\MDP}{s}{\formula(\vect x)}}$.
\end{definition}
The above definition is similar to the one from~\cite{DBLP:conf/lics/AshokCKWW20}; other authors define the Pareto frontier in a slightly different way, e.g., as the $\leq$-maxima of $\achievable{\MDP}{s}{\formula(\vect x)}$~\cite{DBLP:conf/atva/ForejtKP12}.

The Pareto frontier is the boundary of a convex polytope of dimension at most $n$~\cite{etessami2007multi}.
Such a polytope $P$ has faces of lower dimension, from $0$ (a vertex) to $n-1$. These faces are defined as follows: given a hyperplane $H$ intersecting $P$, the polytope $H\cap P$ is a face of $P$ iff $P$ lies fully on one of the two closed half-spaces defined by $H$.  When considering the polytope associated to a Pareto frontier (and by generalization the Pareto frontier itself), we can freely separate points between those strictly in the interior, those on the border, and those in the exterior. 
In what follows, we only consider faces defined by intersection with hyperplanes whose normal vectors only have non-negative components. More on convex polytopes can be found in~\cite{grunbaum1967convex}. 

\begin{example}
    \label{ex:run}
    Consider the MDP $\MDP$ in \Cref{fig:mdp-pareto} on \cpageref{fig:mdp-pareto} and the formula $\varphi(\vect x) = \threshobj{\geq}{x_1}{\event F_1} \land \threshobj{\geq}{x_2}{\event F_2}$ (we ignore parity).
    Let $s$ be the marked initial state.
    \begin{itemize}
        \item $\vect p_1 = (1, \nicefrac 1 3)$ is achievable from $s$
                by choosing $pair_1$, 
        and then playing $\textcolor{red}{a}$ repeatedly to reach $F_1$ with $\nicefrac 2 3$, and both $F_1$ and $F_2$ with probability $\nicefrac 1 3$.
        \item $\vect p_2 = (\nicefrac 1 3, 1)$ is achievable from $s$ by choosing $pair_2$, and then $\textcolor{red}{a}$ repeatedly to reach $F_2$ with $\nicefrac 2 3$ and both $F_1$ and $F_2$ with probability $\nicefrac 1 3$.
        \item As mentioned earlier, a convex combination of two achievable points is achievable by following one of the two strategies with suitable probabilities.
        However, in this specific example, the vector $\vect p_3 = (\nicefrac 2 3, \nicefrac 2 3) = 0.5\cdot \vect p_1 + 0.5\cdot \vect p_2$ is achievable with a \emph{deterministic} strategy as well:
        First choose $pair_2$ in $s$ and then $\textcolor{green!50!black}{b}$ to reach state $F_1$ with $\nicefrac 1 3$, state $F_2$ with $\nicefrac 1 3$, and both $F_1$ and $F_2$ with probability $\nicefrac 1 3$. These points will be relevant in Sections~\ref{sec:lex} and~\ref{sec:nonstrict}.
        \item The above strategies are all \emph{Pareto-optimal}, but there are also sub-optimal strategies, e.g., choosing $pair_1$ in $s$ and then $\textcolor{green!50!black}{b}$ leads to reaching $\target_1$ and $\target_2$ with probability $0$ and $1$, respectively. This is sub-optimal as $(0,1) \leq \vect p_2$.
    \end{itemize}
\end{example}

We consider \emph{clean} MDPs throughout the rest of the paper:

\begin{definition}[Clean MDP]
    \label{def:clean}
    Let $\MDP = \MDPtuple$ be an MDP.
    $\MDP$ is \emph{clean} ...
    \begin{itemize}
        \item ... w.r.t.\ a parity objective $\parobj \colon \states \to [k]_0$ if for all $s\in S$, we have $\satisfies{s}{\MDP}\univ(\parobj)$, i.e., $\parobj$ is \emph{surely} satisfiable from every state $s$.
        \item ... w.r.t.\ target sets $\target_1,\ldots,\target_n \subseteq \states$ if
        for all $s\in S$, we have $\satisfies{s}{\MDP} \threshobj{\geq}{1}{\event\target}$, where $\target = \bigcup_{i=1}^n \target_i$, and
        every state in $F$ is a sink.
    \end{itemize}
\end{definition}

\begin{example}
    The MDP from \Cref{fig:mdp-pareto} is clean w.r.t.\ $\parobj$ because from every state, there is a strategy that surely reaches a sink with priority $0$.
    For instance, from the topmost state, the rightmost sink is reachable by playing action $b$.
    The MDP is also clean w.r.t.\ $\target_1,\target_2$ because from every state there exists a strategy reaching $\target = \target_1 \cup \target_2$ with probability one, and because $\target$ contains sink states only.
\end{example}

Some remarks about clean MDPs are in order:
    (i)
    One can \emph{clean} an MDP w.r.t.\ parity by identifying and removing states that violate $\univ(\parobj)$.
    The latter can be done by solving the 2-player deterministic parity game obtained by replacing the randomness in the MDP by an antagonistic player.
    Note that deciding the winner in a parity game (and hence checking if $\satisfies{s}{\MDP}\univ(\parobj)$ holds for state $s$) is in \NPinter~\cite{DBLP:conf/stoc/CaludeJKL017}, and even in \UPinter~\cite{DBLP:journals/ipl/Jurdzinski98}, but is not known to be in $\PTIME$.
        (ii)
    Reachability towards \emph{sinks} only is a more severe restriction.
    We make it because simultaneous almost-sure reachability of $n$ general target sets in an MDP is already \PSPACE-complete~\cite{percentile2017}.
    Intuitively, this is because strategies have to remember which targets were already seen.
    Contrarily, sink reachability often admits more practical complexities as shown in \Cref{sec:strict,sec:lex} (also see, e.g.,~\cite{chatterjee2023stochastic} and \cite{DBLP:journals/corr/abs-2109-08317}) and is still of practical interest.
    We leave a solution for general reachability for future work, since it would likely further improvements on the techniques we introduce.

\subsection{Existing Results}
\label{app:existing_results}
We recall the following known results:

\begin{theorem}[Known results]
    \label{thm:known_results}
    Let $\MDP = \MDPtuple$ be a finite MDP with target set $\target \subseteq \states$.
    \begin{enumerate}
        \item\label{it:opt_reach}\label{thm:opti_reach} \textnormal{\textsf{({Optimal reachability; see, e.g.~\cite[Lem.~10.102]{baier2008principles} or~\cite[Sec.~28.4.1]{clarke2018handbook}}).}}
        There exists a memoryless and deterministic strategy $\sigma^{\ast}$ for $\MDP$ such that for all $s \in \states$ we have
        $\prob^{\MDPtoMC{\MDP}{\sigma^{\ast}}}_s(\event \target)
        =
        \sup_{\sigma} \prob^{\MDPtoMC{\MDP}{\sigma}}_s(\event \target)
        =:
        v^{\ast}_{s}$, where the supremum ranges over all strategies for $\MDP$.
        Moreover, both $\sigma^{\ast}$ and the probabilities $(v^{\ast}_{s})_{s \in S}$ are computable in polynomial time.

        \item\label{it:sure_parity}\textnormal{\textsf{({Sure parity~\cite[Thm.~2.1]{DBLP:conf/stoc/CaludeJKL017}}).}}
        Let $\parobj\colon \states \rightarrow [k]_0$ be a priority function.
        Then deciding $\satisfies{s}{\MDP} \univ(\parobj)$ for a state $s \in \states$ is in ${\sf NP} \cap {\sf coNP}$,
        and there exist memoryless deterministic witness strategies for the \yes-instances.
        \item\label{it:sure_parity_and_reach}\label{thm:reach-p}\label{thm:comp-sas} \textnormal{\textsf{({Sure parity and threshold reachability~\cite[Thm.~4]{icalp2017}}).}}
        Let $\parobj\colon \states \rightarrow [k]_0$ be a priority function, and $p \in [0, 1]$ a probability threshold.
        Then deciding $\satisfies{s}{\MDP} \univ(\parobj) \wedge \threshobj{\sim}{p}{\event \target}$ for a state $s \in \states$ and $\sim\, \in \{>, \geq\}$ is in ${\sf NP} \cap {\sf coNP}$,
        and there exist polynomial memory (possibly randomized) witness strategies for the \yes-instances.
        \item\label{it:mo_reach} \textnormal{\textsf{(Multi-objective reachability~\cite[Cor.~3.5]{etessami2007multi}).}}
        Suppose that $\MDP$ is clean w.r.t.\ $\\ \target_1,\ldots,\target_n \subseteq \states$ and let $\sim~\in \{\geq, >\}$.
        Then deciding $\satisfies{s}{\MDP} \bigwedge_{i = 1}^n \threshobj{\sim}{p_i}{\event \target_i}$ for a state $s \in \states$ and thresholds $p_1,\ldots,p_n \in [0,1]$ is in $\PTIME$, and memoryless (but possibly randomized) witness strategies for the \yes-instances can be constructed in polynomial time.
    \end{enumerate}
\end{theorem}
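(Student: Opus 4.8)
The plan is to characterize the achievable reachability-probability vectors as the projection of the feasible region of a polynomially sized linear program over \emph{occupation measures}, exactly as in~\cite{etessami2007multi}, and then to solve that program in polynomial time. Since \MDP\ is clean w.r.t.\ $\target_1,\dots,\target_n$, every state of $\target = \bigcup_{i=1}^n \target_i$ is a sink, so reaching a target coincides with being absorbed there. I would introduce one nonnegative variable $y_{u,a}$ for every transient state $u \in \states \setminus \target$ and every enabled action $a \in \actions(u)$, representing the expected number of times $(u,a)$ is used before absorption. The program consists of the flow-conservation constraints, one per transient state $u$,
\[
  \sum_{a \in \actions(u)} y_{u,a} \;-\; \sum_{v \in \states \setminus \target}\ \sum_{a \in \actions(v)} y_{v,a}\,\tprob(v,a,u) \;=\; \mathds{1}_{u = s},
\]
which inject one unit of flow at the fixed initial state $s$, together with the threshold constraints, one per objective,
\[
  \sum_{v \in \states \setminus \target}\ \sum_{a \in \actions(v)} y_{v,a}\,\tprob(v,a,\target_i) \;\sim\; p_i,
  \qquad \text{where } \tprob(v,a,\target_i) = \sum_{t \in \target_i}\tprob(v,a,t).
\]
This program has $\sum_{u}|\actions(u)|$ variables and $|\states \setminus \target| + n$ constraints, so its feasibility is decidable in \PTIME\ (e.g.\ by the ellipsoid method).

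For the direction from the LP to a strategy, I would take a feasible point $y$ and define the memoryless \emph{randomized} strategy $\sigma$ by $\sigma(u)(a) = y_{u,a}/\sum_{a'} y_{u,a'}$ at states with positive total outflow, and arbitrarily elsewhere. The essential observation is that because the finitely many $y_{u,a}$ are all finite, the total expected time spent in $\states \setminus \target$ under $\sigma$ is finite; hence $\sigma$ absorbs into $\target$ almost surely, its expected occupation measure in $\MDPtoMC{\MDP}{\sigma}$ coincides with $y$, and the left-hand side of the $i$-th threshold constraint equals $\prob^{\sigma}_s(\event\target_i)$. Thus $\sigma$ witnesses $\satisfies{s}{\MDP}\bigwedge_{i=1}^n \threshobj{\sim}{p_i}{\event\target_i}$, and it is memoryless, proving the strategy-complexity claim. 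Everything here is polynomial-time computable from the LP solution.

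The converse direction, turning an arbitrary witnessing strategy into a feasible LP point, is the main obstacle, precisely because a strategy need not absorb almost surely: its occupation measure may be infinite on a transient end component in which it lingers forever. Cleanliness is exactly what resolves this. The mass that never reaches $\target$ contributes $0$ to every $\prob(\event\target_i)$, so using the clean-w.r.t.-targets property (from every state $\target$ is reachable almost surely) I would reroute this lost mass toward $\target$; this can only increase reachability probabilities and therefore preserves the thresholds, while yielding an almost-surely absorbing strategy. In a finite MDP, almost-sure absorption into a set of sinks forces a uniform geometric tail on the absorption time and hence finite expected absorption time, so the rerouted strategy has a finite expected occupation measure satisfying flow conservation, i.e.\ a feasible LP point. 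Combined with the previous direction this gives the full equivalence.

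Two minor points remain. For strict thresholds ($\sim\ =\ >$) I would test strict feasibility in \PTIME\ by introducing a slack variable $\varepsilon$ with $\varepsilon \leq \sum_{v,a} y_{v,a}\,\tprob(v,a,\target_i) - p_i$ for each $i$, maximizing $\varepsilon$, and checking whether the optimum is strictly positive. Second, the witness produced is memoryless but genuinely randomized (randomization over enabled actions cannot in general be removed for multiple objectives), matching the statement; deterministic memoryless strategies do not suffice, as already the running example shows.
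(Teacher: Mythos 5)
Your proposal addresses only the last of the four items bundled into this theorem. The statement is a compendium of four independent known results, and three of them are untouched by your occupation-measure LP: item 1 (memoryless deterministic \emph{optimal} reachability strategies, computable in polynomial time), item 2 (deciding sure parity, i.e.\ deciding the winner of a two-player parity game, in $\NPinter$ with memoryless deterministic witnesses), and item 3 (one sure parity objective combined with one threshold reachability objective, in $\NPinter$ with polynomial-memory witnesses). Items 2 and 3 in particular cannot be obtained from any linear-programming argument of the kind you give: they rest on parity-game determinacy and on a reduction of the combined sure-plus-threshold problem to parity games with weights, which is exactly the content of the works the paper cites. Note that the paper itself offers no proof of this theorem at all --- it is labelled ``Known results'' and discharged purely by citation --- so a complete blind proof would have to establish all four items; you essentially re-derive one, namely the multi-objective reachability result of Etessami et al., and for that item your LP is indeed the same object as in the cited source.

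Within item 4 there are two further points. Your forward direction (feasible point $y$ to memoryless randomized strategy) is sound, but the claim that the occupation measure of the normalized strategy \emph{coincides} with $y$ is not literally true: $y$ may contain circulating flow supported on recurrent, non-leaking parts of the induced chain; what one actually gets is an occupation measure bounded above by $y$ whose flow into each $\target_i$ nevertheless equals that of $y$, which suffices. The genuine gap is in the converse direction. You propose to ``reroute the mass that never reaches $\target$ towards $\target$'', but the event of never reaching $\target$ is not determined at any finite time, so no strategy can branch on it; and switching to an almost-surely absorbing strategy at a finite horizon $T$ can strictly \emph{decrease} $\prob(\event\target_i)$, because paths that would have reached $\target_i$ only after time $T$ get rerouted elsewhere. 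The standard repair is a limiting argument: for every $\varepsilon>0$ pick $T$ with $\prob(\event^{\leq T}\target_i) \geq p_i - \varepsilon$ for all $i$, append an almost-surely absorbing memoryless strategy (this exists by cleanliness), obtain a finite conservation-respecting occupation measure feasible for thresholds $p_i-\varepsilon$, and then conclude for $\varepsilon=0$ because the projection of the LP's polyhedral feasible set onto the threshold coordinates is itself a polyhedron, hence closed. Without this truncation-plus-closedness step your equivalence does not go through for non-strict thresholds, which is precisely the delicate case.
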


\section{Sure Parity and $n$ Strict Reachability Thresholds}
\label{sec:strict}
\label{sec:thres-r}

We study MO formulas of the form $\univ(\parobj) \wedge \bigwedge_{i = 1}^n \threshobj{>}{p_i}{\event \target_i}$ in this section, i.e., with \emph{strict} thresholds only.
Non-strict thresholds are more involved, see~\Cref{sec:nonstrict}.
We start by stating the main result of this section.
Note that it is formulated for MDPs that are clean w.r.t.\ the parity objective $\parobj$ and the target sets $\target_i$.
The assumption of being clean w.r.t.\ parity can be dropped, but this incurs the additional complexity of solving a parity game (see \Cref{def:clean} and subsequent remarks), and hence leads to an $\NPinter$ complexity bound on the associated decision problem.

\vspace{15pt}
\begin{restatable}{theorem}{approxInt}
    \label{thm:approx-int}
    \begin{mdframed}[innertopmargin=+0.5em,innerbottommargin=+0.5em]
        Let $\MDP$ be a clean MDP w.r.t.\  parity objective $\parobj$ and target sets $\target_1,\ldots,\target_n \subseteq \states$.
        Further, let $\vect{p} \in [0,1]^n$, and let $s \in S$ be a state.
        Then:
        \begin{enumerate}
            \item\label{it:approxIntFirst} The decision problem $\exists\strat \colon \satisfy{s}{\strat}{\MDP} \univ(\parobj) \wedge \bigwedge_{i = 1}^n \threshobj{>}{p_i}{\event \target_i}$ is in $\PTIME$.
            \item\label{it:approxIntSecond} A witness strategy $\strat$ using at most $2^{\poly(|\MDP| + nD)}$ memory, where $D$ is the bit-complexity of the rational numbers in $\vect p$, can be effectively constructed for the \yes-instances.
        \end{enumerate}
    \end{mdframed}
\end{restatable}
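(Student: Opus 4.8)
The plan is to reduce the combined problem to a pure multi-objective reachability problem, exploiting two structural facts about clean MDPs: the targets $\target_i$ are sinks (so once a target is entered, reachability is permanently secured), and $\univ(\parobj)$ is a \emph{tail} condition (so a finite prefix cannot affect whether a path satisfies it). Concretely, I claim that $\exists\strat\colon\satisfy{s}{\strat}{\MDP}\univ(\parobj)\wedge\bigwedge_{i=1}^n\threshobj{>}{p_i}{\event\target_i}$ holds iff $\exists\strat'\colon\satisfy{s}{\strat'}{\MDP}\bigwedge_{i=1}^n\threshobj{>}{p_i}{\event\target_i}$, i.e.\ the same strict thresholds but \emph{without} the sure parity constraint. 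The ``only if'' direction is trivial (drop parity), so all content lies in the converse, which I prove by an explicit construction that also yields the memory bound. For the \emph{decision} in \PTIME, I would use that $\achievable{\MDP}{s}{\cdot}$ is downward-closed and convex: a strictly dominating achievable point exists iff $\vect p+\varepsilon\vect 1$ is achievable (with non-strict thresholds) for some $\varepsilon>0$, where $\vect 1$ is the all-ones vector. Using the linear-programming description of the multi-reachability achievable polytope underlying \Cref{thm:known_results} (from~\cite{etessami2007multi}), I form the LP that \emph{maximizes} $\varepsilon$ subject to $\vect p+\varepsilon\vect 1$ satisfying the polytope constraints, and answer \yes{} iff the optimum is positive; this is decidable in \PTIME.

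For the witness, let $\varepsilon>0$ be the LP optimum. By \Cref{thm:known_results} there is a \emph{memoryless} (randomized) strategy $\strat'$ with $\prob^{\MDPtoMC{\MDP}{\strat'}}_s(\event\target_i)\geq p_i+\varepsilon$ for all $i$, so in particular the strict thresholds hold. Here strictness is essential: since $\event\target_i=\bigcup_{B\geq 0}\event^{\leq B}\target_i$ is an increasing union, there is a \emph{finite} horizon $B$ with $\prob^{\MDPtoMC{\MDP}{\strat'}}_s(\event^{\leq B}\target_i)>p_i$ for every $i$ simultaneously. I then define the two-phase strategy $\strat^{\ast}$: for the first $B$ steps it mimics $\strat'$; afterwards it switches to a single memoryless deterministic strategy $\strat_P$ ensuring $\univ(\parobj)$ from \emph{every} state. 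Such a global $\strat_P$ exists because $\MDP$ is clean w.r.t.\ $\parobj$ (the whole state space is winning) and parity games admit uniform memoryless winning strategies on the winning region (cf.\ \Cref{thm:known_results}). Implementing the switch needs only a step counter over $\{0,\ldots,B\}$, so $\strat^{\ast}$ is finite-memory with $O(B)$ modes.

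Correctness splits across the two objectives. For reachability, the first $B$ steps of $\strat^{\ast}$ coincide with $\strat'$, and since each $\target_i$ is an absorbing sink, $\prob^{\MDPtoMC{\MDP}{\strat^{\ast}}}_s(\event\target_i)\geq\prob^{\MDPtoMC{\MDP}{\strat'}}_s(\event^{\leq B}\target_i)>p_i$. For parity, take any path $\pathname$ of $\MDPtoMC{\MDP}{\strat^{\ast}}$; from step $B$ onward it follows $\strat_P$ from $\pathname(B)$, so its suffix satisfies $\parobj$, and as $\inf(\pathname)$ is unaffected by the length-$B$ prefix, $\pathname$ itself satisfies $\parobj$; since this holds for \emph{all} paths, $\satisfy{s}{\strat^{\ast}}{\MDP}\univ(\parobj)$. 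For the memory bound, bounded reachability converges to unbounded reachability geometrically in any finite MDP, at a rate governed by the smallest positive transition probability. The optimal slack $\varepsilon$ is rational of bit-size $\poly(|\MDP|+nD)$, while that smallest probability may be as small as $2^{-\poly(|\MDP|)}$; hence forcing $\prob(\event^{\leq B}\target_i)$ above $p_i$ needs only $B=2^{\poly(|\MDP|+nD)}$, which gives exactly the claimed $2^{\poly(|\MDP|+nD)}$ memory.

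I expect the main obstacle to be conceptual rather than computational: the ``deferral'' argument that a sure $\omega$-regular (tail) objective can be postponed behind a finite-horizon reachability phase, which works precisely because the thresholds are \emph{strict} and the targets are \emph{sinks}. This is exactly the point where the non-strict case (\Cref{sec:nonstrict}) breaks down: without strictness there need not be any finite $B$ after which a threshold is already met, so the clean separation into a reachability phase followed by a parity phase is unavailable, and one must instead analyze the boundary of the Pareto frontier directly. The only genuinely technical estimate here is the geometric-convergence bound pinning down $B$, and with it the exponential memory.
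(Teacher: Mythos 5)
Your proposal is correct and follows essentially the same route as the paper: your claimed equivalence with the parity-free problem is exactly \Cref{lem:same-ach}, and your two-phase witness (play the memoryless multi-reachability strategy for a finite horizon $B$, then switch to a uniform memoryless deterministic parity strategy, relying on sinks plus prefix-independence of parity) is precisely the paper's construction, with the horizon bound playing the role of \Cref{lem:moReachBound}. The only deviations are technical and both sound: you bound $B$ via geometric convergence of bounded reachability governed by the smallest positive transition probability of the induced MC (note this probability depends on the strategy's randomization and is $2^{-\poly(|\MDP|+nD)}$, not just $2^{-\poly(|\MDP|)}$, though the final bound is unaffected), whereas the paper uses Markov's inequality on expected hitting times (\Cref{lem:mc-bound}); and for the \PTIME{} decision you maximize a slack $\varepsilon$ in the reachability LP, whereas the paper simply cites that strict thresholds are already handled by \Cref{thm:known_results}.\ref{it:mo_reach}.
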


\begin{example}
    Reconsider the MDP $\MDP$ from \Cref{fig:mdp-pareto} with initial state $s$, yellow target $F_1$ and blue target $F_2$.
    To surely satisfy $\parobj$, we must visit non-sink states only finitely often (this is a co-B\"uchi condition).
    We show that $\satisfies{s}{\MDP}\univ(\parobj)\wedge \threshobj{>}{\nicefrac 1 2}{\event F_1}\wedge \threshobj{>}{\nicefrac 1 6}{\event F_2}$, achieving a value strictly greater than $\vect p_4 = (\nicefrac 1 2,\nicefrac 1 6 )$ which is strictly inside the Pareto frontier. To achieve this objective, we take the following strategy: we first play action $pair_1$, then 
$\textcolor{red}{a}$ twice. By doing so, we have probability $\nicefrac 1 4 $ of reaching the leftmost state (contained in both $F_1$ and $F_2$), and probability $\nicefrac 1 2 $ of reaching the state only fulfilling $F_1$. If we now play action $\textcolor{green!50!black}{b}$, we satisfy condition $\parobj$ surely.
    We end up reaching $F_1$ with probability $\nicefrac 3 4 > \nicefrac 1 2$ and reaching $F_2$ with probability $\nicefrac 1 2 >\nicefrac 1 6 $.
    Note that in general, thresholds that are achievable with strict inequalities are located strictly inside the Pareto frontier.
\end{example}

\Cref{thm:approx-int} relies on the following two lemmas.

\begin{restatable}{lemma}{moReachBound}
    \label{lem:moReachBound}
    Suppose that $\MDP$ is clean w.r.t.\ $\target_1,\ldots,\target_n \subseteq \states$,
    and $s \in \states$ is such that $\satisfies{s}{\MDP} \bigwedge_{i=1}^n \threshobj{>}{p_i}{\event \target_i}$, where all the $p_i$ are probabilities representable as fractions of integers with bit-size at most $D$.
    Then there exists a memoryless (but possibly randomized) strategy $\strat$ and $B \leq 2^{\poly(|\MDP| + nD)}$ such that
    \[
        \satisfy{s}{\strat}{\MDP} \bigwedge_{i=1}^n \threshobj{>}{p_i}{\event^{\leq B} \target_i}
        ~.
    \]
\end{restatable}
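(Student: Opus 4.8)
The plan is to start from an \emph{unbounded}-reachability witness and show that truncating it at a large but finite horizon $B$ still strictly exceeds every threshold, because (i) the slack between the $p_i$ and the achieved probabilities can be taken bounded below by $2^{-\poly(|\MDP|+nD)}$, and (ii) the probability mass reaching a target only after step $B$ decays geometrically in $B$. The qualitative statement is immediate from $\event\target_i = \bigcup_B \event^{\leq B}\target_i$ and monotone convergence; the whole point of the lemma is the quantitative bound on $B$, so the work is in the bookkeeping.

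First I would fix the strategy and quantify its slack. Recall (\Cref{sec:moprelims}) that $\achievable{\MDP}{s}{\formula(\vect x)}$ is a rational convex polytope $P \subseteq [0,1]^n$ determined by $\MDP$, $s$, and the targets alone (independently of $\vect p$), whose defining LP has size $\poly(|\MDP|)$, so its vertices have bit-complexity $\poly(|\MDP|)$. Since $\vect p$ is \emph{strictly} achievable, the linear program $\max\{\,g : \vect q \in P,\ q_i - g \geq p_i \text{ for all } i \in [n]\,\}$ has a strictly positive rational optimum $g^\ast$. Its bit-complexity is polynomial in that of the constraints, i.e.\ $\poly(|\MDP|) + O(nD)$, whence $g^\ast \geq 2^{-\poly(|\MDP|+nD)}$. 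Setting $\vect q^\ast = \vect p + g^\ast \vect{1} \in P$ and applying multi-objective reachability (\Cref{thm:known_results}, item~\ref{it:mo_reach}) to the thresholds $\vect q^\ast$ yields a \emph{memoryless randomized} strategy $\strat$ with $\prob^{\MDPtoMC{\MDP}{\strat}}_s(\event\target_i) \geq p_i + g^\ast$ for all $i$, whose transition probabilities are rational of bit-complexity $\poly(|\MDP|+nD)$.

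Next I would control the horizon in the induced chain $M := \MDPtoMC{\MDP}{\strat}$. As $\strat$ is memoryless, $M$ is a finite MC on $\states$ with $N := |\states|$ states, and its minimum positive transition probability $\delta$ is a sum of products of MDP and strategy probabilities, hence $\delta \geq 2^{-\poly(|\MDP|+nD)}$. From every state some bottom SCC is reachable along a path of length $< N$, so within $N$ steps a recurrent state is reached with probability at least $\delta^{N}$; by the Markov property the probability $r_t$ of still sitting in a transient state at time $t$ obeys $r_{kN} \leq (1-\delta^{N})^{k}$. Since every target is a sink, reaching $\target_i$ for the first time after step $B$ forces transience at time $B$, giving $\prob^M_s(\event\target_i) - \prob^M_s(\event^{\leq B}\target_i) \leq r_B$.

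Finally I would combine the two estimates: if $r_B < g^\ast$ then $\prob^{\MDPtoMC{\MDP}{\strat}}_s(\event^{\leq B}\target_i) \geq (p_i + g^\ast) - r_B > p_i$ for every $i$, i.e.\ $\satisfy{s}{\strat}{\MDP} \bigwedge_{i=1}^{n} \threshobj{>}{p_i}{\event^{\leq B}\target_i}$ with the \emph{same} memoryless $\strat$. Using $-\ln(1-x) \geq x$, it suffices to take $k = \lceil \ln(1/g^\ast)/\delta^{N}\rceil$ and $B = kN$. Here $\ln(1/g^\ast) \leq \poly(|\MDP|+nD)$ while $\delta^{N} \geq 2^{-\poly(|\MDP|+nD)}$ (as $N \leq |\MDP|$), so $k \leq 2^{\poly(|\MDP|+nD)}$ and therefore $B = kN \leq 2^{\poly(|\MDP|+nD)}$, as claimed. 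The main obstacle is exactly this quantitative step: establishing the slack lower bound $g^\ast \geq 2^{-\poly(|\MDP|+nD)}$ from the LP, and folding the seemingly doubly-exponential factor $\delta^{-N}$ into a single clean $2^{\poly}$ bound on $B$.
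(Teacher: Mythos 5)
Your proof is correct, but it reaches the horizon bound by a genuinely different route than the paper. The paper fixes the witness strategy immediately: since \Cref{thm:known_results}.\ref{it:mo_reach} handles strict thresholds directly, it yields a memoryless randomized $\strat$ with $\prob_s(\event\target_i) > p_i$ whose probabilities have bit-size $\poly(|\MDP|+nD)$; the slack then comes for free, because in the induced MC each $\prob_s(\event\target_i)$ is a rational of polynomial bit-complexity, so $\prob_s(\event\target_i) - p_i \geq 2^{-\poly(|\MDP|+nD)}$ --- no LP over the achievable polytope is needed, so your first step is sound but redundant. For the tail, the paper proves a separate MC-level lemma (\Cref{lem:mc-bound}) via Markov's inequality: it bounds the expected truncated hitting time $E_s[\tilde{X}_\target]$ by a rational of polynomial bit-size using a linear equation system, concludes $\prob_s(\event^{\leq B}\target) \geq \prob_s(\event\target) - E_s[\tilde{X}_\target]/(B+1)$, and takes any $B$ exceeding $E_s[\tilde{X}_\target]/(\prob_s(\event\target)-p)$. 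You instead bound the probability mass still outside the BSCCs at time $B$ geometrically, $r_{kN} \leq (1-\delta^N)^k$, and use that a first visit to a sink after time $B$ forces transience at time $B$. Both yield $B \leq 2^{\poly(|\MDP|+nD)}$. Your absorption argument is more elementary (no hitting-time equation system), but it leans on the targets being sinks, whereas the paper's \Cref{lem:mc-bound} holds for arbitrary target sets in a finite MC; conversely, your route must establish the slack $g^\ast$ up front, which is the one place where you do more work than necessary. One minor detail: with $k = \lceil \ln(1/g^\ast)/\delta^N\rceil$ you only guarantee $r_B \leq g^\ast$, so take $k+1$ (or any strict overshoot) to make the final inequality strict.
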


\begin{restatable}{lemma}{sameAch}
    \label{lem:same-ach}
    Under the assumptions of \Cref{thm:approx-int} and $\formula(\vect x) = \bigwedge_{i = 1}^n \threshobj{>}{x_i}{\event \target_i}$,
    \[
        \achievable{\MDP}{s}{\formula(\vect x)} = \achievable{\MDP}{s}{\univ(\parobj) \wedge \formula(\vect x)}
        ~,
    \]
    i.e., 
    the Pareto frontiers of $\formula(\vect x)$ and \mbox{$\univ(\parobj) \wedge \formula(\vect x)$ coincide}.
\end{restatable}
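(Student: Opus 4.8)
The plan is to prove the two inclusions separately. The inclusion $\achievable{\MDP}{s}{\univ(\parobj) \wedge \formula(\vect x)} \subseteq \achievable{\MDP}{s}{\formula(\vect x)}$ is immediate, since any strategy witnessing $\univ(\parobj) \wedge \formula(\vect p)$ in particular witnesses $\formula(\vect p)$. All the work lies in the reverse inclusion: given $\vect p \in \achievable{\MDP}{s}{\formula(\vect x)}$, I must exhibit a single strategy that \emph{simultaneously} surely satisfies the parity objective and achieves all the strict reachability thresholds $> p_i$.

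First I would exploit that the thresholds are strict in order to pass to a finite horizon. Since $\vect p$ is achievable for $\formula$, we have $\satisfies{s}{\MDP} \bigwedge_{i=1}^n \threshobj{>}{p_i}{\event \target_i}$, so \Cref{lem:moReachBound} yields a memoryless strategy $\strat$ and a bound $B \leq 2^{\poly(|\MDP|+nD)}$ with $\prob^{\MDPtoMC{\MDP}{\strat}}_s(\event^{\leq B}\target_i) > p_i$ for every $i$. In parallel, cleanness w.r.t.\ $\parobj$ together with \Cref{thm:known_results}(\ref{it:sure_parity}) and the uniform memoryless determinacy of parity games provides a single memoryless deterministic strategy $\strat_\parobj$ that surely satisfies $\parobj$ from \emph{every} state of $\MDP$. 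I then glue these into a finite-memory strategy $\strat'$ carrying a step counter capped at $B$: during the first $B$ steps it plays $\strat$, and from step $B$ onward it plays $\strat_\parobj$ from whatever state is currently reached (so the total memory is $O(B) \leq 2^{\poly(|\MDP|+nD)}$, consistent with \Cref{thm:approx-int}).

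It then remains to check that $\strat'$ witnesses $\univ(\parobj) \wedge \formula(\vect p)$. For the reachability part, the first $B$ steps of $\strat'$ coincide with $\strat$, so the bounded-reachability probabilities agree, giving $\prob^{\MDPtoMC{\MDP}{\strat'}}_s(\event \target_i) \geq \prob^{\MDPtoMC{\MDP}{\strat'}}_s(\event^{\leq B}\target_i) = \prob^{\MDPtoMC{\MDP}{\strat}}_s(\event^{\leq B}\target_i) > p_i$. For the sure-parity part, every path in the induced MC $\MDPtoMC{\MDP}{\strat'}$ decomposes into a length-$B$ prefix followed by an infinite suffix generated by $\strat_\parobj$; since $\inf(\cdot)$, and hence the parity acceptance of a path, depends only on the suffix, and $\strat_\parobj$ surely satisfies $\parobj$ from the state entered at step $B$, the whole path satisfies $\parobj$.

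The main obstacle, and the reason strictness is essential, is the sure-parity requirement, as it must hold on \emph{all} paths, including probability-$0$ ones and those that settle into a sink early. Here I would record that cleanness w.r.t.\ $\parobj$ forces every sink to carry even priority (a sink's only play loops on it, so surely satisfying $\parobj$ from it requires $\parobj$ even), making paths that reach a sink during the first phase harmless. The finite-horizon cut from \Cref{lem:moReachBound} is precisely what lets me secure all of the reachability mass before step $B$ and then switch unconditionally to $\strat_\parobj$ without jeopardizing the thresholds. For non-strict thresholds this finite cutoff generally loses mass and the argument breaks, which is why that case is deferred to \Cref{sec:nonstrict}.
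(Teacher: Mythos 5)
Your proof is correct and takes essentially the same route as the paper's: apply \Cref{lem:moReachBound} to obtain a memoryless strategy meeting all strict thresholds within a bounded horizon $B$, use cleanness w.r.t.\ $\parobj$ to get a uniform memoryless deterministic sure-parity strategy, concatenate the two with a counter up to $B$, and conclude via prefix-independence of parity. Your additional observations (even priorities at sinks, why strictness is essential) are sound but go beyond what the paper's argument needs.
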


\subsection{Proof of Lemma~\ref{lem:moReachBound}}
\label{app:moReachBound}

We use the following auxiliary result to prove \Cref{lem:moReachBound}:

\begin{restatable}{lemma}{mcbound}
    \label{lem:mc-bound}
    Consider a finite MC $\markovChain = (\states, \tprob)$ with a state $s \in \states$, target set $\target \subseteq \states$, and a rational probability $p = q/r$ where $q \geq 0, r > 0$ are integers with bit-size\footnote{A positive integer $n$ has bit-size $D = \lfloor \log_2 n \rfloor + 1$. The integer $0$ has bit-size $1$.} at most $D$.
    Then if $\satisfies{s}{\markovChain} \threshobj{>}{p}{\event \target}$, we also have that $\satisfies{s}{\markovChain} \threshobj{>}{p}{\event^{\leq B} \target}$, where $B$ can be chosen such that $B \leq 2^{\poly(|\markovChain|) + D}$ for a fixed polynomial $\poly$.
\end{restatable}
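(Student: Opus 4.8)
The plan is to show that the finite-horizon reachability probabilities $\prob_s^\markovChain(\event^{\leq B}\target)$ converge from below to $\prob_s^\markovChain(\event\target)$, and to quantify both the size of the gap $\epsilon := \prob_s^\markovChain(\event\target) - p > 0$ and the rate of convergence, so that a horizon $B$ closing the gap can be read off. First I would observe that $\prob_s^\markovChain(\event\target) - \prob_s^\markovChain(\event^{\leq B}\target)$ is exactly the probability $t_B$ of reaching $\target$ for the first time strictly after step $B$. The two quantities to control are therefore $\epsilon$ (the room available) and $t_B$ (what must be pushed below $\epsilon$).

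To lower-bound $\epsilon$, I would use that $\prob_s^\markovChain(\event\target)$ is a rational number with a controlled denominator. Partitioning $\states$ into the target states (value $1$), the states from which $\target$ is unreachable (value $0$), and the remaining uncertain states, the reachability probabilities on the uncertain states are the unique solution of a linear system $(I - A)\vect x = \vect b$ whose coefficients are the rational transition probabilities of $\markovChain$. By Cramer's rule each such value is a ratio of determinants of integer matrices obtained after clearing denominators, and Hadamard's inequality bounds these determinants by $2^{\poly(|\markovChain|)}$. Hence $\prob_s^\markovChain(\event\target) = a/b$ with $b \leq 2^{\poly(|\markovChain|)}$. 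Since this value strictly exceeds $p = q/r$, the integer $ar - qb$ is at least $1$, so $\epsilon \geq 1/(br) \geq 2^{-(\poly(|\markovChain|) + D)}$.

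To upper-bound the tail $t_B$, I would argue geometrically. Let $R$ be the set of states from which $\target$ is reachable and let $n = |\states|$. From every state in $R$ there is a path to $\target$ of length at most $n$ in the underlying graph, each edge of which has probability at least the least positive entry $p_{\min}$ of $\tprob$; hence from any state of $R$ the target is hit within $n$ steps with probability at least $\eta := p_{\min}^{n} \geq 2^{-\poly(|\markovChain|)}$. Since a run contributing to $t_B$ must, at each multiple of $n$ up to $B$, sit in $R$ without having visited $\target$, a block-by-block conditioning argument yields $t_B \leq (1-\eta)^{\lfloor B/n\rfloor}$.

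Finally I would choose $B$ so that $(1-\eta)^{\lfloor B/n\rfloor} < \epsilon$, which is implied by $\lfloor B/n \rfloor \geq \eta^{-1}\ln(1/\epsilon)$ via $-\ln(1-\eta)\geq\eta$. Since $\ln(1/\epsilon) \leq \poly(|\markovChain|) + D$ and $\eta^{-1} \leq 2^{\poly(|\markovChain|)}$, setting $B = n\cdot\lceil \eta^{-1}\ln(1/\epsilon)\rceil$ gives the claimed bound $B \leq 2^{\poly(|\markovChain|)+D}$, and for this $B$ we obtain $\prob_s^\markovChain(\event^{\leq B}\target) = \prob_s^\markovChain(\event\target) - t_B > p$, i.e. $\satisfies{s}{\markovChain}\threshobj{>}{p}{\event^{\leq B}\target}$. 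I expect the main obstacle to be the gap bound of the second paragraph: a priori $\epsilon$ could be doubly exponentially small, and it is precisely the polynomial denominator bound on the reachability probability (via Cramer and Hadamard) that rules this out and keeps $B$ within a single exponential. The tail estimate, by contrast, is a routine finite-Markov-chain reachability argument.
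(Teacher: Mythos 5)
Your proof is correct, but it takes a genuinely different route from the paper's in the half that does the real work. The gap lower bound ($\epsilon \geq 2^{-(\poly(|\markovChain|)+D)}$ via Cramer's rule and Hadamard's inequality) is essentially the paper's argument for its denominator: the paper invokes the polynomial bit-size of $\prob_s^{\markovChain}(\event\target)$ as the solution of the reachability linear system, and your Cramer/Hadamard reasoning is the standard justification of that fact spelled out. For the convergence rate, however, the paper proceeds differently: it applies Markov's inequality to the truncated hitting time $\tilde X_\target$ (first hitting time of $\target$, set to $0$ on non-reaching runs), obtaining $\prob_s(\event^{\leq B}\target) \geq \prob_s(\event\target) - E_s[\tilde X_\target]/(B+1)$, and then bounds $E_s[\tilde X_\target]$ by a quantity of polynomial bit-size via a second linear equation system. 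Your block-decomposition estimate $t_B \leq (1-\eta)^{\lfloor B/n\rfloor}$ with $\eta = p_{\min}^n$ replaces this: it is more elementary in that it avoids the equation system for the truncated expected hitting time (which the paper itself remarks is ``somewhat less well known''), and it exploits the exponential decay of the tail, so your resulting bound $B \approx n\,\eta^{-1}\ln(1/\epsilon)$ depends only polynomially on $D$, whereas the paper's $B \approx E_s[\tilde X_\target]/\epsilon$ is exponential in $D$; both, of course, fit within the claimed $2^{\poly(|\markovChain|)+D}$. One hairline point to tighten: from the non-strict inequality $-\ln(1-\eta)\geq\eta$ you only obtain $t_B \leq \epsilon$, hence $\prob_s(\event^{\leq B}\target)\geq p$ rather than the required strict inequality; but since $1-\eta < e^{-\eta}$ holds strictly for $\eta\in(0,1)$ (and $t_B = 0$ when $\eta = 1$), your choice of $B$ does give $t_B < \epsilon$, so the conclusion stands as written.
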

\begin{proof}
    We give a proof based on Markov's inequality.
    Recall that Markov's inequality states that for a non-negative random variable $X$ and all $a >0$, we have $\prob(X \geq a) \leq \frac{E[X]}{a}$, where $E[X]$ denotes the expected value of $X$.
    Now let $X_\target \colon S^\omega \to \nat_0 \cup \{\infty\}$ be the random variable $X_\target(s_0s_1\ldots) = \min \{i \mid s_i \in \target\}$, with the convention that $\min \emptyset = \infty$.
    Intuitively, $X_F$ is the number of steps until $\target$ is visited for the first time, or $\infty$ if $F$ is not reached at all.
    We write $E_s[X_\target]$ for the expected value of $X_\target$ w.r.t.\ probability measure $\prob_s^\markovChain$.
    Applying Markov's inequality directly to $X_\target$ is not useful because $E_s[X_\target] = \infty$ is possible.
    However, since $\markovChain$ is finite, $E_s[X_\target] = \infty$ if and only if $\prob(\event \target) < 1$, and moreover, $\prob(X_\target = \infty) = 1 - \prob(\event \target)$.
    Now consider the $\nat_0$-valued random variable $\tilde{X}_\target$ which is defined in the same way as $X_\target$ except that $\tilde{X}_\target$ is $0$ if $X_\target$ is $\infty$.
    With these definitions, it holds for all integers $B \geq 0$:
    \begin{align*}
    \prob_s(\event^{\leq B} \target) ~=~& 1 - \prob_s(X_\target \geq B+1) \\
    ~=~& 1 - \prob_s(\tilde{X}_\target \geq B+1) - \prob_s(X_\target = \infty) \tag{by definition of $\tilde{X}_F$}\\
    ~\geq~& 1 - \frac{E_s[\tilde{X}_\target]}{B+1} - (1 - \prob_s(\event \target)) \tag{Markov's inequality}\\
    ~=~& \prob_s(\event \target) - \frac{E_s[\tilde{X}_\target]}{B+1} \tag{simplifying}
    \end{align*}
    It follows that if $B > \frac{E_s[\tilde{X}_\target]}{\prob_s(\event \target) - p}$ then $\prob_s(\event^{\leq B} \target) > p$.
    
    It only remains to bound the quantity $\frac{E_s[\tilde{X}_\target]}{\prob_s(\event \target) - p}$.
    \begin{itemize}
        \item We consider the denominator first.
        $\prob_s(\event \target)$ is a rational number computable in polynomial time in $|\markovChain|$ (by solving a linear equation system; see, e.g.,~\cite[Chapter 10]{baier2008principles}), thus $\prob_s(\event \target)$ can be represented as a fraction of integers where both the numerator and the denominator have at most $\mathcal O(\poly(|\markovChain|))$ many bits.
        As a consequence, the number $\prob_s(\event \target) - p$ can be represented as a fraction of integers with bit-size at most $\mathcal O(\poly(|\markovChain|) + D)$.
        \item Similarly, the expected value of $\tilde{X}_\target$ can also be expressed as the unique solution of a linear equation system with $|\states|$ many equations and rational coefficients whose bit-size is at most polynomial in $|\markovChain|$.
        We state this equation system (which appears to be somewhat less well known; see, e.g., equation (9) in~\cite{DBLP:conf/lics/WinklerK23}) for the sake of completeness:
        For all $s \in \states$ we have a variable $x_s$ and the equation
        \begin{align*}
        x_s = \begin{cases}
        \prob_s(\event \target) + \sum_{s' \in \states} \tprob(s,s') \cdot x_{s'} &\text{ if } s \notin \target \land \prob_s(\event \target) > 0 \\
        0 &\text{ else}
        \end{cases}
        \end{align*}
        The unique solution of the above equation system is $x_s = E_s[\tilde{X}_\target]$.
        Since the coefficients in the equation system have polynomial bit size (see previous point), the same applies to the expected value $E_s[\tilde{X}_\target]$.
        \item It follows that $\frac{E_s[\tilde{X}_\target]}{\prob_s(\event \target) - p}$ can be represented as fraction of integers where the numerator has bit-size at most $\mathcal O(\poly(|\markovChain|) + D)$ for a fixed polynomial $\poly$.
        Hence it suffices to choose $B = 2^{\poly(|\markovChain|) + D}$.
        This completes the proof.
    \end{itemize}
\end{proof}

\begin{proof}[of \Cref{lem:moReachBound}]
    Since $\satisfies{s}{\MDP} \bigwedge_{i=1}^n \threshobj{>}{p_i}{\event \target_i}$, \Cref{thm:known_results}(\ref{it:mo_reach}) asserts existence of a memoryless (but possibly randomized) strategy $\strat$ that can be constructed in time at most $\mathcal O(\poly(|\MDP| + nD))$, i.e., the probabilities in $\strat$ are representable as fractions of integers with bit-size at most $\mathcal O(\poly(|\MDP| + nD))$.
    Now consider the induced MC $\MDPtoMC{\MDP}{\strat}$.
    We have that $|\MDPtoMC{\MDP}{\strat}| \in \mathcal O(\poly(|\MDP| + nD))$.
    Applying \Cref{lem:mc-bound} we find that each individual $\target_i$ is reached with probability strictly greater than $p_i$ after at most $2^{\poly(|\MDP| + nD)}$ steps.
\end{proof}

\subsection{Proof of Lemma~\ref{lem:same-ach}}
\label{app:sameach}

\sameAch*

\newcommand{\prstrat}{\strat_{\mathtt{Pr}}}
\newcommand{\univstrat}{\strat_{\mathtt{S}}}
\begin{proof}
    We have to show that for all $\vect p \in [0,1]^n$, if $\satisfies{s}{\MDP} \formula(\vect p)$ then $\satisfies{s}{\MDP} \univ(\parobj) \wedge \formula(\vect p)$.
    Suppose that $ \satisfies{s}{\MDP} \formula(\vect p)$.
    By \Cref{lem:moReachBound} there exists a memoryless (but possibly randomized) strategy $\prstrat$ such that $\satisfy{s}{\prstrat}{\MDP} \bigwedge_{i=1}^n \threshobj{>}{p_i}{\event^{\leq B} \target_i}$, and thus in particular $\satisfy{s}{\prstrat}{\MDP} \formula(\vect p)$.
    Moreover, by the assumption that $\MDP$ is clean w.r.t.\ $\parobj$ there exists a memoryless deterministic strategy $\univstrat$ such that $\satisfy{s'}{\univstrat}{\MDP} \univ(\parobj)$ for all $s' \in \states$.
    We now define a finite-memory strategy $\strat$ as follows:
    Play $\prstrat$ for $B$ steps, then play according to $\univstrat$ ad infinitum.
    $\strat$ uses finite-memory because it only has to ``count'' to $B$ and because both $\prstrat$ and $\univstrat$ are memoryless.
    Then clearly $\satisfy{s}{\strat}{\MDP} \formula(\vect p)$, and moreover $\satisfy{s}{\strat}{\MDP} \univ(\parobj)$ because parity is prefix-independent and $\strat$ switches to a winning strategy for $\univ(\parobj)$ after a finite prefix of the game.
    This concludes the proof.
\end{proof}

\subsection{Proof of Theorem~\ref{thm:approx-int}}


    \Cref{thm:approx-int}.\ref{it:approxIntFirst} follows directly from \Cref{lem:same-ach} (which says that $\formula(\vect p) \wedge \univ(\parobj)$ is achievable iff $\formula(\vect p)$ is achievable) and \Cref{thm:known_results}.\ref{it:mo_reach} (which asserts that checking if $\formula(\vect p)$ is achievable is in $\PTIME$).
    \Cref{thm:approx-int}.\ref{it:approxIntSecond} follows from the construction in the proof of \Cref{lem:same-ach}.

\section{Sure Parity and Lexicographic Reachability}
\label{sec:lex}

\newcommand{\llex}{<_{lex}}
\newcommand{\leqlex}{\leq_{lex}}
\newcommand{\lexval}{\vect p^*}
\newcommand{\lexvalcomp}{p^*}
\newcommand{\project}[2]{#1^{\pi #2}}

We are now interested in surely satisfying a parity objective while maximizing the probability of reaching $n$ target sets in lexicographic order.
Towards this goal we define the notion of \emph{projection} in \Cref{def:proj}, a concept also used extensively in \Cref{sec:nonstrict}.
We then propose an algorithm using projection and prove it correct.

    Recall that the \emph{lexicographic order} on $[0,1]^n$ is the total order defined as $\vect{x}\llex\vect{y}$ iff there is $k\in [n]$ such that (i) $x_k<y_k$ and (ii) $x_i=y_i$ for all $i\in[k{-}1]$.
    In the following, the order of our target sets $\target_1,\ldots,\target_n$ is relevant:
    For all $i,j \in [n]$,  $F_i$ appears before $F_j$ iff $F_i$ is more important than $F_j$.

One of the difficulties is that when considering the set of achievable points, the lexicographic supremum may not be achievable, i.e., the lexicographic maximum may not exist.
We now formally give the main result of this section.

\begin{restatable}{theorem}{mthlex}
\label{thm:mth}
\label{thm:lex}
    \begin{mdframed}[innertopmargin=+0.5em,innerbottommargin=+0.5em]
Let MDP $\MDP$ be clean w.r.t.\ parity objective $\parobj$ and target sets $\target_1,\ldots,\target_n \subseteq \states$, and let $s\in S$ be a state. Then:
\begin{itemize}
    \item It is decidable if $\lexval = \max_{lex} \{ \vect p \in [0,1]^n \mid  \satisfies{s}{\MDP}\univ(\parobj) \land \bigwedge_{i=1}^n \threshobj{\geq}{p_i}{\event \target_i}\}$ exists by solving $\mathcal{O}(\poly(n))$ many parity games (hence the problem is in \NPinter~\textnormal{\cite{brassard1979note}}).
    \item A witnessing strategy using at most $2|\MDP||\parobj|$ memory can be effectively constructed for the \yes-instances.
\end{itemize}
    \end{mdframed}
\end{restatable}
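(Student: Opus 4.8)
The plan is to reduce the lexicographic optimisation to a sequence of \emph{single}-objective optimisations, processing the targets in priority order $\target_1,\ldots,\target_n$ and using the \emph{projection} of \Cref{def:proj} to ``commit'' to the optimal value of each coordinate before moving to the next. Concretely, at stage $i$ I would maintain a clean MDP $\MDP_{i-1}$ (with $\MDP_0 = \MDP$) representing exactly those behaviours that are $\target_1,\ldots,\target_{i-1}$-optimal and surely satisfy $\parobj$; I then compute the supremum $p_i^*$ of the $i$-th coordinate over the achievable set of $\MDP_{i-1}$, decide whether this supremum is \emph{attained}, and, if so, project $\MDP_{i-1}$ onto the $\target_i$-optimal, parity-respecting behaviours to obtain $\MDP_i$ and recurse on $\target_{i+1},\ldots,\target_n$.

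The observation that makes each stage tractable is that, by \Cref{lem:same-ach}, the achievable sets of $\univ(\parobj)\wedge\bigwedge_i \threshobj{>}{x_i}{\event\target_i}$ and of the same formula without the parity constraint have the same closure; hence the coordinate supremum $p_i^*$ equals the \emph{parity-free} maximal reachability value of $\target_i$ in $\MDP_{i-1}$, which is computable in polynomial time by \Cref{thm:known_results}(\ref{it:opt_reach}). Thus no search over thresholds is required. Whether $p_i^*$ is actually attained — i.e.\ whether this boundary point survives the sure-parity constraint — is precisely a query $\satisfies{s}{\MDP_{i-1}} \univ(\parobj)\wedge\threshobj{\geq}{p_i^*}{\event\target_i}$, decided by a single parity game via \Cref{thm:known_results}(\ref{it:sure_parity_and_reach}). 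If some $p_i^*$ is not attained, then no strategy realises the prefix $(p_1^*,\ldots,p_i^*)$ whereas every achievable vector with prefix $(p_1^*,\ldots,p_{i-1}^*)$ has strictly smaller $i$-th coordinate, so $\lexval$ does not exist and the algorithm reports this. Otherwise \Cref{def:proj} yields again a clean $\MDP_i$ and the recursion continues. With $n$ stages, each solving a constant number of parity games (one for attainment, one for re-cleaning to restore \Cref{def:clean} after projection), the procedure solves $\mathcal{O}(\poly(n))$ parity games; since $\NPinter$ is closed under the required polynomial-time Turing reductions~\cite{brassard1979note}, the decision problem lies in $\NPinter$.

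For the witnessing strategy I would compose the memoryless reachability witnesses accumulated by the successive projections with a memoryless deterministic sure-parity strategy supplied by cleanliness, following the switching idea in the proof of \Cref{lem:same-ach}: play the reachability part, and on the sink-free residual switch to the parity strategy, which is played on the product with $\parobj$. A single mode bit governs the switch, and the product accounts for the $2|\MDP||\parobj|$ memory bound.

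The main obstacle I anticipate is establishing that the projection of \Cref{def:proj} behaves correctly across stages: that $\MDP_i$ is again clean w.r.t.\ $\parobj$ and the remaining targets (so that \Cref{lem:same-ach} and \Cref{thm:known_results} may be reinvoked), and that its achievable set is exactly the slice $\{\vect p \in \achievable{\MDP_{i-1}}{s}{\varphi(\vect x)} \mid p_i = p_i^*\}$ of the previous one. This slicing argument is delicate precisely because, unlike the interior points handled in \Cref{sec:strict}, the attained optimum $p_i^*$ lives on the \emph{boundary} of the Pareto frontier, where sure parity can genuinely fail; showing that the attainment query of \Cref{thm:known_results}(\ref{it:sure_parity_and_reach}) detects this correctly and that projection preserves it is the crux of the argument.
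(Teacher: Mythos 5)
Your high-level skeleton matches the paper's: process the targets in lexicographic order, use the projection of \Cref{def:proj} to commit to each coordinate, and decide everything by $\mathcal{O}(\poly(n))$ parity games (\Cref{proc:mth}). However, two of your concrete mechanisms do not work as stated.

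First, your per-stage bookkeeping uses the wrong condition, and this is where your own ``crux'' actually breaks. The paper never poses the query $\univ(\parobj)\wedge\threshobj{\geq}{p_i^*}{\event\target_i}$; after each projection it prunes every state failing the \emph{joint} condition $\univ(\parobj)\wedge\threshobj{=}{1}{\event\target}$, and the only existence question it ever asks is whether the fully pruned $\MDP_n$ admits a strategy for $\univ(\parobj)\wedge\threshobj{=}{1}{\event\target}$ from $s$. The conjunct $\threshobj{=}{1}{\event\target}$ is precisely what links stage $i$ to stages $1,\dots,i-1$: by \Cref{lem:mdp_proj_c}, a strategy of the projected MDP is guaranteed to attain the earlier coordinates' optima only if it almost-surely reaches $\target$; a strategy of $\MDP_{i-1}$ that ``dawdles'' forever in non-sink states can satisfy your query while leaking probability mass of earlier coordinates, so a \yes\ answer to your query does not produce (nor imply the existence of) a witness for the lexicographic prefix. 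Moreover, your ``re-cleaning to restore \Cref{def:clean}'' is strictly weaker than the paper's pruning, because \Cref{def:clean} requires $\univ(\parobj)$ and $\threshobj{\geq}{1}{\event\target}$ \emph{separately}, and these can both hold at a state where the joint condition fails. Concretely: take $x_0$ with priority $0$, $x_1$ with priority $1$, a sink $f\in\target$ with priority $0$, and actions $x_0\to x_1$ (action $a$), $x_0 \to x_0$ (action $b$), and from $x_1$ an action $c$ going to $x_0$ and to $f$ with probability $\nicefrac 1 2$ each. Playing $b$ forever surely satisfies parity; alternating $a,c$ reaches $f$ almost surely; but any sure-parity strategy must, along the history that always returns to $x_0$, eventually play $b$ with probability $1$ forever, leaving positive probability of never reaching $f$. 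So your re-cleaned MDPs retain states that the paper's algorithm removes, the slice invariant you identify as the main obstacle is false for them, and both your value $p_i^*$ and your attainment test then lose their connection to the lexicographic problem (the value equalities and ``repair'' arguments you would need all rely on a parity-and-almost-sure continuation being available from every surviving state, which is exactly what the joint pruning guarantees and separate cleaning does not).

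Second, the witness construction is wrong. The one-switch strategy (``play the reachability part, then flip a mode bit and follow the parity strategy'') is the construction from \Cref{lem:same-ach}, and it only achieves \emph{strict} thresholds, i.e.\ interior points: switching at a finite time forfeits the mass that would have reached the targets afterwards, so it cannot attain the exact optimum, which lies on the boundary of the Pareto frontier. The paper's witness is instead a strategy for $\univ(\parobj)\wedge\threshobj{=}{1}{\event\target}$ in $\MDP_n$, obtained from \Cref{thm:known_results}.\ref{it:sure_parity_and_reach}; such strategies must in general interleave reachability attempts with parity maintenance forever (e.g.\ with $q$ of priority $2$, $r$ of priority $1$, an action from $r$ reaching $f$ or staying at $r$ with probability $\nicefrac 1 2$ each, and actions $q \to r$ and $r \to q$: sure parity plus almost-sure reachability needs infinitely many bounded ``rounds'', and no strategy with a single switch achieves it). The $2|\MDP||\parobj|$ memory bound is derived in the paper through the chain parity-B\"uchi $\to$ mean-payoff parity $\to$ energy parity games, not from a product-with-a-mode-bit argument, so that part of your claim also lacks a proof.
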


Our approach considers every target set $F_i$ one by one, following the lexicographic order.
The general idea is to successively remove all transitions that do not achieve the maximal probability to reach $F_i$. 
Thus, after having pruned transitions w.r.t.\ the first $i$ target sets, any strategy that maximizes the probability to reach the set $F = F_1 \cup \ldots \cup F_n$ also maximizes the probabilities of reaching $F_1, \ldots, F_i$ lexicographically~\cite{DBLP:conf/cav/ChatterjeeKWW20}. 
In order to find the maximal probability to achieve a single objective, we adapt the notion of \emph{projection} from~\cite{etessami2007multi,DBLP:conf/atva/ForejtKP12}. 
The main difference is that we keep reachability objectives, instead of converting them into reward objectives, enabling us to use existing results~\cite{icalp2017} on combinations of sure and almost-sure objectives.

We define the MDP $\MDP^{\pi \vect{v}}$, the projection of MDP $\MDP$ on a non-zero vector $\vect{v}\in[0,1]^n$ where we can freely assume $\|\vect v\|_1 = 1$. 
Intuitively, to obtain $\MDP^{\pi \vect{v}}$, we consider a $k$-dimensional face of the Pareto frontier of $\bigwedge_{i=1}^n \Pr_{\geq p_i}(\event F_i)$, maximal in the direction $\vect{v}$. This is thus an intersection with a hyperplane, and defines a face of dimension $k$.
We remove all available actions that are used in none of the strategies achieving this face of dimension $k$, i.e. we remove all non-optimal actions when trying to maximize in the direction $\vect{v}$.
Our purpose is to obtain a new MDP, in which every strategy that almost-surely reaches a final state in $F$ also maximizes the probability to reach these states weighted with the direction $\vect{v}$.
We remark that in this new MDP, the parity condition $\parobj$ may not be surely satisfied  from every state; we will thus need to address this condition later.

\begin{definition}[Projection]
\label{def:proj}
        Let $\MDP = \MDPtuple$ be clean w.r.t.\ $\target_1,\ldots,\target_n \subseteq \states$.
    The \emph{projection} $\project{\MDP}{\vect v}$ of $\MDP$ in direction $\vect v \geq \vect 0$ with $\|\vect v\|_1 = 1$, is defined in two steps:
    (1) Let $\MDP' = (\states', \actions, \tprob')$ be an MDP where
    \begin{itemize}
        \item $\states' = \states \cup \{\projSinkState\}$ where $\projSinkState$ is a fresh sink state, and
        \item $\tprob'$ is defined similar to $\tprob$ with the following modifications (let $\target = \bigcup_{i=1}^n \target_i$):
        \begin{itemize}
            \item For all $s \in \states \setminus \target$, $a \in \actions(s)$, and $s' \in \target$, we set\\
            $
                \tprob'(s,a,s') = \tprob(s,a,s') \cdot \sum_{i : s'\in \target_i} v_i$
                and
               $ \tprob'(s,a,\projSinkState) = 1 - \sum_{s'' \in \states} \tprob'(s,a,s'')
                .
            $
            \item $\tprob'(\projSinkState, a, \projSinkState) = 1$, where $a \in \actions$ is arbitrary.
        \end{itemize}
    \end{itemize}
    (2) For each state $s \in \states'$, let $y_s = \max_{\strat} \prob_s^{\MDPtoMC{\MDP'}{\strat}}(\event \target)$ be the maximum probability to reach $\target$ from $s$ in $\MDP'$.
    The MDP $\project{\MDP}{\vect v}$ is then obtained from $\MDP'$ by removing all actions $a \in \actions(s)$ that do not satisfy $y_s = \sum_{s' \in \states'} \tprob'(s,a,s') \cdot y_{s'}$.
\end{definition}

\begin{example}
    The MDP in \Cref{fig:mdp-pareto-lex} results from projecting the MDP from \Cref{fig:mdp-pareto} on $\vect v = (0,1)$ ($\projSinkState$ is not reachable).
    Only the actions reaching the \tcolBName\ target $F_2$ with maximal probability remain.
\end{example}

\begin{figure}[t]
    \centering
    \adjustbox{max width=\textwidth}{
        \begin{tikzpicture}[on grid, node distance=17.5mm and 20mm]
        \node[player,initial, initial left,initial text=] (s) at (0,0) {$1$} ;
        \node[player,above=of s] (s1) {$1$} ;
        \node[player,below=of s] (s2) {$1$} ;
        \node[player,accepting,right=of s,fill=\tcolA] (r1) {$F_1$} ;
        \node[player,accepting,right=of r1,fill=\tcolB] (r2) {$F_2$} ;
        \node[player,accepting,left=of s,double color fill={\tcolA}{\tcolB}] (r12)  {$F_1\& F_2$} ;
        
        \path[-latex]  (s) edge node[left] { $pair_1,1$} (s1)
        (s) edge[,dashed] node[left] { $pair_2,1$} (s2)
        (s1) edge[bend left=25] node[above right] { $\textcolor{green!50!black}{b},1$} (r2)
        (s2) edge[bend right=25] node[below right, align=left] { $\textcolor{red}{a},\nicefrac 1 3$} (r2)
        (s2) edge[bend left=20] node[below left, align=left] { $\textcolor{red}{a}, \nicefrac 1 6$} (r12)
        (s2) edge [loop below] node[left, pos=0.8] { $\textcolor{red}{a}, \nicefrac 1 2$} (s2)
        (r12) edge [loop left] node[above, pos=0.8] {$*,1$} (r12)
        (r1) edge [loop right] node[above, pos=0.2] { $*,1$} (r2)
        (r2) edge [loop right] node[above, pos=0.2] { $*,1$} (r2)
        ;
        \end{tikzpicture}
        \hspace{2em}
        \begin{tikzpicture}[scale=1.0]
            \begin{axis}[
                clip=false,
                axis lines=center,
                xmin = -0.025, xmax = 1.1,
                ymin = -0.025, ymax = 1.1,
                x=40mm, y=40mm,
                xtick={0,0.2,...,1}, ytick={0,0.2,...,1},
                xlabel={$\prob(\event \tikz[baseline=-0.7ex]{ \node[state,accepting,scale=0.33,fill=\tcolA] {} })$}, xlabel style={right},
                ylabel={$\prob(\event \tikz[baseline=-0.7ex]{ \node[state,accepting,scale=0.33,fill=\tcolB] {} })$}, ylabel style={left},
            ]
            \addplot[color=black,dashed] coordinates { (0.33,0) (0.33,1)};
            \addplot[color=black,dashed] coordinates { (0,1) (0.33,1)};
            \node[label={45:{}},circle,fill,inner sep=2pt] at (axis cs:0,1) {};
            \end{axis}
        \end{tikzpicture}
    }
    \caption{
        The MDP from \Cref{fig:mdp-pareto} projected on $\vect v = (\colorbox{\tcolA}{0},\colorbox{\tcolB}{1})$.
        \emph{Right:} The set of achievable points w.r.t.\ $\univ(\parobj) \land \threshobj{\geq}{x_1}{\event \colorbox{\tcolA}{$\target_1$} } \land \threshobj{\geq}{x_2}{\event \colorbox{\tcolB}{$\target_2$}}$ is $[0,\nicefrac 1 3) \times [0,1) \cup \{(0,1)\}$; the lexicographic maximum for order \colorbox{\tcolB}{$\target_2$}, \colorbox{\tcolA}{$\target_1$} is thus $\vect{p}^* = (\colorbox{\tcolB}{$1$},\colorbox{\tcolA}{$0$})$.
    }
    \label{fig:mdp-pareto-lex}
\vspace{-15pt}
\end{figure}

Given $\MDP$, $\target_1,\ldots,\target_n$ and $\vect v$, it is clear from \Cref{def:proj} that we can construct the projection $\project{\MDP}{\vect v}$ in polynomial time.
Note that strategies in $\project{\MDP}{\vect v}$ are still valid in $\MDP$, but the converse is not necessarily the case as projection removes actions.

\subsection{Proof of \Cref{thm:lex}}
\label{app:lex}

We start with the following Lemma.

\begin{lemma}
\label{lem:mdp_proj}
Let MDP $\MDP = \MDPtuple$ be clean w.r.t.\ $\target_1,\ldots,\target_n \subseteq \states$. For every $\vect{v}\in [0,1]^n$, the MDP $\MDP^{\pi \vect{v}}$ is computable.
\end{lemma}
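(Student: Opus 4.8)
The plan is to verify that each of the two steps of \Cref{def:proj} is effective and that together they produce a genuine finite MDP; computability then follows because every quantity involved is either pure rational arithmetic or the value of a reachability problem solvable by a cited result. First I would check that step~(1) yields a well-formed MDP $\MDP'$. The only non-trivial point is that $\tprob'$ is a valid transition probability function, i.e., that for every $s \in \states \setminus \target$ and $a \in \actions(s)$ the numbers $\tprob'(s,a,\cdot)$ are non-negative and sum to $1$. Summation to $1$ is immediate from the defining equation $\tprob'(s,a,\projSinkState) = 1 - \sum_{s'' \in \states}\tprob'(s,a,s'')$, which sets the fresh sink mass to the exact residual. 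For non-negativity of this residual it suffices to show $\sum_{s'' \in \states}\tprob'(s,a,s'') \le 1$, and this is precisely where the normalization $\|\vect v\|_1 = 1$ (together with $\vect v \ge \vect 0$) is used: for each $s' \in \target$ we have $\sum_{i : s' \in \target_i} v_i \le \sum_{i \in [n]} v_i = 1$, so every target transition is only scaled \emph{down}, giving $\sum_{s'' \in \states}\tprob'(s,a,s'') \le \sum_{s'' \in \states}\tprob(s,a,s') = 1$. The remaining entries (untouched transitions to non-target states, the sink self-loop) are obviously valid, so $\MDP'$ is a finite MDP obtained by finitely many rational operations.

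Next I would handle step~(2). Since $\MDP'$ is a finite MDP, \Cref{thm:known_results}(\ref{it:opt_reach}) guarantees that the optimal reachability values $y_s = \max_\strat \prob_s^{\MDPtoMC{\MDP'}{\strat}}(\event \target)$ exist, are attained by a memoryless deterministic strategy, and are computable (in polynomial time, and rational whenever $\vect v$ is rational). With the vector $(y_s)_{s \in \states'}$ in hand, the pruning reduces to evaluating, for each state $s$ and each action $a \in \actions(s)$, the equality $y_s = \sum_{s' \in \states'}\tprob'(s,a,s') \cdot y_{s'}$ and discarding the actions that fail it — a finite collection of exact equality tests. I would also point out that at least one action survives at every state: the Bellman optimality equations for maximal reachability ensure that an optimal action attains $y_s = \sum_{s'}\tprob'(s,a,s') y_{s'}$ (and target/sink states trivially satisfy the equality via their self-loop). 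Hence $\project{\MDP}{\vect v}$ still satisfies $\actions(s) \ne \emptyset$ for all $s$ and is a legitimate MDP.

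The one point requiring care concerns the meaning of ``computable'' when $\vect v$ has irrational components, since then $\tprob'$ and the $y_s$ are irrational and effectiveness depends on how $\vect v$ is represented. In every application of this lemma the direction $\vect v$ is rational — it is a facet normal of the rational Pareto polytope — which is consistent with the paper's standing assumption that all numerical inputs are rationals encoded in binary; under this assumption every step above is exact rational arithmetic, so $\project{\MDP}{\vect v}$ is computable. I expect the argument to be routine bookkeeping, the genuinely essential verification being the non-negativity check in step~(1), which is exactly the place where the hypothesis $\|\vect v\|_1 = 1$ is needed.
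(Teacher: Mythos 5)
Your proof is correct, but it takes a genuinely different route from the paper's. You treat the lemma as a pure effectiveness statement about \Cref{def:proj}: you verify that step~(1) produces a well-formed MDP (non-negativity of the residual mass on $\projSinkState$, which is exactly where $\vect v \geq \vect 0$ and $\|\vect v\|_1 = 1$ enter), then invoke \Cref{thm:known_results}(\ref{it:opt_reach}) to compute the values $y_s$ in polynomial time, and finish with exact rational equality tests for the pruning, observing that Bellman optimality guarantees every state keeps at least one action. The paper's proof instead argues semantically: it notes that strategies of $\MDP$ and $\MDP'$ coincide, establishes that maximizing $\prob(\event\target)$ in $\MDP'$ corresponds to maximizing $\vect{x} \mapsto \vect{x}\cdot\vect{v}$ over the achievable vectors of $\MDP$, and then asserts that the actions used by such optimal strategies are computable. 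In effect, the paper front-loads part of the semantic correspondence that is really the content of \Cref{lem:mdp_proj_c}, while being terse about the actual computation; your version cleanly separates the two concerns, makes the computability argument explicit (including the polynomial-time bound), and supplies details the paper omits entirely --- the validity of $\tprob'$, the non-emptiness of $\actions(s)$ after pruning, and the caveat that computability presupposes a rational (or otherwise effective) representation of $\vect v$, which indeed holds in every use of the lemma since the projection directions are facet normals of a rational polytope.
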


\begin{proof}
To decide which actions to prune in $\MDP$, we consider MDP $\MDP'$ as in \Cref{def:proj}.
We remark that strategies in $\MDP$ and in $\MDP'$ coincide since the new sink state $\projSinkState$ of $\MDP'$ only appears in transitions of $\MDP$ already going to a sink, and every other state and transition is identical in $\MDP$ and $\MDP'$. 
In addition, for all strategies $\sigma$, given the maximal $\vect{p}$ such that $\satisfy{s}{\sigma}{\MDP'} \Pr_{\geq \vect{p}}(\event F)$, and among the $\vect{x}\in[0,1]$ such that $\satisfy{s}{\sigma}{\MDP}\limwedgeone{i \in [n]} \Pr_{\geq x_i}(\event F_i)$, any $\vect{x}$ that maximizes  $\vect{x}\mapsto \vect{x}\cdot\vect{v}$ satisfies $\vect{p}=\vect{x}\cdot\vect{v}$. 
Indeed, any transition of $\MDP$ to some $s'\in F_i$ with probability $x$, goes to $s'$ with probability $x\cdot v_i$ in $\MDP'$. 
Thus $\satisfy{s}{\sigma}{\MDP'} \Pr_{\geq p^*}(\event F)$ where $p^*$ is the maximum achievable value if and only if both $\vect{x}\mapsto \vect{x}\cdot\vect{v}$ is maximal and $\satisfy{s}{\sigma}{\MDP}\limwedgeone{i \in [n]} \Pr_{\geq x_i}(\event F_i)$ for some $\vect{x}\in[0,1]$.

For all states, we can compute all the actions that could be taken by some strategy $\sigma$ such that $\satisfy{s}{\sigma}{\MDP'} \Pr_{\geq x^*}(\event F)$. 
These actions coincide with the actions taken by a strategy $\sigma$ such that $\vect{x}\mapsto \vect{x}\cdot\vect{v}$ is maximal, and $\satisfy{s}{\sigma}{\MDP}\limwedgeone{i \in [n]} \Pr_{\geq x_i}(\event F_i)$ for some $\vect{x}\in[0,1]$. 
By removing all the other actions from every state, we obtain $\MDP^{\pi \vect{v}}$.
\end{proof}

It remains to show that any strategy that almost-surely reaches a target in $\MDP^{\pi \vect{v}}$ is also maximizing $\vect{x}\mapsto \vect{x}\cdot\vect{v}$ in the original MDP $\MDP$. Since these MDPs share strategies maximizing $\vect{x}\mapsto \vect{x}\cdot\vect{v}$ by construction, this is straightforward. We give the formal proof next.

\begin{restatable}{lemma}{lemproj}(Key property of projection)
\label{lem:mdp_proj_c}
Let $\MDP= \MDPtuple$ be clean w.r.t.\ $\target_1,\ldots,\target_n,\projSinkState \subseteq \states$, and let $\vect v \geq \vect 0$, $\|\vect v\|_1 = 1$.
Then for all strategies $\sigma$ of $\project{\MDP}{\vect v}$, the following holds: 
$\satisfy{s}{\sigma}{\MDP^{\pi \vect{v}}} \threshobj{=}{1}{\event \target}$ iff there exists $\vect{x}\in [0,1]^n$ such that 
(i) $\vect{x} \cdot \vect{v}$ is maximal among the achievable $\vect{x}$, and
(ii) $\satisfy{s}{\sigma}{\MDP}\bigwedge_{i = 1}^n \threshobj{\geq}{x_i}{\event \target_i}$.
\end{restatable}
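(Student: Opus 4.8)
The crux is a \emph{mass-transfer identity} that converts the weighted reachability realized in the projected chain into an ordinary reachability probability; the rest is then a reformulation of optimal reachability. Throughout I would write $\target = \bigcup_{i=1}^n \target_i$, let $\MDP'$ and the values $y_s$ be as in \Cref{def:proj}, and for a strategy $\sigma$ set $\vect x(\sigma) = (x_1,\ldots,x_n)$ with $x_i = \prob_s^{\MDPtoMC{\MDP}{\sigma}}(\event \target_i)$. Since all of $\target$ and $\projSinkState$ are sinks, I read ``$\sigma$ reaches a target'' in the left-hand side as almost-sure absorption in the terminal set $\target\cup\{\projSinkState\}$, matching the cleanness hypothesis w.r.t.\ $\target_1,\ldots,\target_n,\projSinkState$. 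The first step is to establish that for every $\sigma$,
\[
    \prob_s^{\MDPtoMC{\MDP'}{\sigma}}(\event \target) \;=\; \vect x(\sigma)\cdot \vect v .
\]
This is exactly the computation already sketched in \Cref{lem:mdp_proj}: the dynamics of $\MDP'$ agree with $\MDP$ until a target state $s'$ is first entered, at which point the incoming mass is scaled by $\sum_{i:\,s'\in\target_i} v_i$ and the remainder diverted to $\projSinkState$. Summing over the first-hit target state $s'$, using that targets are sinks so that $x_i = \sum_{s'\in\target_i}(\text{prob.\ of first hitting } s')$, and exchanging the two sums gives the identity. In particular $y_s = \max_\sigma \vect x(\sigma)\cdot\vect v =: p^*$, the largest achievable value of $\vect x \cdot \vect v$ from $s$.

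Next I would reformulate the right-hand side as plain value-optimality of $\sigma$ in $\MDP'$. Because $\vect v \ge \vect 0$, if $\sigma$ satisfies $\bigwedge_i \threshobj{\geq}{x_i}{\event\target_i}$ for some $\vect x$ with $\vect x\cdot\vect v = p^*$, then $\vect x(\sigma) \ge \vect x$ componentwise, so $\vect x(\sigma)\cdot\vect v \ge p^*$; as $\vect x(\sigma)$ is itself achievable, $\vect x(\sigma)\cdot\vect v = p^*$. The converse is immediate on taking $\vect x = \vect x(\sigma)$. Hence the right-hand side holds iff $\vect x(\sigma)\cdot\vect v = p^*$, i.e.\ (by the identity, and since $\project{\MDP}{\vect v}$ and $\MDP'$ induce the same chain on the surviving actions) iff $\prob_s^{\MDPtoMC{\project{\MDP}{\vect v}}{\sigma}}(\event\target) = y_s$: the right-hand side says precisely that $\sigma$ is value-optimal for reaching $\target$.

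It then remains to match value-optimality with almost-sure absorption in $\target\cup\{\projSinkState\}$. Since \Cref{def:proj} retains only actions with $y_s = \sum_{s'}\tprob'(s,a,s')\,y_{s'}$, the process $t\mapsto y_{S_t}$ is a bounded martingale along every play of $\MDPtoMC{\project{\MDP}{\vect v}}{\sigma}$, regardless of which optimal action is picked. By optional stopping (using $y\equiv 1$ on $\target$ and $y(\projSinkState)=0$),
\[
    y_s \;=\; \prob_s^\sigma(\event\target) \;+\; \sum_{C}\prob_s^\sigma(\text{trapped in }C)\cdot y_C ,
\]
the sum running over non-terminal end components $C$, on each of which $y$ is constant $=y_C$. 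For the easy implication (left to right, the direction the surrounding text calls straightforward) this is enough: if $\sigma$ almost surely reaches $\target\cup\{\projSinkState\}$ the trapping term vanishes, so $\prob_s^\sigma(\event\target)=y_s$ and the right-hand side follows. The converse is the step I expect to be the main obstacle: optimality only forces the trapping term to be $0$, which --- as the $y_C$ are non-negative --- prevents lingering in end components of \emph{positive} value but a priori permits lingering in a zero-value non-terminal end component. Excluding this is where I would invoke cleanness w.r.t.\ $\target_1,\ldots,\target_n,\projSinkState$: from every state the extended terminal set is almost-surely reachable, and any zero-value end component surviving the pruning consists of states from which, in $\MDP$, the almost-sure escape to $\target$ is entirely rescaled into $\projSinkState$; restricting to this escaping behaviour (equivalently, contracting each zero-value end component onto $\projSinkState$) makes value-optimality coincide with almost-sure absorption and closes the equivalence.
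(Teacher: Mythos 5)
Your reduction of the lemma to value-optimality in the projected MDP is sound and, up to the last step, more rigorous than the paper's own argument: the mass-transfer identity $\prob_s^{\MDPtoMC{\MDP'}{\sigma}}(\event\target) = \vect x(\sigma)\cdot\vect v$ is exactly the computation underlying \Cref{lem:mdp_proj}, your downward-closure reformulation of the right-hand side as ``$\vect x(\sigma)\cdot\vect v$ is maximal'' is correct, and your martingale/optional-stopping argument settles the left-to-right implication cleanly (the paper handles that direction via the quotient $\prob(\event\target)/\prob(\event(\target\cup\{\projSinkState\}))$, which presupposes the same almost-sure absorption).

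The genuine gap is the converse, and you located it exactly. Your proposed repair --- ``contracting each zero-value end component onto $\projSinkState$'' --- establishes the property for a \emph{modified} strategy (or modified model), not the stated biconditional for the given $\sigma$; and no argument can close it, because that direction is false as written. Concretely, take $\states = \{s,u,f_1,f_2,\projSinkState\}$ with $\target_1 = \{f_1\}$, $\target_2 = \{f_2\}$, one action at $s$ moving to $f_1$ and to $u$ with probability $\nicefrac 1 2$ each, and two actions at $u$: a self-loop $b$ and an action $c$ moving to $f_2$ with probability $1$. This MDP is clean, and for $\vect v = (1,0)$ the projection prunes nothing: $y_u = 0$, so both actions at $u$ satisfy the value equation. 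The strategy $\sigma$ that plays $b$ forever reaches $\target_1$ with probability $\nicefrac 1 2$ and $\target_2$ with probability $0$; since no strategy reaches $f_1$ with probability above $\nicefrac 1 2$, the vector $(\nicefrac 1 2, 0)$ maximizes $\vect x\cdot\vect v$ and the right-hand side holds, yet in $\project{\MDP}{\vect v}$ the strategy $\sigma$ is absorbed in $\target\cup\{\projSinkState\}$ with probability only $\nicefrac 1 2$, so the left-hand side fails under either reading of $\target$. For what it is worth, the paper's own proof of this direction breaks at the same spot: it asserts $\satisfy{s}{\sigma'}{\project{\MDP}{\vect v}} \threshobj{=}{1}{\event\target}$ without justification, and that assertion is false in the example above. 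So your instinct that this is the main obstacle is right, but the resolution must be a repair of the statement --- restrict the biconditional to strategies that almost surely reach the sinks $\target\cup\{\projSinkState\}$, or weaken the problematic direction to an existence claim (\emph{some} strategy attaining the maximal $\vect x\cdot\vect v$ is almost surely absorbed) --- which is also all that the later uses in \Cref{thm:lex} and \Cref{lem:vert} actually need.
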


\begin{proof}
    We first show the left to right direction. 
    Let $\sigma$ such that  $\satisfy{s}{\sigma}{\MDP^{\pi \vect{v}}}\Pr_{= 1}(\event F)$. 
    Let $\MDP'$ be as in Definition~\ref{def:proj}, hence $\satisfy{s}{\sigma}{\MDP'}\Pr_{= 1}(\event (F \cup \projSinkState))$. There exists some optimal value $p^*$ and some memoryless strategy $\sigma'$ such that $\satisfy{s}{\sigma'}{\MDP}\Pr_{\geq p^*}(\event F)$. Since only actions maximizing the probability to reach $F$ remain in $\MDP^{\pi \vect{v}}$, we have $p^* =\prob_{\MDP'[\sigma]}(\event F)/\prob_{\MDP'[\sigma]}(\event (F \cup \projSinkState)) = \prob_{\MDP'[\sigma]}(\event F)$. 
    Thus $\satisfy{s}{\sigma}{\MDP}\Pr_{\geq p^*}(\event F)$. 
    As a consequence, there exists some $\vect{x}$ such that $\vect{x}\cdot \vect{v}$ is maximal among the achievable $\vect{x}$, and $\satisfy{s}{\sigma}{\MDP^{\pi \vect v}}\limwedgeone{i \in [n]} \Pr_{\geq x_i}(\event F_i)$.
    
    For the right to left direction, let $\sigma$ such that $\satisfy{s}{\sigma}{\MDP}\limwedgeone{i \in [n]} \Pr_{\geq x_i}(\event F_i)$ with $\vect{x}.\vect{v}$ maximal among the achievable $\vect{x}$. Since $\MDP$ is clean there is some $\sigma'$ such that $\satisfy{s}{\sigma'}{\MDP}\Pr_{=1}(\event F)$. 
    For every vector $\vect{x}'$ such that  $\satisfy{s}{\sigma'}{\MDP}\limwedgeone{i \in [n]} \Pr_{\geq x'_i}(\event F_i)$ we have $\vect{x}'\cdot \vect{v}\leq \vect{x}\cdot \vect{v}$. 
    Thus by considering $\MDP'$ we have for every $p\in [0,1]$, if $\satisfy{s}{\sigma'}{\MDP'}\Pr_{\geq p}(\event F)$ then $\satisfy{s}{\sigma}{\MDP'}\Pr_{\geq p}(\event F)$. 
    Since $\MDP^{\pi \vect{v}}$ is obtained by removing states of $\MDP'$ that neither $\sigma$ nor $\sigma'$ visits, and since $\satisfy{s}{\sigma'}{\MDP^{\pi \vect{v}}}\Pr_{=1}(\event F)$ we have $\satisfy{s}{\sigma}{\MDP^{\pi \vect{v}}}\Pr_{=1}(\event F)$.
\end{proof}

We can now give an algorithm computing a strategy achieving a sure parity condition while maximizing lexicographically ordered reachability conditions. Remark that in this section we only need projection on unit vectors, the general construction is only used in \Cref{sec:nonstrict}. 
With Algorithm~\ref{proc:mth}, we can then prove Theorem~\ref{thm:mth}.

\begin{algorithm}[t]
    \caption{Sure parity and lexicographic reachability}
    \label{alg:lex}
    \label{proc:mth}
    \textbf{Input:} MDP $\MDP$ -- clean w.r.t.\ parity objective $\parobj$ and $\target_1,\ldots,\target_n \subseteq \states$, a state $s \in \states$\\
    \textbf{Output:} If $\lexval = \max_{lex} \{ \vect p \in [0,1]^n \mid  \satisfies{s}{\MDP}\univ(\parobj) \land \bigwedge_{i=1}^n \threshobj{\geq}{p_i}{\event \target_i}\}$ exists, then the output is a witness strategy $\strat$, otherwise the output is \texttt{false}.
    \begin{algorithmic}[1]
        \State \label{proc-mth-1}$\MDP_0 \gets \MDP$ 
        \State $\target \gets \bigcup_{i=1}^n \target_i$        \For {$i$ from $1$ to $n$} 
        \State \label{proc-mth-2}Compute $\MDP_{i-1}^{\pi \vect{e}_i}$  \Comment{See \Cref{def:proj}.}
        \State \label{proc-mth-3} $\MDP_i \gets$ result of pruning all states not satisfying $\univ(\parobj) \land \threshobj{=}{1}{\event \target}$ in $\MDP_{i-1}^{\pi \vect{e}_i}$.
        \EndFor
        \If{$s$ is not a state of $\MDP_n$}
        \Return \texttt{false}
        \Else{}
        \Return \label{proc-mth-4}$\sigma$ such that $\satisfy{s}{\sigma}{\MDP_n}\univ(\parobj)\wedge  \Pr_{= 1}(\event \target)$ \Comment{By \Cref{thm:mth}.}
        \EndIf
    \end{algorithmic}
\end{algorithm}

\begin{example}
\label{ex:algo1}
    We follow Algorithm~\ref{proc:mth} and project the MDP of Figure~\ref{fig:mdp-pareto} on $e_1=(1,0)$. After this projection, there only remains the uppermost state with action $\textcolor{red}{a}$, associated to point $(1,1/3)$. This projected MDP does not satisfy $\univ(\parobj)$ and we conclude that there is no lexicographically optimal point for the order $(e_1,e_2)$.    
    
    We now show what happens if we start projecting on $e_2=(0,1)$. We obtain the MDP in Figure~\ref{fig:mdp-pareto-lex}, with both full and dashed transitions: only the points that reach the blue state $F_2$ with maximal probability remain. Since the lowermost state does not satisfy the parity condition, it refutes $\univ(\parobj) \wedge \Pr_{= 1}(\event F)$, and we also prune it, only keeping the full transitions. We thus only keep point $(0,1)$ that can be achieved while satisfying parity condition $\parobj$, and is thus optimal for this order.
\end{example}


\begin{proof}[of \Cref{thm:lex}]
We follow Algorithm~\ref{proc:mth}. 
During iteration $i$, in line~\ref{proc-mth-2} we project the current MDP $\MDP_{i-1}$ on the next target set $F_i$, obtaining $\MDP_{i-1}^{\pi \vect{e}_i}$. 
By Lemma~\ref{lem:mdp_proj_c}, a strategy satisfies $\Pr_{= 1}(\event F)$ in $\MDP_{i-1}^{\pi \vect{e}_i}$ iff it maximizes $p$ such that $\Pr_{\geq p}(\event F_i)$ in $\MDP_{i-1}$.
We then prune all states that do not satisfy the conjunction with the parity condition $\univ(\parobj) \wedge \Pr_{= 1}(\event F)$. 

If the algorithm outputs a strategy $\sigma$, it follows $\satisfy{s}{\sigma}{\MDP_n}\univ(\parobj)\wedge \Pr_{= 1}(\event F)$ and is maximal for $<_{lex}$ on $\MDP$. 
Let $\vect{x}$ be the largest vector such that $\satisfy{s}{\sigma}{\MDP}\limwedgeone{i \in [n]} \Pr_{\geq x_i}(\event F_i)$. 
We now take any $\sigma'$, and the associated largest $\vect{x}'$ and  $m\in[n]$ such that for all  $i\in{[m]_0}$ we have $x_i = x'_i$, but not for $m+1$, and such that $\satisfy{s}{\sigma'}{\MDP}\limwedgeone{i \in [n]} \Pr_{\geq x'_i}(\event F_i)$ (if such a $\sigma'$ does not exist, all strategies yield the same value and so $\sigma$ is optimal). 
Thus both $\sigma$ and $\sigma'$ are well defined on $\MDP_i$ and since $\sigma$ is a strategy of the projected MDP $\MDP_m^{\pi \vect{e}_{m+1}}$, by \Cref{lem:mdp_proj_c} we have that $x'_{i+1}\leq x_{i+1}$, proving that $\sigma$ is maximal for $<_{lex}$ on $\MDP$.

If the output is \texttt{false}, since only Step~\ref{proc-mth-3} can prune states, this means that there was some strategy lexicographically maximizing the $k$ first components and satisfying $\univ(\parobj)$, but of all these strategies, those also maximizing the $(k+1)$th component did not satisfy $\univ(\parobj)$, hence there was no lexicographically optimal strategy satisfying $\univ(\parobj)$.

We show that Algorithm~\ref{proc:mth} solves a polynomial number of parity games of size polynomial in $|\Gamma|$. 
Indeed, $|\MDP_{i-1}^{\pi \vect{v}_i}|$ only has one more state than $\Gamma$ and at most $|\Gamma|$ additional transitions (that may lead from states originally in $\Gamma$ to the new state $\projSinkState$). 
We then remove all states that do not satisfy $\univ(\parobj) \wedge \Pr_{= 1}(\event F)$, which can be done by solving a polynomial number of parity games~\cite{icalp2017}. 
We do this step $n$ times. 
Since there are $n$ targets in $\Gamma$ we have $n\leq |\Gamma|$. 
Thus we end up solving a number of parity games polynomial in $\Gamma$.

 If the output is \texttt{true}, using~\cite{icalp2017} we can generate optimal finite-memory strategies. We give a precise bound on the amount of memory needed: such a finite-memory strategy is obtained through a reduction to a parity-B\"uchi game, which, as shown in the proof of Lemma~3 of~\cite{almagor2016minimizing}, can be reduced to a mean-payoff parity game with a strict threshold of $0$. It has been shown in Theorem~4 of~\cite{DBLP:journals/tcs/ChatterjeeD12} that if there exists a strategy for strict threshold $0$, there also exists a strategy for threshold $\frac{1}{|\MDP|}$ which implies there exists a winning strategy in the energy parity game with energy value $1-\frac{1}{|\MDP|(|\MDP|+1)}$ for B\"uchi states and energy value $-\frac{1}{|\MDP|(|\MDP|+1)}$ for non-B\"uchi states. By Lemma~5 of~\cite{DBLP:journals/tcs/ChatterjeeD12} such energy parity games require winning strategies with memory at most $2|\MDP||\parobj|$. These strategies need to visit B\"uchi states infinitely often to satisfy their energy condition, and so are winning the parity-B\"uchi game initially defined. Winning strategies for energy parity games can be computed in $\NPinter$. A quasi-polynomial algorithm exists for mean-payoff parity games~\cite{DBLP:conf/lics/DaviaudJL18} (which is indeed quasi-polynomial since there are only two possible payoffs for every state, which is bounded: payoffs are either $0$ or $1$), but it is unclear if it can be used to find such polynomial-memory strategies.
\end{proof}


%

\section{Sure Parity and $n$ non-Strict Reachability Thresholds}
\label{sec:multi}
\label{sec:nonstrict}

Finally, we consider the case of one sure parity condition and multiple \emph{non-strict} threshold reachability objectives, i.e., formulas like $\univ(\parobj)\bigwedge_{i = 1}^n \threshobj{\geq}{p_i}{\event \target_i}$.
We do not impose a lexicographic ordering on the target sets. Our main result is:

\begin{restatable}{theorem}{thstrict}
\label{thm:mult-th}
    \begin{mdframed}[innertopmargin=+0.5em,innerbottommargin=+0.5em]
    Let MDP $\MDP$ be clean w.r.t.\  parity objective $\parobj$ and target sets $\target_1,\ldots,\target_n \subseteq \states$.
    Further, let $\vect{p} \in [0,1]^n$, and let $s \in S$ be a state.
    Then:
        \begin{enumerate}
        \item The decision problem $\exists\strat \colon \satisfy{s}{\strat}{\MDP} \univ(\parobj) \wedge \bigwedge_{i = 1}^n \threshobj{\geq}{p_i}{\event \target_i}$ is in $\EXPTIME$.
        \item A witness strategy $\strat$ using at most $2^{\poly(|\MDP| + nD)}$ memory, where $D$ is the bit-complexity of the rational numbers in $\vect p$, can be effectively constructed for the \yes-instances.
    \end{enumerate}
    \end{mdframed}
\end{restatable}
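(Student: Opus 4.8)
The plan is to decide membership of $\vect p$ in $\achievable{\MDP}{s}{\univ(\parobj)\wedge\formula(\vect x)}$, where $\formula(\vect x)=\bigwedge_{i=1}^n\threshobj{\geq}{x_i}{\event\target_i}$, by alternating three operations: computing the reachability-only Pareto frontier, projecting as in \Cref{def:proj}, and pruning states that violate sure parity. I would first compute $\pareto{\MDP}{s}{\formula(\vect x)}$, the boundary of the convex achievable polytope, which by \cite{etessami2007multi} is possible in exponential time (it may have superpolynomially many vertices). Two cases are immediate. If $\vect p$ lies outside the closed polytope it is unachievable even without parity, so reject. If $\vect p$ lies in the topological interior, then some strictly larger vector is achievable, so $\bigwedge_{i=1}^n\threshobj{>}{p_i}{\event\target_i}$ holds and \Cref{thm:approx-int} supplies a strategy meeting it together with $\univ(\parobj)$; as strict implies non-strict, accept. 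The genuine difficulty is a point on the Pareto frontier.

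For a boundary point I would choose a supporting hyperplane with non-negative normal $\vect v$, normalized to $\|\vect v\|_1=1$, whose face $f$ contains $\vect p$, and form $\project{\MDP}{\vect v}$. By \Cref{lem:mdp_proj_c}, the strategies reaching $\target=\bigcup_i\target_i$ almost surely in $\project{\MDP}{\vect v}$ are exactly those maximizing $\vect x\mapsto\vect x\cdot\vect v$ in $\MDP$, i.e.\ those remaining on $f$; moreover, since $\vect v\geq\vect 0$, achieving any threshold vector $\geq\vect p$ forces $\vect v$-optimality, so \emph{any} witness for $\vect p$ must stay on $f$. I would then iteratively prune every state failing $\univ(\parobj)\wedge\threshobj{=}{1}{\event\target}$, which by the remarks after \Cref{def:clean} and the techniques of \cite{icalp2017} reduces to solving polynomially many parity games. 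The decisive observation is that in the pruned MDP every surviving state admits, by \cite{icalp2017}, a single strategy that is both almost-sure-reaching and surely parity-satisfying; hence every vertex of its reachability region is \emph{parity-safe}, and because sure parity is preserved when randomizing over finitely many surely parity-satisfying strategies (each path of the mixture is a path of one component), the reachability-achievable region of the pruned MDP coincides with its parity-achievable region.

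Consequently, $\vect p$ is achievable with parity iff it lies in the reachability region of the pruned MDP, which I test by recomputing its Pareto frontier. Pruning can shrink reachability probabilities and may expose further parity-violating states, so I would loop---recompute, re-project onto the face now containing $\vect p$, re-prune---until a fixpoint. Each projection confines the region to a face of strictly smaller dimension and each pruning strictly shrinks the state space, so both quantities are monotone and bounded; the loop therefore stabilizes after at most $\poly(|\MDP|)$ rounds, and we accept iff $s$ survives with $\vect p$ in the final region. As every round is dominated by an exact frontier computation costing exponential time and there are only polynomially many rounds, the procedure is in \EXPTIME, giving part~1.

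For part~2 the witness is read off the final pruned MDP $\MDP_{\mathrm{fin}}$, in which $\vect p$ lies in the reachability region: by Carathéodory, $\vect p=\sum_{j}c_j\vect u_j$ for at most $n+1$ vertices $\vect u_j$, each realized by an almost-sure-reaching and surely parity-satisfying strategy of $\MDP_{\mathrm{fin}}$ supplied by \cite{icalp2017}. The strategy that initially randomizes with weights $c_j$ and then commits to the chosen component surely satisfies $\parobj$, since every path of the mixture is a path of one of the parity-safe components, and it achieves $\vect p$; assembling the component strategies together with the precision demanded by the thresholds in $\vect p$ yields the stated $2^{\poly(|\MDP|+nD)}$ memory. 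The step I expect to be the main obstacle is establishing correctness on the boundary: one must prove that projection followed by pruning neither discards an achievable threshold vector nor keeps an unachievable one, which rests on carefully linking the convex geometry of the frontier's faces to the requirement that $\univ(\parobj)$ hold along \emph{every} path, including those of probability zero.
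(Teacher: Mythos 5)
Your overall skeleton --- compute the reachability-only frontier, dispatch points strictly inside it to \Cref{thm:approx-int}, and handle frontier points by alternating projection (\Cref{def:proj}) with pruning of states violating $\univ(\parobj) \wedge \threshobj{=}{1}{\event \target}$ --- matches the paper's \Cref{proc:mthc}. However, your decisive correctness step is a non sequitur and is in fact false. From ``every surviving state admits \emph{some} strategy satisfying $\univ(\parobj) \wedge \threshobj{=}{1}{\event \target}$'' you conclude that ``every vertex of its reachability region is parity-safe,'' and hence that the reachability-achievable and parity-achievable regions of the pruned MDP coincide. Cleanness guarantees one parity-safe almost-surely-reaching strategy per state; it says nothing about the strategies that realize the vertices, which may be forced to violate sure parity. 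The paper's own running example refutes the claim: project the MDP of \Cref{fig:mdp-pareto} on $(1,1)$ and prune, obtaining the solid part of \Cref{fig:mdp-pareto-gen}. Every remaining state there satisfies $\univ(\parobj) \wedge \threshobj{=}{1}{\event \target}$, so this MDP is a fixpoint of your pruning; yet the vertex $(\nicefrac 1 3, 1)$ of its reachability frontier is only achievable by playing $\textcolor{red}{a}$ forever in the lower state (any positive-probability use of $\textcolor{green!50!black}{b}$ pushes $\prob(\event \target_2)$ strictly below $1$), and that strategy has a path looping forever through priority $1$. Hence $(\nicefrac 1 3, 1)$ lies in the reachability region of the pruned MDP but is not parity-achievable, so your acceptance criterion (``accept iff $\vect p$ lies in the region of the final pruned MDP'') wrongly accepts it: with your choice of supporting normal $(\nicefrac 1 2, \nicefrac 1 2)$, whose face contains $(\nicefrac 1 3,1)$, re-projection and re-pruning remove nothing further and your loop stabilizes at this incorrect \emph{yes}. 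The Carath\'eodory mixing argument for part~2 inherits the same flaw, since the component vertex strategies need not be parity-safe.

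What is missing is precisely the machinery the paper builds for frontier points. First, \Cref{lem:int-par}: after a projection-plus-pruning round, only points in the \emph{relative interior} $\interior_{\generator(B)}(B)$ of the maximal face are guaranteed achievable, and the witnesses are not mixtures of vertex strategies but approximations --- play a reachability strategy for finitely many steps, then switch to a parity-safe almost-surely-reaching strategy, and take convex combinations of such $\varepsilon$-approximations. Second, points on a lower-dimensional face require iteration in which each projection is orthogonal to the \emph{smallest} face of the current frontier containing $\vect p$, so that the dimension strictly decreases and the loop terminates after at most $n$ rounds at a vertex; there the correct test is not membership of $\vect p$ in a reachability region, but the single-objective query $\satisfies{s}{\MDP^{\pi \vect{w}}} \univ(\parobj) \wedge \threshobj{\geq}{1}{\event \target}$ of \Cref{lem:vert}, decided via \Cref{thm:known_results}.\ref{thm:reach-p}. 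Your procedure never requires the smallest-face choice, and your fixpoint test conflates the reachability and parity-achievable regions, which --- as the example above shows --- can differ even when pruning has stabilized; this is exactly the situation your proof declares impossible.
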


We solved the case where $\vect{p}$ is strictly inside the Pareto frontier in Section~\ref{sec:thres-r}.  It remains to show how to achieve $\univ(\parobj) \bigwedge_{i = 1}^n \threshobj{\geq}{p_i}{\event \target_i}$ when $\vect{p}$ is exactly on the frontier. We first consider the case where $\vect{p}$ is a vertex of the Pareto frontier, that we will then use as a base case for an arbitrary point $\vect{p}$ of the frontier.  We sketch the proof in the remainder of the section.

Since the Pareto frontier does not depend on the sure objective (cf.~\Cref{lem:same-ach}), to determine whether $\vect{p}$ is exactly on the Pareto frontier, it suffices to check if $\satisfies{s}{\MDP} \bigwedge_{i = 1}^n \threshobj{\geq}{p_i}{\event \target_i}$ and $\nsatisfies{s}{\MDP}\bigwedge_{i = 1}^n \threshobj{>}{p_i}{\event \target_i}$. The first formula checks if $\vect{p}$ is achievable, the second checks whether it is on the boundary.
Hence the main difficulty is to decide whether adding a sure parity condition keeps the achievability of a point.
 We illustrate this in the following example.

\begin{example}
In the MDP of Figure~\ref{fig:mdp-pareto}, playing $pair_1$ then $\textcolor{red}{a}$ forever gives a total probability of $1$ of reaching $F_1$ and probability $\nicefrac 1 3$ of reaching $F_2$. This strategy does not surely satisfy the parity condition though, since there exists a path that visits the uppermost state, labelled $1$, forever.
Playing $pair_1$ and then $\textcolor{green!50!black}{b}$ once surely reaches $F_2$. 
It is thus possible to satisfy $\univ(\parobj) \wedge \threshobj{=}{1}{\event \target_2}$, but not $\univ(\parobj)\wedge\threshobj{=}{1}{\event \target_1} \wedge \threshobj{\geq}{\nicefrac 1 3}{\event \target_2}$.
Still, for every $\varepsilon > 0$, we can achieve $\univ(\parobj)\wedge \threshobj{\geq}{1- \varepsilon}{\event \target_1} \wedge \threshobj{\geq}{\nicefrac 1 3 - \varepsilon}{\event \target_2}$ by Theorem~\ref{thm:approx-int}, using a finite-memory strategy.
\end{example}

\subsection{Vertex of the Pareto frontier}
\label{sec:vertex}
 
We first consider the easier case where we want to achieve a point which is a vertex of the Pareto frontier. 
We assume $\vect{p}$ to be a vertex of the Pareto frontier.  Our proof relies on projection (\Cref{def:proj}).
Indeed, since  $\vect{p}$ is a vertex, there exists some vector $\vect{v}$ such that  $\vect{p}$ is the \emph{unique} point of the Pareto frontier maximizing $\vect{p}\cdot \vect{v}$. 
We obtain the following lemma.

\begin{restatable}{lemma}{lemvertex}
\label{lem:vert}
Suppose that the MDP $\MDP$ is clean w.r.t.\ parity objective $\parobj$ and target sets $\target_1,\ldots,\target_n \subseteq \states$.
Further, let $\vect{p} \in [0,1]^n$, and let $s \in S$ be a state.
If $\vect{p}$ is a vertex of the Pareto frontier of $\bigwedge_{i = 1}^n \threshobj{\geq}{p_i}{\event \target_i}$ from $s$, then we can decide if $\satisfies{s}{\MDP}\univ(\parobj)\wedge\bigwedge_{i = 1}^n \threshobj{\geq}{p_i}{\event \target_i}$ and if so give a finite-memory strategy.
\end{restatable}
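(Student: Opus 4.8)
The plan is to reduce the question to the combination of a sure parity objective with a \emph{single almost-sure} reachability objective in the projected MDP $\project{\MDP}{\vect v}$, which is exactly the decidable setting of \Cref{thm:known_results}(\ref{it:sure_parity_and_reach}). First, since $\vect p$ is a vertex of the Pareto frontier and the achievable set is a downward-closed convex polytope (\Cref{sec:moprelims}), standard convex geometry supplies a direction $\vect v \geq \vect 0$ with $\|\vect v\|_1 = 1$ for which $\vect p$ is the \emph{unique} maximizer of $\vect x \mapsto \vect x \cdot \vect v$ over all achievable $\vect x$. I would build $\project{\MDP}{\vect v}$ as in \Cref{def:proj} and give the fresh sink $\projSinkState$ an even priority, so that $\univ(\parobj)$ is meaningful on it. The goal is then the equivalence
\[
  \satisfies{s}{\MDP} \univ(\parobj) \wedge \bigwedge_{i=1}^n \threshobj{\geq}{p_i}{\event \target_i}
  \quad\Longleftrightarrow\quad
  \satisfies{s}{\project{\MDP}{\vect v}} \univ(\parobj) \wedge \threshobj{=}{1}{\event \target},
\]
after which the right-hand side is decided, with a finite-memory witness, by \Cref{thm:known_results}(\ref{it:sure_parity_and_reach}) applied with threshold $1$.

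For the direction from right to left, any witness $\sigma$ in $\project{\MDP}{\vect v}$ is also a strategy of $\MDP$, since projection only deletes actions. By \Cref{lem:mdp_proj_c}, $\satisfy{s}{\sigma}{\project{\MDP}{\vect v}} \threshobj{=}{1}{\event \target}$ forces $\sigma$ to achieve some $\vect x$ maximizing $\vect x \cdot \vect v$, and uniqueness of the maximizer gives $\vect x = \vect p$, i.e.\ $\satisfy{s}{\sigma}{\MDP} \bigwedge_{i=1}^n \threshobj{\geq}{p_i}{\event \target_i}$. Sure parity transfers to $\MDP$ because every support path of $\MDP[\sigma]$ either ends absorbed in a target sink (of even priority, as $\MDP$ is clean w.r.t.\ $\parobj$) or stays forever among non-target states; paths of the latter kind are also support paths of $\project{\MDP}{\vect v}[\sigma]$, since the non-target transitions of $\MDP$ and $\project{\MDP}{\vect v}$ coincide, and are thus good by $\satisfy{s}{\sigma}{\project{\MDP}{\vect v}}\univ(\parobj)$.

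The harder direction is from left to right, where I must turn an $\MDP$-witness $\tau$ into a \emph{strategy of} $\project{\MDP}{\vect v}$. A short computation shows that achieving $\vect p$ in $\MDP$ amounts to achieving the maximal value $y_s = \vect p \cdot \vect v$ of $\prob(\event \target)$ in the intermediate MDP $\MDP'$ of \Cref{def:proj}; hence $\tau$ is an \emph{optimal} reachability strategy in $\MDP'$. The crucial observation is that such a $\tau$ uses only actions retained by the projection on the part of the state space it actually reaches: at every configuration reached with positive probability and with $y_t > 0$, optimality of $\tau$ (via the supermartingale $y$ along $\MDP'[\tau]$) forces an optimal action, which is precisely an action kept by the projection; and at configurations with $y_t = 0$ every action preserves the value $0$ and is therefore kept as well. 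Configurations reached with probability $0$ lie on no support path, and so influence neither the reachability probabilities nor $\univ(\parobj)$; I would freely rewire $\tau$ there to an arbitrary projection action. The resulting $\tau'$ is then a genuine strategy of $\project{\MDP}{\vect v}$ with the same reachable behaviour as $\tau$, so it still achieves $\vect p$ and still satisfies $\univ(\parobj)$ in $\MDP$; by \Cref{lem:mdp_proj_c} it satisfies $\threshobj{=}{1}{\event \target}$ in the projection, and sure parity transfers by the same priority-matching argument as above ($\projSinkState$ and target sinks being even). I expect this transfer of the witness into the projection---especially the care needed because sure parity may rely on measure-zero branches, which is what the $y_t=0$ and probability-$0$ cases handle---to be the main obstacle. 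Once the equivalence is established, decidability and the finite-memory witness follow from \Cref{thm:known_results}(\ref{it:sure_parity_and_reach}), with the produced witness being valid in $\MDP$ by the right-to-left direction.
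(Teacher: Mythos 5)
Your proposal follows essentially the same route as the paper's proof: choose a direction $\vect v \geq \vect 0$ for which the vertex $\vect p$ is the \emph{unique} maximizer of $\vect x \mapsto \vect x \cdot \vect v$ over the achievable set, form the projection $\project{\MDP}{\vect v}$, and reduce the question to deciding $\univ(\parobj) \wedge \threshobj{\geq}{1}{\event \target}$ on the projected MDP via \Cref{thm:known_results}(\ref{it:sure_parity_and_reach}), invoking \Cref{lem:mdp_proj_c} for the transfer. Your additional supermartingale argument for turning an arbitrary $\MDP$-witness into a strategy of the projection is a correct, more explicit treatment of a step the paper leaves implicit (it is buried in the right-to-left direction of the proof of \Cref{lem:mdp_proj_c}), so the two proofs coincide in substance.
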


\begin{proof}
Since $\vect{p}$ is a vertex of the Pareto frontier $P$, which is a convex polytope, there exists a half-space $H=\{\vect{x}\mid \vect{w}\cdot \vect{x} \leq c\} \supseteq P$ where $c\in[0,1]$, $\vect{w} \geq \vect 0$, $\|\vect w\|_1 = 1$, and $\vect{p}$ is the only element of $P$ satisfying the constraint defining $H$ with \emph{equality}, that is $\vect w \cdot  \vect p = c$. 
By \Cref{lem:mdp_proj_c}, $\satisfy{s}{\sigma}{\MDP^{\pi \vect{w}}}\threshobj{\geq}{1}{\event\target}$ iff $\satisfy{s}{\sigma}{\MDP}\bigwedge_{i = 1}^n \threshobj{\geq}{x_i}{\event \target_i}$ for all strategies $\strat$ in $\MDP^{\pi \vect{w}}$, and thus also
\[
    \satisfy{s}{\sigma}{\MDP^{\pi \vect{w}}} \univ(\parobj) \wedge \threshobj{\geq}{1}{\event\target}
    \quad\text{iff}\quad
    \satisfy{s}{\sigma}{\MDP} \univ(\parobj) \wedge \bigwedge_{i = 1}^n \threshobj{\geq}{x_i}{\event \target_i}
    ~.
\]
Thus, using \Cref{thm:known_results}.\ref{thm:reach-p}, we can decide whether $\satisfies{s}{\MDP^{\pi \vect{w}}} \univ(\parobj) \wedge \threshobj{\geq}{1}{\event\target}$ holds and if so, obtain a finite-memory strategy $\sigma$.
\end{proof}

\subsection{Arbitrary Point of the Pareto Frontier}

We now consider any arbitrary point $\vect{p}$ of the Pareto frontier.
Since $\vect{p}$ may be contained in a $k$-dimensional face of the frontier (with $k > 0$; $k=0$ means that $\vect p$ is a vertex, see \Cref{sec:vertex}), projecting on this face will not be sufficient to obtain $\vect{p}$.
Similar to \Cref{alg:lex}, we iterate projections and state removal, thereby reducing the dimension of the Pareto frontier until either reducing $\vect{p}$ to a vertex, or entering a situation where we cannot project anymore.
We remark that the latter only happens in specific cases. To properly define these cases, given a Pareto frontier $P$, we consider the smallest vector space containing $P$. We show that we cannot project any more when $\vect{p}$ is an interior point of $P$ for this subspace, denoted $\vect{p}\in\interior_{\generator(P)}(P)$. We obtain \Cref{proc:mthc}.

\begin{figure}[t]\scalebox{1}
    \centering
    \adjustbox{max width=\textwidth}{
        \begin{tikzpicture}[on grid, node distance=17.5mm and 20mm]
        \node[player,initial, initial left,initial text=] (s) at (0,0) {$1$} ;
        \node[player,above=of s] (s1) {$1$} ;
        \node[player,below=of s] (s2) {$1$} ;
        \node[player,accepting,right=of s,fill=\tcolA] (r1) {$F_1$} ;
        \node[player,accepting,right=of r1,fill=\tcolB] (r2) {$F_2$} ;
        \node[player,accepting,left=of s,double color fill={\tcolA}{\tcolB}] (r12)  {$F_1\& F_2$} ;
        
        \path[-latex]  (s) edge node[left] { $pair_1,1$} (s1)
        (s) edge node[left] { $pair_2,1$} (s2)
        
        (s1) edge[bend left=20, dashed] node[above right, pos=0.7] { $\textcolor{red}{a}, \nicefrac 1 3$} (r1)
        (s1) edge[bend right=20, dashed] node[above left] { $\textcolor{red}{a}, \nicefrac  1 6$} (r12)
                (s2) edge[bend right=25] node[below right, align=left] { $\textcolor{red}{a}, \nicefrac 1 3$ \\ $\textcolor{green!50!black}{b}, \nicefrac 1 3$} (r2)
        (s2) edge[bend left=20] node[below left, align=left] { $\textcolor{red}{a}, \nicefrac 1 6$ \\ $\textcolor{green!50!black}{b}, \nicefrac 1 3$} (r12)
        (s2) edge[bend right=20] node[below right, pos=0.7] { $\textcolor{green!50!black}{b}, \nicefrac 1 3$} (r1)
        (s1) edge [loop above, dashed] node[left, pos = 0.2] { $\textcolor{red}{a}, \nicefrac 1 2$} (s1)
        (s2) edge [loop below] node[left, pos=0.8] { $\textcolor{red}{a}, \nicefrac 1 2$} (s2)
        (r12) edge [loop left] node[above, pos=0.8] {$*,1$} (r12)
        (r1) edge [loop right] node[above, pos=0.2] { $*,1$} (r2)
        (r2) edge [loop right] node[above, pos=0.2] { $*,1$} (r2)
        ;
        \end{tikzpicture}
        \hspace{2em}
        \begin{tikzpicture}[scale=1.0]
            \begin{axis}[
                clip=false,
                axis lines=center,
                xmin = -0.025, xmax = 1.1,
                ymin = -0.025, ymax = 1.1,
                x=40mm, y=40mm,
                xtick={0,0.2,...,1}, ytick={0,0.2,...,1},
                xlabel={$\prob(\event \tikz[baseline=-0.7ex]{ \node[state,accepting,scale=0.33,fill=\tcolA] {} })$}, xlabel style={right},
                ylabel={$\prob(\event \tikz[baseline=-0.7ex]{ \node[state,accepting,scale=0.33,fill=\tcolB] {} })$}, ylabel style={left},
            ]
    \addplot[color=black] coordinates { ((0.67,0.67) (0.33,1)};
    \addplot[color=black] coordinates { ((0,1) (0.33,1)};
    \addplot[color=black, dashed] coordinates { (1,0.33) (0.67,0.67)};
    \addplot[color=black, dashed] coordinates { (1,0.33) (1,0)};
                \node[label={45:{$(1,\nicefrac 1 3)$}},circle,fill,inner sep=2pt] at (axis cs:1,0.33) {};
    \node[label={45:{$(\nicefrac 2 3,\nicefrac 2 3)$}},circle,fill,inner sep=2pt] at (axis cs:0.67,0.67) {};
    \node[label={45:{$(\nicefrac 1 3,1)$}},circle,fill,inner sep=2pt] at (axis cs:0.33,1) {};
    \node[label={45:{$(\nicefrac 1 2,\nicefrac 5 6)$}},cross=2pt] at (axis cs:0.5,0.83) {};
            \end{axis}
        \end{tikzpicture}
    }
\caption{\label{fig:mdp-pareto-gen} The MDP of Figure~\ref{fig:mdp-pareto} after projection on $\vect v = (\colorbox{\tcolA}{1},\colorbox{\tcolB}{1})$.
}
\vspace{-5pt}
\end{figure}

\begin{example}
\label{ex:proj}
After projecting the MDP of \Cref{fig:mdp-pareto} on $\vect v = (1,1)$, we obtain the MDP in \Cref{fig:mdp-pareto-gen} (including both the dashed and the solid transitions).
Since from the uppermost state, no strategy surely satisfies the parity condition, $\univ(\parobj) \wedge \threshobj{=}{1}{\event\target}$ does not hold.
We thus prune the dashed transitions, and the Pareto frontier is now restricted between $(\nicefrac 1 3,1)$ and $(\nicefrac 2 3,\nicefrac 2 3)$, see Figure~\ref{fig:mdp-pareto-gen} (right).
To achieve e.g.\ $\vect p = (\nicefrac 1 2,\nicefrac 5 6)$, which is strictly inside this line segment, it suffices to play $pair_2$, then $\textcolor{red}{a}$ once, and then finally $\textcolor{green!50!black}{b}$ once.
\end{example}

To obtain our result, we prove the following. 
After projecting on a given vector, and removing any state refuting $\univ(\parobj) \wedge \threshobj{=}{1}{\event\target}$ we obtain some polytope $P$; any point of $P$ that is a topologically interior point in the smallest vector space containing $P$ is achievable. Formally:

\begin{restatable}{lemma}{lemarbi}
    \label{lem:int-par}
    Let the MDP $\MDP$ be clean w.r.t.\ parity objective $\parobj$ and target sets $\target_1,\ldots,\target_n \subseteq \states$.
    Further, let $\vect{v} \in [0,1]^n$, and $s \in S$. Let $\MDP_{\parobj}$ be obtained by taking the MDP $\MDP^{\pi \vect{v}}$ and pruning all states that refute $\univ(\parobj) \wedge \threshobj{=}{1}{\event\target}$.
    Let $B\subseteq[0,1]^n$ be the set of $\leq$-maximal points of the Pareto frontier of $\MDP_{\parobj}$ from $s$.
    Then:
    For every $\vect{x}\in\interior_{\generator(B)}(B)$, we have $\satisfies{s}{\MDP_{\parobj}}\univ(\parobj)\wedge \bigwedge_{i = 1}^n \threshobj{\geq}{x_i}{\event \target_i}$, and we can compute a strategy that achieves this.
\end{restatable}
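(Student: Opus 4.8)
The plan is to exploit that, after projection and pruning, $\MDP_{\parobj}$ is again clean w.r.t.\ both $\parobj$ and $\target$, and to reduce achievability of the relative-interior point $\vect{x}$ to realising it as an \emph{exact} convex combination of reach-vectors of finitely many finite-memory strategies that each \emph{surely} win parity. First I would record the structure of $\MDP_{\parobj}$: every surviving state $s'$ satisfies $\univ(\parobj)\wedge\threshobj{=}{1}{\event\target}$, so by \Cref{thm:known_results}.\ref{it:sure_parity_and_reach} there is a finite-memory $\tau_{s'}$ with $\satisfy{s'}{\tau_{s'}}{\MDP_{\parobj}}\univ(\parobj)\wedge\threshobj{=}{1}{\event\target}$, and the $\target$-states are even-priority sinks. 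The key geometric fact, from \Cref{lem:mdp_proj_c}, is that every strategy of $\MDP_{\parobj}$ that almost-surely reaches $\target$ attains the maximal direction-$\vect{v}$ value $c$, and any strategy reaching $\target$ with probability $<1$ is dominated, since the missing mass could be routed to $\target$ in the $\target$-clean $\MDP_{\parobj}$. Hence all $\leq$-maximal points lie on the hyperplane $H=\{\vect{y}\in[0,1]^n\mid\vect{y}\cdot\vect{v}=c\}$, so $B\subseteq H$, $\mathrm{aff}(B)\subseteq H$, and $\vect{x}\cdot\vect{v}=c$.

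Next I would reduce to convexity. A strategy that in its first step randomly commits to one of finitely many finite-memory strategies and then follows it forever is itself finite-memory; every one of its paths is a path of some committed sub-strategy, so if each sub-strategy surely satisfies $\parobj$ then so does the mixture, while its reach-vector is the corresponding convex combination. It therefore suffices to produce finite-memory, surely-parity-winning $\tau_1,\dots,\tau_m$ in $\MDP_{\parobj}$ whose reach-vectors $\vect{r}^{(1)},\dots,\vect{r}^{(m)}$ satisfy $\vect{x}=\sum_j\nu_j\vect{r}^{(j)}$ with all $\nu_j>0$ and $\sum_j\nu_j=1$; the commit-at-start mixture with weights $\nu_j$ then realises the reach-vector \emph{exactly} $\vect{x}$, yielding $\satisfies{s}{\MDP_{\parobj}}\univ(\parobj)\wedge\bigwedge_i\threshobj{\geq}{x_i}{\event\target_i}$. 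Exactness (not just domination) is forced here: as all $\vect{r}^{(j)}\in H$ and $\vect{v}\geq\vect 0$, any combination on $H$ dominating $\vect{x}$ must equal $\vect{x}$ on $\supp(\vect{v})$.

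To build the $\tau_j$ I would pick $\vect{y}^{(1)},\dots,\vect{y}^{(m)}\in B$ whose convex hull is a full-dimensional simplex in $\mathrm{aff}(B)$ containing $\vect{x}$ in its relative interior; this is possible exactly because $\vect{x}\in\interior_{\generator(B)}(B)$. For each $\vect{y}^{(j)}$ take a memoryless reachability strategy achieving it (\Cref{thm:known_results}.\ref{it:mo_reach}); since $\vect{y}^{(j)}\in B\subseteq H$ this strategy almost-surely reaches $\target$. Using \Cref{lem:moReachBound} I would run it for a horizon $B_0$ after which it has realised $\vect{y}^{(j)}$ up to a vanishing residual, and then splice in $\tau_{s'}$ from the state $s'$ reached. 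Because parity is prefix-independent, $\target$-states are even sinks, and $\tau_{s'}$ surely wins parity, the spliced $\tau_j$ surely satisfies $\parobj$; it almost-surely reaches $\target$, so $\vect{r}^{(j)}\in H$ and $\vect{r}^{(j)}\to\vect{y}^{(j)}$ as $B_0\to\infty$.

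The main obstacle, and the step I would treat most carefully, is this exact-hit argument on $H$: I cannot over-shoot $\vect{x}$, since $H$ pins the $\vect{v}$-weighted value of every candidate reach-vector. Instead I would invoke stability of strict convex membership: because $\vect{x}$ lies in the \emph{open} relative interior of $\mathrm{conv}\{\vect{y}^{(j)}\}$ and each perturbation $\vect{y}^{(j)}\mapsto\vect{r}^{(j)}$ is small, stays on $H$, and is essentially tangent to $\mathrm{aff}(B)$ for $B_0$ large, the affine solution of $\vect{x}=\sum_j\nu_j\vect{r}^{(j)}$, $\sum_j\nu_j=1$, depends continuously on the $\vect{r}^{(j)}$ and still has $\nu_j>0$; these weights are obtained by a linear solve. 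All ingredients---the frontier $B$ and simplex vertices $\vect{y}^{(j)}$ (computable as in~\cite{etessami2007multi}), the strategies $\tau_{s'}$ (\Cref{thm:known_results}.\ref{it:sure_parity_and_reach}), and the weights $\nu_j$---are effectively computable, giving the claimed computable strategy. Everything else is a direct reuse of the projection lemma and the sure-parity-with-reachability result; the delicate point is solely keeping $\vect{x}$ strictly inside the perturbed simplex on $H$.
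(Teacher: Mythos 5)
Your proposal is correct and is essentially the paper's own proof: pick a simplex of points of $B$ whose relative interior contains $\vect{x}$, approximate each vertex by playing a memoryless optimal reachability strategy for a long finite horizon and then switching to a strategy for $\univ(\parobj) \wedge \threshobj{=}{1}{\event\target}$ (sound since parity is prefix-independent), and conclude via stability of convex-hull membership under small perturbations together with an initial randomized commitment among the resulting finitely many finite-memory strategies. The one step you yourself flag as delicate---keeping the perturbed reach-vectors inside $\generator(B)$ rather than merely on the hyperplane $H$---is exactly what the paper discharges with \Cref{lem:mdp_proj_c}: each spliced strategy almost-surely reaches $\target$ in $\MDP_{\parobj}$, so its achieved threshold vector lies on the maximal face and can be taken in $B$ itself, whence the perturbed simplex never leaves $\generator(B)$ and the barycentric coordinates of $\vect{x}$ stay positive.
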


\begin{figure}[t]
\centering
\scalebox{0.8}{
 \begin{tikzpicture}
    \begin{axis}[ axis equal,   
    xmin = 0, xmax = 1.2,
    ymin = 0, ymax = 1.2,
    xlabel={Probability of $F_1$},
    ylabel={Probability of $F_2$}
    ]
                \node[label={45:{$x$}},circle,fill,inner sep=2pt] at (axis cs:0.5,0.5) {};
    \node[label={180:{$y^1$}},circle,fill,inner sep=2pt] at (axis cs:0.1,0.1) {};
    \node[label={90:{$y^2$}},circle,fill,inner sep=2pt] at (axis cs:0.5,0.9) {};
    \node[label={180:{$y^3$}},circle,fill,inner sep=2pt] at (axis cs:0.9,0.5) {};
    \draw[dashed] (axis cs: 0.1, 0.1) circle [radius=0.1];
    \draw[dashed] (axis cs: 0.5, 0.9) circle [radius=0.1];
    \draw[dashed] (axis cs: 0.9, 0.5) circle [radius=0.1];
    \node[label={45:{$z^1$}},circle,fill,inner sep=2pt] at (axis cs:0.17,0.17) {};
    \node[label={45:{$z^2$}},circle,fill,inner sep=2pt] at (axis cs:0.5,0.8) {};
    \node[label={45:{$z^3$}},circle,fill,inner sep=2pt] at (axis cs:1,0.5) {};
    \addplot[color=black, dotted] coordinates { (0.17,0.17) (0.5,0.8)};
    \addplot[color=black, dotted] coordinates { (0.5,0.8) (1,0.5)};
    \addplot[color=black, dotted] coordinates { (1,0.5) (0.17,0.17)};
    \end{axis}
\end{tikzpicture}
}
\vspace{-10pt}
\caption{\label{fig:approx} To obtain $x$ on the two-dimension plane, we first take three points $y^1,y^2,y^3$, then find $\varepsilon$ small enough for $x$ to be within the convex hull of any $\varepsilon$-approximation $z^1,z^2,z^3$ of $y^1,y^2,y^3$.}
\end{figure}
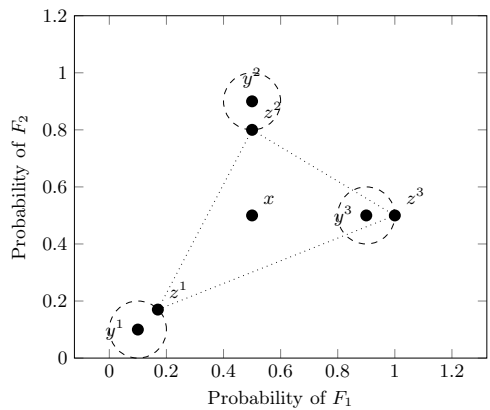

The proof of this lemma is quite involved. Figure~\ref{fig:approx} provides some intuition on the proof. If  $\vect{x}$ is inside an $m$-dimensional surface $B$, we can find $m+1$ elements of $B$ such that $\vect{x}$ is within their convex hull. These are $y^1,y^2,y^3$ in Figure~\ref{fig:approx}. For every $y^j$, we can find a strategy satisfying $\limwedgeone{i \in [n]} \Pr_{\geq y^j_i}(\event F_i)$. By playing such a strategy for sufficiently many steps, then switching to a strategy satisfying $\univ(\parobj) \wedge \Pr_{= 1}(\event F)$, we can $\varepsilon$-approximate the $y^j$ while staying inside $B$. Points achieved by such approximation are denoted $z^1,z^2,z^3$ in Figure~\ref{fig:approx}. It then remains to show that if $\varepsilon$ is small enough, $\vect{x}$ is within the convex hull of $z^1,z^2,z^3$ and thus can be achieved.

\begin{proof}[of \Cref{lem:int-par}]
    We first show that it is sufficient to $\varepsilon$-approximate any point of $B$. Let $m$ be the number of dimensions of $B$.
    Given $\vect{x}\in B$, we have $\vect{x}\in\interior_{\generator(B)}(B)$ if and only if there exist $m+1$ points of $B$, denoted $\vect{y}^1,\cdots,\vect{y}^{m+1}\in B$ such that $\vect{x}$ is strictly inside the convex hull of $\vect{y}^1,\cdots,\vect{y}^{m+1}$. There also exists $\varepsilon>0$ such that for all $\vect{z}^1,\cdots,\vect{z}^{m+1}\in B$, if we have for all $j\in[m+1]$ that $|\vect{y}^j-\vect{z}^j|\leq \varepsilon$ (that is for all $i\in [n]$ we have $|y^j_i-z^j_i|\leq \varepsilon$), then $\vect{x}$ is inside the convex hull of $\vect{z}^1,\cdots,\vect{z}^{m+1}$. If for every $\vect{z}^j$ we can find a strategy satisfying $\satisfies{s}{\MDP_{\parobj}}\limwedgeone{i \in [n]}\univ(\parobj) \wedge \Pr_{\geq z^j_i}(\event F_i)$, then by using a convex combination of such strategies we can get a strategy satisfying $\satisfies{s}{\MDP_{\parobj}}\limwedgeone{i \in [n]} \univ(\parobj)\wedge \Pr_{\geq x_i}(\event F_i)$. Given $\vect{x}$, we can find such $\vect{y}^1,\cdots,\vect{y}^{m+1}$ and such $\varepsilon$. There remains to find for every $\vect{y}^j$ a $\vect{z}^j$ such that $|\vect{y}^j-\vect{z}^j|\leq\varepsilon$.

    We now show that for any $\vect{y}\in B$, for all $\varepsilon> 0$, there exists $\vect{z}\in B$ such that for all $i\in n$ we have $|y_i-z_i|\leq \varepsilon$ and a strategy $\sigma$ such that  $\satisfies{s}{\MDP_{\parobj}}\limwedgeone{i \in [n]}\univ(\parobj) \wedge \Pr_{\geq z_i}(\event F_i)$. 
    The result follows since any interior point $\vect{x}\in\interior_{\generator(B)}(B)$ can be obtained as a convex combination using strict inequalities of such vertices, and thus as a convex combination of some approximations of these vertices. 

    We fix $\vect{y}\in B$ and $\varepsilon >0$. 
    Let $\sigma_y$ be a memoryless strategy such that $\satisfies{s}{\MDP_{\parobj}}\limwedgeone{i \in [n]} \Pr_{\geq y_i}(\event F_i)$. 
    Using Theorem~\ref{thm:opti_reach}, we can get $k\in \mathbb{N}_0$ such that after playing $\sigma_y$ for $k$ rounds, we have probability at least $y_i-\varepsilon$ to have reached $F_i$. 
    There exists some strategy $\sigma_{\parobj}$ such that for all states $s\in S$ we have $\satisfy{s}{\sigma_{\parobj}}{\MDP_{\parobj}} \univ(\parobj)\wedge \Pr_{=1}(\event F)$. 
    We now define strategy $\sigma$ as follows: play $\sigma_y$ for $k$ steps, then from the current state $s$ play $\sigma_{\parobj}$. Strategy $\sigma$ satisfies $\univ(\parobj)$ since it suffices $\sigma_{\parobj}$ does. Now, since $\satisfy{s}{\sigma}{\MDP_{\parobj}}\Pr_{=1}(\event F)$, by Lemma~\ref{lem:mdp_proj_c} there exists some $\vect{z}\in B$ such that $\satisfy{s}{\sigma}{\MDP_{\parobj}}\limwedgeone{i \in [n]} \Pr_{\geq z_i}(\event F_i)$. Since $\satisfy{s}{\sigma}{\MDP_{\parobj}}\limwedgeone{i \in [n]} \Pr_{\geq y_i-\varepsilon}(\event F_i)$, we have $z_i\geq y_i-\varepsilon$ and thus $|z_i-y_i|\leq \varepsilon$, concluding the proof.
\end{proof}

We can now give our result stating that we can verify the achievability of an arbitrary point of the Pareto frontier.

\begin{lemma}
    \label{lem:mult-th}
    Suppose that the MDP $\MDP$ is clean w.r.t.\  parity objective $\parobj$ and target sets $\target_1,\ldots,\target_n \subseteq \states$.
    Further, let $s \in S$ be a state, and $\\{\vect{x} = (x_i)_{i\in [n]}\in [0,1]^n}$ on the border of the Pareto frontier of $\limwedgeone{i \in [n]} \Pr_{\geq x_i}(\event F_i)$
    \begin{itemize}
        \item Checking whether $\satisfies{s}{\MDP}\univ(\parobj)\wedge\limwedgeone{i \in [n]} \Pr_{\geq x_i}(\event F_i)$ is decidable; 
        \item if \yes\ a witnessing strategy with at most $2|\MDP||\parobj|$ memory can be effectively computed.
    \end{itemize}
\end{lemma}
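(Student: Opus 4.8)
The plan is to reduce the given boundary point $\vect x$ to one of the two cases that are already settled: $\vect x$ being a \emph{vertex} of the frontier, handled by \Cref{lem:vert}, or $\vect x$ being a relative-interior point of a pruned face, handled by \Cref{lem:int-par}. To get there I would iterate \emph{projection} followed by \emph{pruning of parity-violating states}, mirroring the lexicographic \Cref{proc:mth}, but with the projection direction chosen so as to \emph{expose} the minimal face of the frontier containing $\vect x$, and with each round strictly lowering the dimension of the frontier that still needs to be analysed.

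Concretely, first compute the reachability Pareto frontier $P$ of $\bigwedge_{i=1}^n \threshobj{\geq}{x_i}{\event\target_i}$ from $s$; this is effective (though exponential) by~\cite{etessami2007multi}, and by \Cref{lem:same-ach} it does not depend on $\parobj$. Locate the minimal face $G$ of $P$ containing $\vect x$. If $G = \{\vect x\}$ is a vertex, invoke \Cref{lem:vert} and stop. Otherwise, since every face of a polytope is exposed, pick a supporting hyperplane $H$ with non-negative normal $\vect w \geq \vect 0$, $\|\vect w\|_1 = 1$, whose contact with $P$ is exactly $G$. Form the projection $\project{\MDP}{\vect w}$ (\Cref{def:proj}) and then prune every state refuting $\univ(\parobj)\wedge\threshobj{=}{1}{\event\target}$, exactly as in \Cref{lem:int-par}, obtaining $\MDP_{\parobj}$; the pruning is effective since it reduces to solving parity games~\cite{icalp2017}. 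Let $B$ be the set of $\leq$-maximal points of the reachability frontier of $\MDP_{\parobj}$.

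The correctness rests on the equivalence that $\vect x$ is achievable together with $\univ(\parobj)$ in $\MDP$ iff it is so in $\MDP_{\parobj}$. One direction is trivial, as projection only deletes actions and states. For the converse, a witness $\sigma$ realising $\vect x$ with sure parity maximises $\vect y \mapsto \vect y \cdot \vect w$ (because $\vect x$ lies on $H$), so by \Cref{lem:mdp_proj_c} it induces $\Pr_{=1}(\event\target)$ in $\project{\MDP}{\vect w}$ and uses only actions kept by the projection; moreover, as $\sigma$ surely satisfies $\parobj$ while reaching $\target$ almost surely, every state it visits satisfies $\univ(\parobj)\wedge\threshobj{=}{1}{\event\target}$ and hence survives the pruning, so $\sigma$ is a valid witness in $\MDP_{\parobj}$. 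Having reduced to $\MDP_{\parobj}$, observe that all its almost-surely-$\target$-reaching strategies land on $H$, so $B \subseteq H$ and the parity-achievable set is the downward closure of the parity-achievable points of $B$. I would then branch: if no point of $B$ dominates $\vect x$ (equivalently, $\vect x$ is not reachability-achievable in $\MDP_{\parobj}$), answer \no, since a parity witness would in particular be a reachability witness; if $\vect x$ is dominated by some point of $\interior_{\generator(B)}(B)$ (in particular if $\vect x \in \interior_{\generator(B)}(B)$ itself), answer \yes\ by \Cref{lem:int-par} together with downward closure (this is the terminal ``interior'' situation where no further projection is possible); and otherwise $\vect x$ is dominated only by relative-boundary points of $B$, and I would recurse on $\MDP_{\parobj}$ with the same $\vect x$. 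Because $G$ is a proper face of $P$ we have $\dim B \leq \dim G < \dim P$, so the relevant dimension strictly drops and the recursion halts after at most $n$ effective rounds, which yields decidability.

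I expect the principal obstacle to be exactly this equivalence together with a clean, exhaustive treatment of the three branches: one must argue that restricting to the projected-and-pruned MDP loses no parity witness for a point of the exposed face, that the reachability-achievable set of $\MDP_{\parobj}$ really collapses onto $H$ so the domination test is well defined, and that the only two \emph{terminal} configurations are ``vertex'' and ``relative interior of $B$''. A secondary technical point is the memory bound: the vertex witness comes from \Cref{lem:vert} and, via the energy--parity-game analysis used in the proof of \Cref{thm:lex}, needs at most $2|\MDP||\parobj|$ memory, while the witness produced for interior points in \Cref{lem:int-par} is a convex combination of finite-memory approximation strategies whose sure-parity component obeys the very same bound; I would argue that this component dominates, so the overall construction stays within $2|\MDP||\parobj|$ memory.
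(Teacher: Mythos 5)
Your proposal is correct and is essentially the paper's own proof: it reconstructs Algorithm~\ref{proc:mthc}, iterating projection onto a normal of the minimal face of the frontier containing $\vect x$ followed by pruning of states refuting $\univ(\parobj) \wedge \threshobj{=}{1}{\event\target}$, terminating either in the vertex case via Lemma~\ref{lem:vert} or the relative-interior case via Lemma~\ref{lem:int-par}, with the strict dimension drop bounding the number of rounds by $n$. Your explicit domination-based branching and your converse-direction argument (a sure-parity witness for a point on the exposed face maximizes the dot product, hence survives projection and pruning by Lemma~\ref{lem:mdp_proj_c}) are the same correctness considerations the paper states contrapositively, and your memory-bound discussion mirrors the paper's appeal to the energy--parity analysis from Theorem~\ref{thm:lex}.
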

\begin{proof}
The proof of this lemma is similar to the proof of \Cref{thm:mth}. 
We show that \Cref{proc:mthc} answers \Cref{lem:mult-th}. The main differences with \Cref{proc:mth} is that we must find the vector on which we project, and at the end of the loop of \Cref{proc:mthc} we have to split between the cases where $\vect{x}$ is a vertex or not. 

During iteration $i$, we first find a vector orthogonal to the face of $P_i$ that $\vect{x}$ belongs to. To do so, we may need to fully compute the Pareto frontier of $\MDP_{i-1}$.
In line~\ref{proc-mthc-4} we project the current MDP $\MDP_{i-1}$ on vector $\vect{v}_i$, obtaining $\MDP_{i-1}^{\pi \vect{v}_i}$. 
By Lemma~\ref{lem:mdp_proj_c}, a strategy satisfies $\Pr_{= 1}(\event F)$ in $\MDP_{i-1}^{\pi \vect{e}_i}$ iff it maximizes $p$ such that $\Pr_{\geq p}(\event F_i)$ in $\MDP_{i-1}$.
We then prune all states that do not satisfy the conjunction with the parity condition $\univ(\parobj) \wedge \Pr_{= 1}(\event F)$. 

If $\vect{x}$ is an interior point, it follows by Lemma~\ref{lem:int-par} that we can find a strategy $\sigma$ such that $\satisfy{s}{\sigma}{\MDP_i}\univ(\parobj)\wedge\bigwedge_{i = 1}^n \threshobj{\geq}{x_i}{\event \target_i}$. Otherwise, we start the loop again.

Since every time we project, it is onto a face of the Pareto frontier of dimension smaller than the current Pareto frontier, we can only take the loop at most $n$ times. After this we go to line~\ref{proc-mthc-9}, and since $\vect{x}$ is now a vertex of the Pareto frontier, we can use Lemma~\ref{lem:vert} to find whether there exists a strategy such that $\satisfies{s}{\MDP_i}\univ(\parobj)\wedge\bigwedge_{i = 1}^n \threshobj{\geq}{x_i}{\event \target_i}$.

If the output is \texttt{false}, we remark that since projection in line~\ref{proc-mthc-4} keeps all states belonging to strategies such that $\Pr_{\geq p}(\event F_i)$ in $\MDP_{i-1}$ (by Lemma~\ref{lem:mdp_proj_c}), it keeps all states belonging to strategies such that $\bigwedge_{i = 1}^n \threshobj{\geq}{x_i}{\event \target_i}$. Step~\ref{proc-mthc-5} may prune states used in strategies such that $\bigwedge_{i = 1}^n \threshobj{\geq}{x_i}{\event \target_i}$, but then by definition these strategies did not satisfy $\univ(\parobj)$. Hence, since none of the pruned states belonged to a strategies such that $\satisfies{s}{\MDP_i}\univ(\parobj)\wedge\bigwedge_{i = 1}^n \threshobj{\geq}{x_i}{\event \target_i}$, the algorithm is correct.
    
We show that \Cref{proc:mthc} solves at most a polynomial number of parity game of size polynomial in $|\Gamma|$. 

Indeed, to obtain $|\MDP_{i-1}^{\pi \vect{v}_i}|$, we have to compute the Pareto frontier of $|\MDP_{i-1}|$. This new MDP only has one more state than $\Gamma$ and at most $|\Gamma|$ additional transitions (that may lead from states originally in $\Gamma$ to the new state $\projSinkState$). 
We then remove once all states that do not satisfy $\univ(\parobj) \wedge \Pr_{= 1}(\event F)$, which can be done by solving a polynomial number of parity games. 
Every time we do this step, we project onto a face of the Pareto frontier of dimension smaller than the current Pareto frontier and this can only happen at most $n$ times. 
Thus we end up solving a number polynomial in $n$ of parity game of size polynomial in $\Gamma$, and compute Pareto frontiers for MDPs with $n$ objectives at most $n$ times.

We output \texttt{true} iff we find a strategy $\sigma$ that is a solution of $\satisfy{s}{\sigma}{\MDP_i}\univ(\parobj)\wedge \Pr_{= 1}(\event F)$ iff it is a solution of $\satisfies{s}{\MDP}\univ(\parobj)\wedge\limwedgeone{i \in [n]} \Pr_{\geq x_i}(\event F_i)$, and we can find strategies satisfying the left hand formula that use $2|\MDP||\parobj|$ memory, as proved in Theorem~\ref{thm:mth}.
\end{proof}

\begin{algorithm}[t]
    \caption{Sure parity and $n$ non-strict reachability threshold objectives}
    \label{proc:mthc}
    \textbf{Input:} MDP $\MDP$ -- clean w.r.t.\ $\parobj$ and $\target_1,\ldots,\target_n \subseteq \states$, a state $s \in \states$, a vector $\vect x \in [0,1]^n$ \\
    \textbf{Output:} A strategy $\strat$ such that $\satisfy{s}{\strat}{\MDP} \univ(\parobj) \wedge \bigwedge_{i = 1}^n \threshobj{\geq}{x_i}{\event \target_i}$ if it exists, else \texttt{false}.
    \begin{algorithmic}[1]
        \State\label{proc-mthc-1}Set $\MDP_0 = \MDP$ and $i=1$.
        \While {$\vect{x}$ is not a vertex of $P_i$, the Pareto frontier of $\MDP_i$} 
        \State\label{proc-mthc-3}Get $\vect{v}_i$ a vector orthogonal to the smallest face of $P_i$ that $\vect{x}$ belongs to.
        \State\label{proc-mthc-4}Compute $\MDP_{i}^{\pi \vect{v}_i}$ from $s$.  \Comment{By Lemma~\ref{lem:mdp_proj_c}.}
        \State\label{proc-mthc-5}Set $\MDP_i$ by taking $\MDP_{i-1}^{\pi \vect{v}_i}$ and pruning all states that do not satisfy $\univ(\parobj) \wedge \prob_{\geq 1}(\event F)$.
        \State\label{proc-mthc-6}If $\vect{x}$ is an interior point of $P_i$ return $\sigma$ s.t.\ $\satisfy{s}{\sigma}{\MDP_i}\univ(\parobj)\wedge\bigwedge_{i = 1}^n \threshobj{\geq}{x_i}{\event \target_i}$  \Comment{By Lemma~\ref{lem:int-par}.}  
        \State\label{proc-mthc-7} $i := i+1$
        \EndWhile
        \State\label{proc-mthc-9}Check if there exists $\sigma$ such that $\satisfies{s}{\MDP_i}\univ(\parobj)\wedge\bigwedge_{i = 1}^n \threshobj{\geq}{x_i}{\event \target_i}$. \Comment{By Lemma~\ref{lem:vert}.}
        \State\label{proc-mthc-10}If such a $\sigma$ does not exist then return \texttt{false}, else return $\sigma$.
    \end{algorithmic}
\end{algorithm}

\begin{example}
\label{ex:multi}
For the MDP from \Cref{fig:mdp-pareto}, with initial state $s$, and where $F_1$ is the yellow target and $F_2$ is the blue target, we check if $\satisfies{s}{\MDP}\univ(\parobj)\wedge\Pr_{\geq \nicefrac 2 3}(\event F_1)\wedge\Pr_{\geq \nicefrac 2 3 }(\event F_2)$. Point $(\nicefrac 2 3,\nicefrac  2 3)$ is on the Pareto frontier but not a vertex of it, and following Algorithm~\ref{proc:mthc}, a vector orthogonal to it is $(1,1)$. After projection on $(1,1)$, we obtain the MDP in Figure~\ref{fig:mdp-pareto-gen} (left), with both full and dashed transitions. 
As in Example~\ref{ex:proj}, no strategy satisfies surely the parity condition in the uppermost state, we prune the dashed transitions, restricting the Pareto frontier to between $(\nicefrac 1 3,1)$ and $(\nicefrac  23,\nicefrac 2 3)$. Now $(\nicefrac  2 3,\nicefrac 2 3)$ is a vertex of the new Pareto frontier, \Cref{lem:vert} tells us to project on vector $(2,1)$, and so we prune transition $\textcolor{red}{a}$ from the lowermost state, only keeping transition $\textcolor{green!50!black}{b}$. Since we can satisfy the parity condition from the lowermost state, we obtain that the strategy playing $\textcolor{green!50!black}{b}$ twice satisfies $\univ(\parobj)\wedge\Pr_{\geq \nicefrac 2 3}(\event F_1)\wedge\Pr_{\geq \nicefrac 2 3}(\event F_2)$.
\end{example}

\subsection{Proof of \Cref{thm:mult-th}}

We can finally prove Theorem~\ref{thm:mult-th} as follows:
We check if $\vect{x}$ is on the border of the Pareto frontier.
If it is not, \Cref{thm:approx-int} applies; otherwise, we apply \Cref{lem:mult-th}. 


\begin{proof}[of \Cref{thm:mult-th}]
    We check if $\vect{x}$ is on the border of the Pareto frontier of $\limwedgeone{i \in [n]} \Pr_{\geq x_i}(\event F_i)$. If it is not, $\satisfies{s}{\MDP}\univ(\parobj)\wedge\limwedgeone{i \in [n]} \Pr_{\geq x_i}(\event F_i)$ holds by Theorem~\ref{thm:approx-int}. Otherwise, it is on the border and we use Lemma~\ref{lem:mult-th} to decide if the property holds. Both cases give a finite-memory strategy $\sigma$ such that $\satisfy{s}{\sigma}{\MDP}\univ(\parobj)\wedge\limwedgeone{i \in [n]} \Pr_{\geq x_i}(\event F_i)$.
    
    We show that Algorithm~\ref{proc:mthc} solves at most a polynomial number of parity game of size polynomial in $|\Gamma|$. 
    Indeed, to obtain $|\MDP_{i-1}^{\pi \vect{v}_i}|$, we have to compute the Pareto frontier of $|\MDP_{i-1}|$. This new MDP only has one more state than $\Gamma$ and at most $|\Gamma|$ additional transitions (that may lead from states originally in $\Gamma$ to the new state $\projSinkState$). 
    We then remove once all states that do not satisfy $\univ(\parobj) \wedge \Pr_{= 1}(\event F)$, which can be done by solving a polynomial number of parity games. 
    Every time we do this step, we project onto a face of the Pareto frontier of dimension smaller than the current Pareto frontier and this can only happen at most $n$ times. 
    Thus we end up solving a number polynomial in $n$ of parity game of size polynomial in $\Gamma$, and compute Pareto frontiers for MDPs with $n$ objectives at most $n$ times. Our $\EXPTIME$ complexity comes from the Pareto frontier computations. Indeed, Theorem 2.1 of~\cite{etessami2007multi} shows that such Pareto frontiers may have more than polynomially many vertices (more precisely, they may have $|\MDP|^{\log(|\MDP|)}$ many vertices). Thus, to our knowledge, the way to compute this Pareto frontier is to evaluate all possible memoryless strategies, which is exponential in the size of $\MDP$.

    We output \texttt{true} iff we find a strategy $\sigma$ such that $\satisfy{s}{\sigma}{\MDP_i}\univ(\parobj)\wedge \Pr_{= 1}(\event F)$ iff $\sigma$ is a solution of $\satisfies{s}{\MDP}\univ(\parobj)\wedge\limwedgeone{i \in [n]} \Pr_{\geq x_i}(\event F_i)$. 

    The worst case memory bound comes from when we need to make a call to the strict threshold case, which as shown in \Cref{thm:approx-int} uses at most $2^{\poly(|\MDP| + nD)}$ memory.
\end{proof}
\section{Conclusion and Future Work}
\label{sec:ccl}
Combining sure parity and $n$ reachability threshold objectives can be done via a reduction to parity games in the case of strict thresholds and when maximizing the threshold lexicographically, and in exponential time with non-strict thresholds.
Finite-memory strategies suffice in all cases.
One direction for future work is to implement our algorithms in the probabilistic model checker Storm~\cite{DBLP:conf/cav/DehnertJK017}.
Further open problems include the case where targets are not sinks,
and the study of one sure parity and $n$ parity threshold objectives.
However, the exact memory required for one sure and \emph{one} almost-sure parity is already unknown.
It seems worthwhile to investigate if 1-bit Markov strategies suffice, as they do in countable MDPs with parity objectives~\cite{DBLP:conf/concur/KieferMST20}.
In~\cite{almagor2016minimizing,icalp2017}, the solution of sure parity and almost-sure reachability in MDPs relies on a reduction to a game with a \emph{fair} opponent.
Results in~\cite{DBLP:conf/cav/CastroDDP22} concern \emph{stochastic games} with a fair opponent, and may thus help extending the results from~\cite{almagor2016minimizing,icalp2017} to stochastic games.
Another possible extension is to consider combinations of multiple objectives in \emph{partially observable} MDPs (POMDPs), as in~\cite{DBLP:conf/aaai/Chatterjee0PRZ17}.

\clearpage
\newpage
\bibliography{biblio}

\begin{thebibliography}{10}

\bibitem{almagor2016minimizing}
Shaull Almagor, Orna Kupferman, and Yaron Velner.
\newblock Minimizing expected cost under hard boolean constraints, with
  applications to quantitative synthesis.
\newblock In Jos{\'{e}}e Desharnais and Radha Jagadeesan, editors, {\em 27th
  International Conference on Concurrency Theory, {CONCUR} 2016, August 23-26,
  2016, Qu{\'{e}}bec City, Canada}, volume~59 of {\em LIPIcs}, pages 9:1--9:15.
  Schloss Dagstuhl - Leibniz-Zentrum fuer Informatik, 2016.
\newblock \href {https://doi.org/10.4230/LIPIcs.CONCUR.2016.9}
  {\path{doi:10.4230/LIPIcs.CONCUR.2016.9}}.

\bibitem{aminof2019probabilistic}
Benjamin Aminof, Marta Kwiatkowska, Bastien Maubert, Aniello Murano, and Sasha
  Rubin.
\newblock Probabilistic strategy logic.
\newblock In Sarit Kraus, editor, {\em Proceedings of the Twenty-Eighth
  International Joint Conference on Artificial Intelligence, {IJCAI} 2019,
  Macao, China, August 10-16, 2019}, pages 32--38. ijcai.org, 2019.
\newblock \href {https://doi.org/10.24963/ijcai.2019/5}
  {\path{doi:10.24963/ijcai.2019/5}}.

\bibitem{DBLP:conf/lics/AshokCKWW20}
Pranav Ashok, Krishnendu Chatterjee, Jan Kret{\'{\i}}nsk{\'{y}}, Maximilian
  Weininger, and Tobias Winkler.
\newblock Approximating values of generalized-reachability stochastic games.
\newblock In Holger Hermanns, Lijun Zhang, Naoki Kobayashi, and Dale Miller,
  editors, {\em {LICS} '20: 35th Annual {ACM/IEEE} Symposium on Logic in
  Computer Science, Saarbr{\"{u}}cken, Germany, July 8-11, 2020}, pages
  102--115. {ACM}, 2020.
\newblock \href {https://doi.org/10.1145/3373718.3394761}
  {\path{doi:10.1145/3373718.3394761}}.

\bibitem{DBLP:conf/csl/BaierDK14}
Christel Baier, Clemens Dubslaff, and Sascha Kl{\"{u}}ppelholz.
\newblock Trade-off analysis meets probabilistic model checking.
\newblock In Thomas~A. Henzinger and Dale Miller, editors, {\em Joint Meeting
  of the Twenty-Third {EACSL} Annual Conference on Computer Science Logic
  {(CSL)} and the Twenty-Ninth Annual {ACM/IEEE} Symposium on Logic in Computer
  Science (LICS), {CSL-LICS} '14}, pages 1:1--1:10. {ACM}, 2014.
\newblock \href {https://doi.org/10.1145/2603088.2603089}
  {\path{doi:10.1145/2603088.2603089}}.

\bibitem{baier2008principles}
Christel Baier and Joost-Pieter Katoen.
\newblock {\em Principles of Model Checking}.
\newblock MIT press, 2008.

\bibitem{bellman1957markovian}
Richard Bellman.
\newblock A markovian decision process.
\newblock {\em Journal of mathematics and mechanics}, pages 679--684, 1957.

\bibitem{DBLP:conf/lics/BerthonGR20}
Rapha{\"{e}}l Berthon, Shibashis Guha, and Jean{-}Fran{\c{c}}ois Raskin.
\newblock Mixing probabilistic and non-probabilistic objectives in {M}arkov
  decision processes.
\newblock In Holger Hermanns, Lijun Zhang, Naoki Kobayashi, and Dale Miller,
  editors, {\em {LICS} '20: 35th Annual {ACM/IEEE} Symposium on Logic in
  Computer Science, Saarbr{\"{u}}cken, Germany, July 8-11, 2020}, pages
  195--208. {ACM}, 2020.
\newblock \href {https://doi.org/10.1145/3373718.3394805}
  {\path{doi:10.1145/3373718.3394805}}.

\bibitem{icalp2017}
Rapha{\"{e}}l Berthon, Mickael Randour, and Jean{-}Fran{\c{c}}ois Raskin.
\newblock Threshold constraints with guarantees for parity objectives in
  {M}arkov decision processes.
\newblock In Ioannis Chatzigiannakis, Piotr Indyk, Fabian Kuhn, and Anca
  Muscholl, editors, {\em 44th International Colloquium on Automata, Languages,
  and Programming, {ICALP} 2017, July 10-14, 2017, Warsaw, Poland}, volume~80
  of {\em LIPIcs}, pages 121:1--121:15. Schloss Dagstuhl - Leibniz-Zentrum
  f{\"{u}}r Informatik, 2017.
\newblock \href {https://doi.org/10.4230/LIPIcs.ICALP.2017.121}
  {\path{doi:10.4230/LIPIcs.ICALP.2017.121}}.

\bibitem{bouyer2018multi}
Patricia Bouyer, Mauricio Gonz{\'{a}}lez, Nicolas Markey, and Mickael Randour.
\newblock Multi-weighted {M}arkov decision processes with reachability
  objectives.
\newblock In {\em GandALF}, volume 277 of {\em {EPTCS}}, pages 250--264, 2018.

\bibitem{brassard1979note}
Gilles Brassard.
\newblock A note on the complexity of cryptography (corresp.).
\newblock {\em IEEE Transactions on Information Theory}, 25(2):232--233, 1979.

\bibitem{DBLP:conf/tacas/BrazdilCFK15}
Tom{\'{a}}s Br{\'{a}}zdil, Krishnendu Chatterjee, Vojtech Forejt, and
  Anton{\'{\i}}n Kucera.
\newblock Multigain: {A} controller synthesis tool for {MDP}s with multiple
  mean-payoff objectives.
\newblock In Christel Baier and Cesare Tinelli, editors, {\em Tools and
  Algorithms for the Construction and Analysis of Systems - 21st International
  Conference, {TACAS} 2015. Proceedings}, volume 9035 of {\em Lecture Notes in
  Computer Science}, pages 181--187. Springer, 2015.
\newblock \href {https://doi.org/10.1007/978-3-662-46681-0\_12}
  {\path{doi:10.1007/978-3-662-46681-0\_12}}.

\bibitem{DBLP:journals/iandc/BruyereFRR17}
V{\'{e}}ronique Bruy{\`{e}}re, Emmanuel Filiot, Mickael Randour, and
  Jean{-}Fran{\c{c}}ois Raskin.
\newblock Meet your expectations with guarantees: Beyond worst-case synthesis
  in quantitative games.
\newblock {\em Information and Computation}, 254:259--295, 2017.
\newblock \href {https://doi.org/10.1016/j.ic.2016.10.011}
  {\path{doi:10.1016/j.ic.2016.10.011}}.

\bibitem{DBLP:conf/stoc/CaludeJKL017}
Cristian~S. Calude, Sanjay Jain, Bakhadyr Khoussainov, Wei Li, and Frank
  Stephan.
\newblock Deciding parity games in quasipolynomial time.
\newblock In {\em Proceedings of the 49th Annual {ACM} {SIGACT} Symposium on
  Theory of Computing, {STOC} 2017}, pages 252--263. {ACM}, 2017.
\newblock \href {https://doi.org/10.1145/3055399} {\path{doi:10.1145/3055399}}.

\bibitem{DBLP:conf/cav/CastroDDP22}
Pablo~F. Castro, Pedro~R. D'Argenio, Ramiro Demasi, and Luciano Putruele.
\newblock Playing against fair adversaries in stochastic games with total
  rewards.
\newblock In Sharon Shoham and Yakir Vizel, editors, {\em Computer Aided
  Verification - 34th International Conference, {CAV} 2022, Haifa, Israel,
  August 7-10, 2022, Proceedings, Part {II}}, volume 13372 of {\em Lecture
  Notes in Computer Science}, pages 48--69. Springer, 2022.
\newblock \href {https://doi.org/10.1007/978-3-031-13188-2\_3}
  {\path{doi:10.1007/978-3-031-13188-2\_3}}.

\bibitem{DBLP:journals/tcs/ChatterjeeD12}
Krishnendu Chatterjee and Laurent Doyen.
\newblock Energy parity games.
\newblock {\em Theor. Comput. Sci.}, 458:49--60, 2012.
\newblock \href {https://doi.org/10.1016/j.tcs.2012.07.038}
  {\path{doi:10.1016/j.tcs.2012.07.038}}.

\bibitem{chatterjee2010strategy}
Krishnendu Chatterjee, Thomas~A Henzinger, and Nir Piterman.
\newblock Strategy logic.
\newblock {\em Information and Computation}, 208(6):677--693, 2010.

\bibitem{chatterjee2023stochastic}
Krishnendu Chatterjee, Joost-Pieter Katoen, Stefanie Mohr, Maximilian
  Weininger, and Tobias Winkler.
\newblock Stochastic games with lexicographic objectives.
\newblock {\em Formal Methods in System Design}, pages 1--41, 2023.

\bibitem{DBLP:conf/cav/ChatterjeeKWW20}
Krishnendu Chatterjee, Joost{-}Pieter Katoen, Maximilian Weininger, and Tobias
  Winkler.
\newblock Stochastic games with lexicographic reachability-safety objectives.
\newblock In {\em {CAV} {(2)}}, volume 12225 of {\em Lecture Notes in Computer
  Science}, pages 398--420. Springer, 2020.

\bibitem{chatterjee2015unifying}
Krishnendu Chatterjee, Zuzana Kret{\'{\i}}nsk{\'{a}}, and Jan
  Kret{\'{\i}}nsk{\'{y}}.
\newblock Unifying two views on multiple mean-payoff objectives in {M}arkov
  decision processes.
\newblock {\em Log. Methods Comput. Sci.}, 13(2), 2017.
\newblock \href {https://doi.org/10.23638/LMCS-13(2:15)2017}
  {\path{doi:10.23638/LMCS-13(2:15)2017}}.

\bibitem{DBLP:conf/stacs/ChatterjeeMH06}
Krishnendu Chatterjee, Rupak Majumdar, and Thomas~A. Henzinger.
\newblock Markov decision processes with multiple objectives.
\newblock In {\em {STACS}}, volume 3884 of {\em Lecture Notes in Computer
  Science}, pages 325--336. Springer, 2006.

\bibitem{DBLP:conf/aaai/Chatterjee0PRZ17}
Krishnendu Chatterjee, Petr Novotn{\'{y}}, Guillermo~A. P{\'{e}}rez,
  Jean{-}Fran{\c{c}}ois Raskin, and Dorde Zikelic.
\newblock Optimizing expectation with guarantees in {POMDP}s.
\newblock In Satinder Singh and Shaul Markovitch, editors, {\em Proceedings of
  the Thirty-First {AAAI} Conference on Artificial Intelligence, February 4-9,
  2017, San Francisco, California, {USA}}, pages 3725--3732. {AAAI} Press,
  2017.
\newblock URL:
  \url{http://aaai.org/ocs/index.php/AAAI/AAAI17/paper/view/14354}.

\bibitem{DBLP:conf/concur/ChatterjeeP19}
Krishnendu Chatterjee and Nir Piterman.
\newblock Combinations of qualitative winning for stochastic parity games.
\newblock In {\em 30th International Conference on Concurrency Theory, {CONCUR}
  2019}, volume 140 of {\em LIPIcs}, pages 6:1--6:17. Schloss Dagstuhl -
  Leibniz-Zentrum f{\"{u}}r Informatik, 2019.
\newblock URL: \url{http://www.dagstuhl.de/dagpub/978-3-95977-121-4}.

\bibitem{clarke2018handbook}
Edmund~M. Clarke, Thomas~A. Henzinger, Helmut Veith, and Roderick Bloem,
  editors.
\newblock {\em Handbook of Model Checking}.
\newblock Springer, 2018.
\newblock \href {https://doi.org/10.1007/978-3-319-10575-8}
  {\path{doi:10.1007/978-3-319-10575-8}}.

\bibitem{clemente2015multidimensional}
Lorenzo Clemente and Jean{-}Fran{\c{c}}ois Raskin.
\newblock Multidimensional beyond worst-case and almost-sure problems for
  mean-payoff objectives.
\newblock In {\em 30th Annual {ACM/IEEE} Symposium on Logic in Computer
  Science, {LICS} 2015, Kyoto, Japan, July 6-10, 2015}, pages 257--268. {IEEE}
  Computer Society, 2015.
\newblock \href {https://doi.org/10.1109/LICS.2015.33}
  {\path{doi:10.1109/LICS.2015.33}}.

\bibitem{DBLP:conf/lics/DaviaudJL18}
Laure Daviaud, Marcin Jurdzinski, and Ranko Lazic.
\newblock A pseudo-quasi-polynomial algorithm for mean-payoff parity games.
\newblock In Anuj Dawar and Erich Gr{\"{a}}del, editors, {\em Proceedings of
  the 33rd Annual {ACM/IEEE} Symposium on Logic in Computer Science, {LICS}
  2018, Oxford, UK, July 09-12, 2018}, pages 325--334. {ACM}, 2018.
\newblock \href {https://doi.org/10.1145/3209108.3209162}
  {\path{doi:10.1145/3209108.3209162}}.

\bibitem{DBLP:journals/tcs/EmersonJS01}
E.~Allen Emerson, Charanjit~S. Jutla, and A.~Prasad Sistla.
\newblock On model checking for the {\(\mathrm{\mu}\)}-calculus and its
  fragments.
\newblock {\em Theor. Comput. Sci.}, 258(1-2):491--522, 2001.
\newblock \href {https://doi.org/10.1016/S0304-3975(00)00034-7}
  {\path{doi:10.1016/S0304-3975(00)00034-7}}.

\bibitem{etessami2007multi}
Kousha Etessami, Marta~Z. Kwiatkowska, Moshe~Y. Vardi, and Mihalis Yannakakis.
\newblock Multi-objective model checking of {M}arkov decision processes.
\newblock {\em Logical Methods in Computer Science}, 4(4), 2008.

\bibitem{DBLP:conf/atva/ForejtKP12}
Vojtech Forejt, Marta~Z. Kwiatkowska, and David Parker.
\newblock Pareto curves for probabilistic model checking.
\newblock In {\em {ATVA}}, volume 7561 of {\em Lecture Notes in Computer
  Science}, pages 317--332. Springer, 2012.

\bibitem{grunbaum1967convex}
Branko Gr{\"u}nbaum, Victor Klee, Micha~A Perles, and Geoffrey~Colin Shephard.
\newblock {\em Convex polytopes}, volume~16.
\newblock Springer, 1967.

\bibitem{DBLP:conf/tacas/HahnPSS0W20}
Ernst~Moritz Hahn, Mateo Perez, Sven Schewe, Fabio Somenzi, Ashutosh Trivedi,
  and Dominik Wojtczak.
\newblock Good-for-{MDP}s automata for probabilistic analysis and reinforcement
  learning.
\newblock In {\em {TACAS} {(1)}}, volume 12078 of {\em Lecture Notes in
  Computer Science}, pages 306--323. Springer, 2020.

\bibitem{hartmanns2020multi}
Arnd Hartmanns, Sebastian Junges, Joost-Pieter Katoen, and Tim Quatmann.
\newblock Multi-cost bounded tradeoff analysis in {MDP}.
\newblock {\em Journal of Automated Reasoning}, 64(7):1483--1522, 2020.

\bibitem{DBLP:conf/cav/DehnertJK017}
Christian Hensel, Sebastian Junges, Joost{-}Pieter Katoen, Tim Quatmann, and
  Matthias Volk.
\newblock The probabilistic model checker storm.
\newblock {\em Int. J. Softw. Tools Technol. Transf.}, 24(4):589--610, 2022.

\bibitem{DBLP:journals/ipl/Jurdzinski98}
Marcin Jurdzinski.
\newblock Deciding the winner in parity games is in {UP} $\cap$ co-{UP}.
\newblock {\em Inf. Process. Lett.}, 68(3):119--124, 1998.
\newblock \href {https://doi.org/10.1016/S0020-0190(98)00150-1}
  {\path{doi:10.1016/S0020-0190(98)00150-1}}.

\bibitem{DBLP:conf/concur/KieferMST20}
Stefan Kiefer, Richard Mayr, Mahsa Shirmohammadi, and Patrick Totzke.
\newblock Strategy complexity of parity objectives in countable {MDP}s.
\newblock In Igor Konnov and Laura Kov{\'{a}}cs, editors, {\em 31st
  International Conference on Concurrency Theory, {CONCUR} 2020, September 1-4,
  2020, Vienna, Austria (Virtual Conference)}, volume 171 of {\em LIPIcs},
  pages 39:1--39:17. Schloss Dagstuhl - Leibniz-Zentrum f{\"{u}}r Informatik,
  2020.
\newblock \href {https://doi.org/10.4230/LIPIcs.CONCUR.2020.39}
  {\path{doi:10.4230/LIPIcs.CONCUR.2020.39}}.

\bibitem{DBLP:conf/cav/KwiatkowskaNP11}
Marta~Z. Kwiatkowska, Gethin Norman, and David Parker.
\newblock {PRISM} 4.0: Verification of probabilistic real-time systems.
\newblock In {\em {CAV}}, volume 6806 of {\em LNCS}, pages 585--591. Springer,
  2011.

\bibitem{DBLP:conf/socs/MiuraWZ22}
Shuwa Miura, Kyle~Hollins Wray, and Shlomo Zilberstein.
\newblock Heuristic search for {SSP}s with lexicographic preferences over
  multiple costs.
\newblock In Luk{\'{a}}s Chrpa and Alessandro Saetti, editors, {\em Proceedings
  of the Fifteenth International Symposium on Combinatorial Search, {SOCS}
  2022, Vienna, Austria, July 21-23, 2022}, pages 127--135. {AAAI} Press, 2022.
\newblock URL: \url{https://ojs.aaai.org/index.php/SOCS/article/view/21760}.

\bibitem{DBLP:books/wi/Puterman94}
Martin~L. Puterman.
\newblock {\em Markov Decision Processes: Discrete Stochastic Dynamic
  Programming}.
\newblock Wiley Series in Probability and Statistics. Wiley, 1994.

\bibitem{percentile2017}
Mickael Randour, Jean{-}Fran{\c{c}}ois Raskin, and Ocan Sankur.
\newblock Percentile queries in multi-dimensional {M}arkov decision processes.
\newblock {\em Formal Methods in System Design}, 50(2):207--248, 2017.
\newblock \href {https://doi.org/10.1007/s10703-016-0262-7}
  {\path{doi:10.1007/s10703-016-0262-7}}.

\bibitem{DBLP:journals/jair/RoijersVWD13}
Diederik~M. Roijers, Peter Vamplew, Shimon Whiteson, and Richard Dazeley.
\newblock A survey of multi-objective sequential decision-making.
\newblock {\em J. Artif. Intell. Res.}, 48:67--113, 2013.
\newblock \href {https://doi.org/10.1613/jair.3987}
  {\path{doi:10.1613/jair.3987}}.

\bibitem{DBLP:journals/lmcs/ScheweWZ19}
Sven Schewe, Alexander Weinert, and Martin Zimmermann.
\newblock Parity games with weights.
\newblock {\em Log. Methods Comput. Sci.}, 15(3), 2019.
\newblock \href {https://doi.org/10.23638/LMCS-15(3:20)2019}
  {\path{doi:10.23638/LMCS-15(3:20)2019}}.

\bibitem{DBLP:conf/lics/WinklerK23}
Tobias Winkler and Joost{-}Pieter Katoen.
\newblock On certificates, expected runtimes, and termination in probabilistic
  pushdown automata.
\newblock In {\em {LICS}}, pages 1--13, 2023.

\bibitem{DBLP:journals/corr/abs-2109-08317}
Tobias Winkler and Maximilian Weininger.
\newblock Stochastic games with disjunctions of multiple objectives.
\newblock In {\em GandALF}, volume 346 of {\em {EPTCS}}, pages 83--100, 2021.

\bibitem{DBLP:conf/aaai/WrayZM15}
Kyle~Hollins Wray, Shlomo Zilberstein, and Abdel{-}Illah Mouaddib.
\newblock Multi-objective {MDP}s with conditional lexicographic reward
  preferences.
\newblock In Blai Bonet and Sven Koenig, editors, {\em Proceedings of the
  Twenty-Ninth {AAAI} Conference on Artificial Intelligence, January 25-30,
  2015, Austin, Texas, {USA}}, pages 3418--3424. {AAAI} Press, 2015.
\newblock URL:
  \url{http://www.aaai.org/ocs/index.php/AAAI/AAAI15/paper/view/9471}.

\end{thebibliography}

\end{document}